\documentclass{article}

\usepackage[main, final]{neurips_2025}

\usepackage[utf8]{inputenc} 
\usepackage[T1]{fontenc}    
\usepackage{hyperref}       
\usepackage{url}            
\usepackage{booktabs}       
\usepackage{amsfonts}       
\usepackage{nicefrac}       
\usepackage{microtype}      
\usepackage{xcolor}         

\newcommand{\AuthorBlock}{
    \begin{center}
    \renewcommand{\arraystretch}{1.2} 
    
    \begin{tabular}{cc} 
        
        \textbf{Anish Hebbar} & \textbf{Rong Ge} \\
        Duke University & Duke University \\
        \texttt{anishshripad.hebbar@duke.edu} & \texttt{rongge@cs.duke.edu} \\        
        \\[1em]
        \textbf{Amit Kumar} & \textbf{Debmalya Panigrahi} \\
        IIT Delhi& Duke University \\
        \texttt{amitk@cse.iitd.ac.in} & \texttt{debmalya@cs.duke.edu} \\
        
    \end{tabular}
    \end{center}
}

\title{Learning-Augmented Algorithms for $k$-median via Online Learning}

\author{}

\usepackage{mathtools}
\usepackage{xspace}

\usepackage[nameinlink]{cleveref}
\Crefname{algocf}{Algorithm}{Algorithms}
\crefname{algocfline}{line}{lines}
\Crefname{invariant}{Invariant}{Invariants}
\Crefname{claim}{Claim}{Claims}
\Crefname{subclaim}{Subclaim}{Subclaims}

\usepackage{amsthm,amssymb,amsfonts}

\usepackage{graphicx}
\graphicspath{ {./images/} }

\makeatletter
\newtheorem*{rep@theorem}{\rep@title}
\newcommand{\newreptheorem}[2]{%
\newenvironment{rep#1}[1]{%
 \def\rep@title{#2 \ref{##1}}%
 \begin{rep@theorem}}%
 {\end{rep@theorem}}}
\makeatother

\newreptheorem{theorem}{Theorem}
\newreptheorem{lemma}{Lemma}
\newreptheorem{definition}{Definition}

\usepackage[ruled,vlined,linesnumbered,algonl]{algorithm2e}
\SetEndCharOfAlgoLine{}
\usepackage{thmtools}
\makeatletter
\def\thmt@innercounters{equation,algocf}
\makeatother

\usepackage{subcaption}

\SetKwComment{Comment}{\footnotesize$\triangleright$\ }{}

\SetCommentSty{mycommfont}

\newtheorem{theorem}{Theorem}[section]
\newtheorem{lemma}[theorem]{Lemma}
\newtheorem{claim}[theorem]{Claim}
\newtheorem{fact}[theorem]{Fact}
\newtheorem{corollary}[theorem]{Corollary}

\theoremstyle{definition}
\newtheorem{definition}[theorem]{Definition}

\theoremstyle{remark}

\newcommand{\eps}{\varepsilon}

\newcommand{\opt}{{\textsf{opt}}\xspace}
\newcommand{\cost}{{\textsf{cost}}\xspace}

\newcommand{\argmin}{\operatorname{argmin}}

\renewcommand{\emptyset}{\varnothing}

\newcommand{\E}{\mathbb{E}}

\newcommand{\cI}{{\mathcal{I}}}

\newcommand{\cM}{\mathcal{M}}

\newcommand{\nf}{\nicefrac}

\newcommand{\junk}[1]{}
\newcommand{\eat}[1]{}

\newif\ifhideproofs

\ifhideproofs
\usepackage{environ}
\NewEnviron{hide}{}

\fi

\newcommand{\onlinekmed}{{\sc Learn-Median}\xspace}
\newcommand{\bddonlinekmed}{{\sc Learn-Bounded-Median}\xspace}

\newcommand{\OPT}{{\mathsf {OPT}}}
\newcommand{\vol}{{\mathsf {vol}}}
\newcommand{\Cost}{{\mathsf {Cost}}}

\newcommand{\dpalert}[1]{{\color{magenta}{DP: #1}}}
\newcounter{note}[section]

\newcommand{\barY}{\bar{Y}}

\DeclareMathOperator{\arcsinh}{arcsinh}

\DeclarePairedDelimiter\floor{\lfloor}{\rfloor}
\DeclarePairedDelimiter\inp{\langle}{\rangle}

\newcommand*{\field}[1]{\mathbb{#1}}

\begin{document}

\maketitle

\vspace{-60pt}
\AuthorBlock
\begin{abstract}
The field of learning-augmented algorithms seeks to use ML techniques on past instances of a problem to inform an algorithm designed for a future instance. In this paper, we introduce a novel model for learning-augmented algorithms inspired by online learning. In this model, we are given a sequence of instances of a problem and the goal of the learning-augmented algorithm is to use prior instances to propose a solution to a future instance of the problem. The performance of the algorithm is measured by its average performance across all the instances, where the performance on a single instance is the ratio between the cost of the algorithm's solution and that of an optimal solution for that instance. We apply this framework to the classic $k$-median clustering problem, and give an efficient learning algorithm that can approximately match the average performance of the best fixed $k$-median solution in hindsight across all the instances. We also experimentally evaluate our algorithm and show that its empirical performance is close to optimal, and also that it automatically adapts the solution to a dynamically changing sequence.
\end{abstract}

\section{Introduction}\label{sec:introduction}The field of learning-augmented algorithms has seen rapid growth in recent years, with the aim of harnessing the ever-improving capabilities of machine learning (ML) to solve problems in combinatorial optimization. Qualitatively, the goal in this area can be stated in simple terms: {\em apply ML techniques on prior problem instances to inform algorithmic choices for a future instance}. This broad objective has been interpreted in a variety of ways -- online algorithms with {\em better competitive ratios} using ML predictions, {\em faster (offline) algorithms} based on ML advice, {\em better approximation factors} for (offline) NP-hard problems using ML advice, etc. 
This breadth of interpretation has led to a wealth of interesting research at the intersection of algorithm design and machine learning. 

In this paper, we take inspiration from the field of online learning to propose a new model for learning-augmented algorithms. 
The stated goal of learning-augmented algorithms of using prior instances to inform the solution of a future instance bears striking similarity with online learning. In its simplest form, online learning comprises a sequence of instances in which the learner has to select from a set of actions to minimize unknown loss functions. The choice of the learner is based on the performance of these actions on prior instances defined by their respective loss functions. The goal is to optimize the performance of the learner in the long run -- it aims to match the average performance of the best fixed action in hindsight. Viewing learning-augmented algorithms through the lens of online learning, we ask: 

\begin{quote}
{\em Can a learning-augmented algorithm compete in average performance over a sequence of problem instances with the best fixed solution in hindsight?}
\end{quote}

To instantiate this question, we consider the classic $k$-median problem in clustering. In this problem, we are given a set of $n$ points in a metric  space, and the goal is to find $k$ {\em centers} such that the sum of distances of the points to their closest centers is minimized. We note that $k$-median is among the most well-studied problems in unsupervised learning. In the learning-augmented framework, consider an online sequence of $k$-median instances $V_1, V_2, \ldots, V_t, \ldots, V_T$ on some metric space $\cal M$. Now, suppose at time $t$, the learning-augmented algorithm $\cal A$ has access to the prior instances $V_1, V_2, \ldots, V_{t-1}$ and has to devise a solution comprising $k$ centers for the next instance $V_t$ (that is unknown to it). Once the algorithm $\cal A$ reveals its solution $Y_t$, the instance $V_t$ is revealed and the loss of the algorithm is the quality of the solution $Y_t$ for $V_t$, namely the ratio of the cost of $Y_t$ and that of an optimal solution for $V_t$. 

We define the loss to be the ratio instead of the absolute cost because the instances that arrive can have very different scales. In particular, if one instance is very large and the rest are small, it suffices to perform well on just the large instance when looking at absolute cost. Thus, taking cost ratios ensures that the algorithm is incentivized to do well on every instance, regardless of size. Secondly, it is impossible to obtain regret bounds independent of the number of points $n$ when considering absolute costs. To see this, suppose we get a very large instance with essentially all $n$ points that is far away from all the previous (much smaller) instances which are cumulatively on $o(n)$ points. Any online learning algorithm would then incur a cost of $n \Delta$ for this single instance, where $\Delta$ is the aspect ratio of the metric space.\footnote{The aspect ratio of a metric space is the ratio of the maximum to minimum pairwise distance between distinct points in the metric space.} On the other hand, the benchmark solution can simply optimize for this large instance and incur small cost for this instance. It suffers a large cost on the previous instances, but these instances being cumulatively small (only $o(n)$ points), the overall cost of the benchmark remains small in terms of absolute cost.

Similar to online learning, $\cal A$ cannot hope to match optimal performance for a single instance, but across time, can it match the performance of the best fixed solution in hindsight? 
%
%
In other words, we would like to learn a competitive solution from prior instances when such a solution exists. In this paper, we show that this is indeed possible -- for $k$-median clustering, we design an online learning algorithm that approximately matches the performance of the best fixed solution across any sequence of instances on a metric space.

A natural approach to the above goal would be to employ the tools of online learning directly by encoding each solution as an ``expert'' and running an experts' learning algorithm (see e.g.,~\citep{Hazan16}) on the problem instances. This does not work for two reasons, and these also reveal the main conceptual challenges in this problem. The first challenge is {\em computational}. The space of solutions to a discrete optimization problem such as $k$-median clustering is {\em exponentially large} and {\em non-convex}. 
Therefore, we cannot run experts' algorithms that maintain explicit weights over the entire set of solutions, neither can we use a black-box online convex optimization (OCO) algorithm (see e.g.,~\citep{Hazan16}). The second challenge is {\em information-theoretic}. Since the sequence of instances is revealed online, the learning algorithm does not know future instances when it proposes a solution. Therefore, the metric space itself, and the corresponding space of solutions, is {\em dynamically growing} over time. Indeed, the best solution in hindsight may not even be an available option to the algorithm at some intermediate time $t$. The main technical contribution of this paper is in overcoming this set of fundamental challenges to obtain a learning-augmented algorithm for $k$-median that matches the average performance of the best solution in hindsight.

Our algorithms have potential applications in clustering evolving data streams, which has been identified as an important open challenge in data stream mining research ~\citep{krempl2014open}. Some examples include aggregating similar news stories for recommendations downstream ~\citep{gong2017clustering} and maintaining accurate clustering of sensor data that fluctuates due to environmental conditions ~\citep{wang2024mostream}. Another application lies in clustering network traffic where patterns of network user connections change slowly over time in normal circumstances, but can also undergo sudden shifts if a malicious attack occurs ~\citep{gong2017clustering,wang2024mostream,feng2006density}.

\subsection{Our Contributions}

\textbf{Problem Model and Definition.} 
Our first contribution is the new learning-augmented model that we propose in this paper. We define it in the context of the $k$-median problem, and call it \onlinekmed.
The problem setting comprises an online sequence of instances of $k$-median defined on a metric space $\cM = (V, d)$ with a set of $n$ points $V$ and distance function $d: V\times V \rightarrow \mathbb{R}_{\geq 0}$. The algorithm does not assume knowledge of the metric space $(V,d)$, its size $n$, or the length of the time horizon $T$ upfront. The instance at time $t$ comprises a set of points $V_t\subseteq V$. Using knowledge of prior instances $V_0, V_1, \ldots, V_{t-1}$, the learning-augmented algorithm has to produce a solution $Y_t$ comprising $k$ points in $V_{< t}:= V_0 \cup V_1 \cup \ldots \cup V_{t-1}$.\footnote{The set $V_0$ initializes the problem. The algorithm does not generate a solution for $V_0$.} This solution is then applied to the instance $V_t$ and incurs cost
\[
    \Cost_t(Y_t) := \sum_{x\in V_t} D(Y_t,x), \text{ where } D(Y_t, x):= \min_{y \in Y_t} d(x, y).
\]
Correspondingly, the (approximation) loss of the algorithm is given by 
\[
    \rho(Y_t, V_t) = \frac{\Cost_t(Y_t)}{\OPT_t}, \text{ where } \OPT_t := \min_{Y \subseteq V, |Y| = k} \sum_{x\in V_t} D(Y,x).
\]

%



Since the ratio defining $\rho(Y_t, \cI_t)$  is invariant under scaling, we may assume that $\min_{x, y\in V; x\ne y} d(x, y) = 1$, and denote the {\em aspect ratio} $\max_{x, y\in V} d(x, y)$ by $\Delta$. Also, to avoid degenerate cases where $\OPT_t = 0$, causing the ratio to be infinity, we assume each instance must have at least $k+1$ distinct points.

The goal is to match the average performance over time of the best fixed solution $Y^*$ (in hindsight), where
$Y^* := \argmin_{Y \subseteq V, |Y|=k}   \sum_{t=1}^T \rho(Y, V_t)$.

\textbf{Our Results.}
We say that a learning-augmented algorithm for \onlinekmed obtains an $(\alpha, \beta)$-approximation if
\[
    \sum_{t=1}^T \rho(Y_t, V_t) \leq \alpha \cdot \sum_{t=1}^T \rho(Y^*, V_t) +  \beta.
\]
We call $\alpha$ the competitive ratio and $\beta$ the regret of the algorithm. The goal is to obtain sublinear regret, i.e., $\beta = o(T)$, while minimizing the competitive ratio $\alpha$.

\medskip

Our first result is a \underline{{\em deterministic} algorithm} with $\alpha = O(k), \beta = o(T)$. Precisely, we get
\begin{align}
\label{eq:detmainresult}
    \sum_{t=1}^T \rho(Y_t, V_t) \leq O(k) \cdot \sum_{t=1}^T \rho(Y^*, V_t) +  O\left(k^4\Delta \cdot \sqrt{T}  \log(T)\log(Tk)  \right).
\end{align}
We also give a matching lower bound showing that the competitive ratio of $O(k)$ is necessary, i.e., there is \underline{no deterministic algorithm} with competitive ratio $\alpha = o(k)$ and sub-linear regret $\beta = o(T)$.

\medskip

Next, we use randomization to improve the competitive ratio to $O(1)$. We give a \underline{{\em randomized} algorithm} with $\alpha = O(1), \beta = o(T)$:
\begin{align}
\label{eq:randmainresult}
    \sum_{t=1}^T \E[\rho(Y_t, V_t)] \le O(1) \cdot \sum_{t=1}^T \rho(Y^*, V_t) +  O\left(k^3\Delta \cdot \sqrt{T} \log(T)\log(Tk)  \right).
\end{align}
As in the deterministic case, we show that the $O(1)$ factor in the competitive ratio is necessary, i.e., there is \underline{no algorithm (randomized or deterministic)} with competitive ratio $\alpha = 1+\eps$ for arbitrarily small $\eps > 0$ and sub-linear regret $\beta = o(T)$.

\medskip

Finally, note that the regret bounds in both results depend on $k, \Delta$. We show that this dependence is necessary, i.e., there is \underline{no algorithm (randomized or deterministic)} with competitive ratio $\alpha = O(1)$ and regret $\beta = o(k\Delta)$.

\smallskip
\textbf{Our Techniques.} 
We give a sketch of the main techniques in our algorithms. Both the deterministic and randomized algorithm share all the steps, except the last one. 
\begin{itemize}
    \item[-] In the first step, we convert the \onlinekmed instance to a bounded version of this problem where each instance comprises exactly $k$ points in the metric space. We call this latter problem \bddonlinekmed. The conversion is done online at every time $t$. 
    \item[-] Next, we relax the $k$-median problem by allowing fractional solutions, i.e., the $k$ centers are allowed to be distributed fractionally across the points of the metric space. Our main technical work is in producing a fractional solution to the $k$-median instances of \bddonlinekmed. Note that the feasible fractional solution define a (convex) simplex, but the number of dimensions of the simplex grows over time as more points appear in the $k$-median instances. The main algorithmic ideas that we develop for this problem are outlined in the paragraph below.
    \item[-] Our final step is in rounding the fractional solutions to integer $k$-median solutions. Here, we give two different algorithms, one deterministic and the other randomized, both of which adapt ideas from LP-based rounding of $k$-median.
\end{itemize}    

 We now outline the main ideas in obtaining a fractional solution to the $k$-median instances. The main challenge is that the convex space of solutions available to the algorithm expands over time, yet the algorithm must compete with a solution that is only guaranteed to lie in the final convex set. This means that in applying standard online convex optimization (OCO) tools such as online mirror descent (OMD), we only have partial knowledge of the gradient at any intermediate step corresponding to the dimensions/points that have appeared previously. Unfortunately, this limitation makes standard regularizers ineffective: the squared $L_2$-norm regularizer fails to achieve sub-linear regret since the dimension of the gradient depends on the time horizon, while the negative entropy regularizer is unable to handle new dimensions since the gradient is undefined at zero. Instead, we give a non-standard changing regularizer parametrized by the number of unique points seen so far. Correspondingly, we depart from standard OCO analysis that compares against a fixed solution in hindsight. Instead, we divide the timeline into phases and compare the performance of our algorithm against a dynamic set of locally optimal solutions. The main advantage is that within a phase, we only compare our algorithm against a solution that exists in the current convex set. The phases are chosen carefully so that we can use metric properties to show that these locally optimal solutions have total cost comparable to the hindsight optimum. Morally, we ensure that if the hindsight optimum is far from the locally optimal solutions, then it performs poorly on the instances in these phases, and therefore, cannot be the best solution in hindsight across all the instances. Putting all these components together in an online learning algorithm is the main technical contribution of this paper.


\smallskip
\textbf{Experimental Evaluation.}
We perform experiments to validate our theoretical results empirically. 
We first run our algorithms on synthetic data generated from simple distributions such as a uniform distribution on a square or on a set of distinct clusters. In these cases, we confirm that the algorithms quickly converge to the optimal solution, and the average performance is nearly optimal in long run, thereby significantly outperforming theoretical bounds. In our second set of experiments, we use a sequence of dynamically changing distributions with the goal of understanding the responsiveness of our algorithm to changing data. We observe that as the algorithm discovers new data sets corresponding to new regions of the metric space, it reacts quickly to the changing environment and changes the location of $k$ centers, in some cases even outperforms the static best solution in hindsight.

\subsection{Related Work}

Our work fits in the burgeoning field of learning-augmented algorithms, where the goal is to leverage ML predictions to improve the performance of algorithms for optimization problems. This paradigm was initially proposed for the online caching problem by \citep{lykouris2021competitive}, and has since been applied to many problem domains (see \citep{mitzenmacherpredictions} for a survey). 
 In particular, various clustering problems such as $k$-median and $k$-means clustering~\citep{ergun2022learning, nguyen2023improved,gamlath2022approximate} and facility location ~\citep{jiang2022online, fotakis2025improved} have been studied in this framework, both in the offline and online settings. Our work sharply deviates from these existing lines of research in that we consider a model comprising a sequence of problem instances inspired by online learning, while prior work focused on solving a single instance using ML predictions. A notable exception is \cite{KhodakBTV22}, where the authors consider an online learning based model to learn {\em predictions} from a sequence of problem instances, and apply it to a variety of matching, scheduling, and rent-or-buy problems. In some sense, our model is a more direct counterpart of this idea, in that we apply the online learning paradigm to the performance of the algorithm directly instead of using a surrogate learned parameter.

From a technical standpoint, our work is related to the field of combinatorial online learning, where the goal is to develop low-regret algorithms for problems with exponentially large discrete action spaces. Traditional online learning tools like multiplicative weights update (MWU) are typically inapplicable due to computational barriers stemming from the sheer size of the solution space. To address this, several works have adapted online convex optimization and gradient-based techniques for combinatorial problems. Applications include online facility location \citep{christou2023efficient,pasteris2021online}, online $k$-means clustering \citep{cohen2021online}, online routing \citep{awerbuch2008online}, online learning of disjunctions \citep{helmbold2002direct}, coalition formation \citep{cohen2024online}, online submodular minimization \citep{hazan2012online},  online matrix prediction \citep{hazan2012near}, and online ranking \citep{fotakis2020efficient}. One distinction in our model is that we consider the average approximation ratio across the instances, while these works consider average cost. More importantly, the novelty in our work lies in designing a new online learning algorithm and analysis for the fractional problem, while most papers focus on reducing the combinatorial problem to an OCO formulation and apply standard OCO tools to the resulting problem.


Our work also bears some similarity with prior work on facility location problems in evolving metric spaces~\citep{eisenstat2014facility,an2017dynamic}. But, in technical terms, this line of research is quite different in that it focuses on metric distances changing over time, while our focus is on new locations being added. Another related line of work is that of universal algorithms for optimization problems. These algorithms aim to compute a single solution that performs well across all possible realizations of the input, thus providing robust approximation guarantees.
Examples include universal algorithms for Steiner tree and TSP \citep{jia2005universal, bhalgat2011optimal, gupta2006oblivious}, set cover \citep{grandoni2008set,jia2005universal}, clustering \citep{ganesh2023universal}. Unlike our work which is in the online setting, universal algorithms are offline and thereby use a very different set of tools.

Finally, our work is also related to data-driven algorithms (see~\citep{Balcan_2021} for a survey), where the typical goal is to pick the best algorithm out of a palette of algorithms whose behavior is dictated by a numerical parameter. In this line of work, the typical assumption is that the instances are drawn from an unknown distribution, and the goal is to obtain sample complexity bounds for a given target average performance in comparison to the best parameter choice. 
In contrast, in our setting, the algorithm is allowed to output any valid solution in each round 
and the final results are about average performance across an adversarial set of instances.
~\citep{balcan2018dispersion} considers the parameterized framework but in an online learning setting, where 
they pick a parameter for each instance that arrives online.
The benchmark in this case is the algorithm corresponding to the best fixed parameter in hindsight. Their techniques heavily utilize the parametrization of the algorithms, because of which they
are not applicable to our setting. 

\smallskip
\textbf{Outline of the paper.}
In \Cref{sec:reduction-small}, we give the reduction from \onlinekmed to \bddonlinekmed, for which we then give a fractional algorithm in \Cref{sec:fractional}, and integer algorithms via online rounding in \Cref{sec:rounding-small}. Details of these technical sections appear in the appendix. We present our experiments and empirical results in \Cref{sec:experiments}; again, details are in the appendix. Finally, the lower bounds that complement our algorithmic results are also given in the appendix. 


\section{Reduction from \onlinekmed to \bddonlinekmed}\label{sec:reduction-small}One difficulty in designing an algorithm for \onlinekmed is that the $k$-median instances $V_t$ can have arbitrary size, which would mean that the metric space on which we design the algorithm expands in an arbitrary manner in each online step.

To alleviate this problem, in this section, we give a reduction from \onlinekmed to a simpler problem we call \bddonlinekmed, which has at most $k$ weighted points per instance. 
In particular, this reduction eliminates the dependence on the size of instances $n$ in our regret bounds. Otherwise, We will see in the next section that the subgradient that we compute at each step of the fractional algorithm can have a large $L_\infty$ norm that depends on $n$, which would then affect the regret term. On the other hand, if we apply our mirror descent algorithm (given in the next section) to only $k$ points in each round, then the total number of points under consideration is at most $kT$. Thus, the regret term is independent of $n$.  A second benefit of the reduction is that it reduces the running time of the learning algorithm, and makes it independent of the size of the instances $n$. (The reduction uses a $k$-median algorithm that runs on instances of size $n$, but this is unavoidable in any case.) We typically expect that $n$ will be much larger than all other parameters in the problem. Using the reduction, we only need to run our learning algorithm on at most $kT$ points, which is independent of $n$.  

It is important to choose exactly $k$ points in the reduction because if we use fewer or more than $k$ points, then the weights for the reduced instance are inconsistent with the $k$-median optimum. For example, if we use $< k$ points and there are $k$ closely knit clusters, then the resulting cost of the solution (which determines weights) is arbitrarily large in comparison. Similarly, if we use $> k$ points and there are $k+1$ closely knit clusters, then the resulting cost of the solution is arbitrarily small.

\smallskip
\textbf{Bounded Instances of \onlinekmed.}
%
%
Similar to \onlinekmed, a \bddonlinekmed instance is specified by a metric space $\cM = (V, d)$  and we are also given a sequence of $k$-median instances. At time $t$, instance $R_t$ is specified by a set of at most $k$ weighted points. Each point $x\in R_t$ has an associated weight $w^t_x$ (for simplicity, we often omit the superscript $t$ when it is clear from context). The instances further satisfy $\sum_{x \in R_t} w_x \leq (k+1)$. All other definitions are the same as before, except that the cost is now defined with respect to the weights:
$$ \Cost_t(Y, R_t) = \sum_{x \in R_t} w_x D(Y,x). $$
We often just use $\Cost_t(Y)$ when the instance is clear from context.


\smallskip
\textbf{Reduction to \bddonlinekmed.}
%
We show the following reduction:
\begin{theorem}
\label{thm:reduction}
    Given an instance for the \onlinekmed problem, there exists an algorithm $\mathcal{A}$ that maps each sub-instance $V_t$  to a sub-instance $R_t = \mathcal{A}(V_t)$, resulting in a \bddonlinekmed instance. If solutions $Y_1,...,Y_T$ for the new instance of \bddonlinekmed satisfy
    $$   \sum_{t=1}^T  \Cost_t(Y_t, R_t) \leq \alpha \cdot \min_{\widehat{Y}\subseteq R, |\widehat{Y}| = k}\sum_{t=1}^T \Cost_t(\widehat{Y},R_t) + \gamma, $$
     where $\alpha \ge 1$ and $R = R_0 \cup R_1 \ldots \cup R_T$, then we have 
\[
    \sum_{t=1}^T \rho(Y_t, V_t) \le O(\alpha) \cdot  \min_{Y^*\subseteq V, |Y^*| = k} \sum_{t=1}^T \rho(Y^*, V_t) + O(\gamma).
\]
\end{theorem}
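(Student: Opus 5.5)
The plan is to use an offline constant-factor $k$-median approximation for each instance. At each time $t$ we run an $O(1)$-approximate offline $k$-median algorithm on $V_t$, obtaining centers $C_t=\{c_1,\dots,c_k\}\subseteq V_t$ with cost $\widetilde{\OPT}_t:=\Cost_t(C_t)\le c\cdot\OPT_t$. Each point $x\in V_t$ is assigned to its nearest center $\pi_t(x)\in C_t$; let $n_i$ be the number of points assigned to $c_i$. We then set $R_t:=C_t$ with weights $w_{c_i}:=n_i/\widetilde{\OPT}_t$. Since every instance has at least $k+1$ distinct points and $C_t\subseteq V_t$, at least $n-k$ points lie at distance $\ge 1$ from $C_t$. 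Hence $\widetilde{\OPT}_t\ge n-k$, where $n=|V_t|$, and so $\sum_i w_{c_i}=n/\widetilde{\OPT}_t\le n/(n-k)$. This is at most $k+1$ because $n\ge k+1$, so $R_t$ is a valid bounded instance. The scaling by $1/\widetilde{\OPT}_t$ is what turns absolute costs in the bounded problem into ratios.

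\textbf{Step 1 (the two cost transfers).} For any $Y$ and any $x\in V_t$, the triangle inequality gives $|D(Y,x)-D(Y,\pi_t(x))|\le d(x,\pi_t(x))$. Summing over $x$ yields
\[
\bigl|\Cost_t(Y,V_t)-\widetilde{\OPT}_t\cdot \Cost_t(Y,R_t)\bigr|\le \widetilde{\OPT}_t.
\]
Dividing by $\OPT_t\in[\widetilde{\OPT}_t/c,\widetilde{\OPT}_t]$ gives two bounds:
\[
\rho(Y,V_t)\le c\,(\Cost_t(Y,R_t)+1),\qquad \Cost_t(Y,R_t)\le \rho(Y,V_t)+1.
\]
We also have $\rho(Y,V_t)\ge 1$ for every $Y$. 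So the additive $1$'s can be absorbed multiplicatively: $\Cost_t(Y,R_t)\le 2\rho(Y,V_t)$, and summed over $t$ the term $cT$ is at most $c\sum_t\rho(Y^*,V_t)$.

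\textbf{Step 2 (restricting the comparator to $R$).} The hypothesis compares against the best $\widehat Y\subseteq R$, but $Y^*$ may lie anywhere in $V$. I would fix $Y^*$ and map each $y\in Y^*$ to its nearest point $\widehat y\in R$. The claim to prove is $\sum_t\Cost_t(\widehat Y,R_t)\le O(1)\sum_t \Cost_t(Y^*,R_t)$, after which Step 1 converts this to $\rho$. Take $x\in R_t$ served by $y$ in $Y^*$. Then
\[
D(\widehat Y,x)\le d(x,y)+d(y,\widehat y)\le 2d(x,y),
\]
because $x\in R$ and $\widehat y$ is the closest point of $R$ to $y$. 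So $\Cost_t(\widehat Y,R_t)\le 2\Cost_t(Y^*,R_t)$ pointwise in $t$. If $\widehat Y$ has fewer than $k$ distinct points, we pad it arbitrarily from $R$, which can only lower cost.

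\textbf{Step 3 (chaining).} Combining the pieces:
\[
\sum_t\rho(Y_t,V_t)\le c\sum_t\Cost_t(Y_t,R_t)+cT\le c\alpha\cdot 2\sum_t\Cost_t(Y^*,R_t)+c\gamma+cT\le 4c\alpha\sum_t\rho(Y^*,V_t)+c\gamma+cT,
\]
and $cT\le c\sum_t\rho(Y^*,V_t)$, giving $O(\alpha)$ and $O(\gamma)$. The solutions $Y_t$ may come from $R_{<t}\subseteq V$, which is fine for $\rho$.

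I expect the main obstacle to be the careful handling of additive terms: justifying $\rho\ge1$ so that they are absorbed, and checking the weight bound $\sum w\le k+1$ using the unit minimum distance. The projection step is a clean triangle-inequality argument.
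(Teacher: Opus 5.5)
Your proposal is correct and follows the paper's proof essentially step for step. It uses the same weighted-centers reduction and weight bound, the same two triangle-inequality transfers $\rho(Y,V_t)\le O(1)\,(\Cost_t(Y,R_t)+1)$ and $\Cost_t(Y,R_t)\le\rho(Y,V_t)+1$, and the same absorption of the $O(T)$ terms via $\rho\ge 1$. The differences are minor: you restrict the comparator to $R$ by a nearest-point projection instead of the paper's averaging argument (Claim~\ref{pointsoutside}), both losing a factor of $2$, and that same projection argument is what justifies $\widetilde{\OPT}_t\le c\cdot\OPT_t$, which you assert without proof even though $C_t$ is forced into $V_t$ while $\OPT_t$ ranges over all of $V$.
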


We now describe the reduction -- the detailed analysis is deferred to the appendix. 
The main idea is to replace each set $V_t$ by $k$ {\em centers} approximating the optimal $k$-median cost of $V_t$ -- these weighted centers act as proxies for the points in $V_t$. More formally, 
consider an offline instance of the $k$-median problem comprising the points in $V_t$.
We use a constant factor approximation algorithm $\cal A$ for 
$\min_{C \subseteq V_{t}, |C| = k} \sum_{x \in V_t}D(C,x)$ to obtain a set of $k$ centers $C_t = \{c_1, \ldots, c_k\}$.
Let $V_t^{(i)}$ denote the points in $V_t$ that are assigned to $c_i$ in the approximate solution. 
In the weighted instance $R_t$, we choose $R_t = C_t$, and the weight of a point $c_i \in R_t$ is set to
$\frac{|V^{(i)}_t|}{\sum_{x \in V_t} D(C_t,x)}.$ This completes the reduction to an instance of \bddonlinekmed.

\section{Fractional Algorithm via Online Mirror Descent with Hyperbolic Entropy Regularizer}\label{sec:fractional}To design an online algorithm for \bddonlinekmed, perhaps the simplest idea is {\em follow the leader} (FTL), where the algorithm outputs the best fixed solution for the prior instances.  However, finding the exact FTL solution is computationally intractable; moreover, we show (in the appendix) that using an approximate FTL solution fails to give a competitive algorithm even for $k=1$.

A common approach to address intractability is to use a convex relaxation. Indeed, \citep{fotakis2021efficient} introduced a method that could solve \bddonlinekmed instances if the entire metric space $(V, d)$ were known upfront. 
Unfortunately, not knowing the metric space changes the problem significantly. For example, if the metric space is known, then it is possible to achieve {\em pure (sub-linear) regret} using a fractional solution, i.e., a competitive ratio of $1$. In contrast, we give a simple example in the appendix that rules this out even for $k=1$ in our setting.


Our algorithm maintains a fractional solution for points in $R_{<t}$, updated by online mirror descent (OMD) using a hyperbolic entropy regularizer (\Cref{def:hyperbolicentropy}). New points introduced have fractional value initialized to 0. To analyze the performance of the algorithm, we construct a sequence of solutions that only uses points in $R_{<t}$ and can only change $k\log T$ times. This allows us to show that the fractional solution is constant competitive with sublinear regret. We give further details of these steps in the rest of this section.


\smallskip
\textbf{Fractional solutions for \bddonlinekmed.}  At time $t$, only points in $R_{<t}$ are revealed. Let $d_{t-1} = |R_{<t}|$ be the number of points available at time $t$, and $\Delta_{d_t}^{k}$ denote $\{z \in \field{R}
_{\geq 0}^{d_t} : ||z||_1 = k\}$. A fractional solution $y_t$ is in the set $K_{t-1}$ which is the intersection  \footnote{In fact, we take $K_{t-1} = \Delta_{d_{t-1}}^k$ in the experiments for simplicity as the analysis is the same.}
 of $\Delta_{d_{t-1}}^k$ and $[0,1]^{d_{t-1}}$.
 
 Given a vector $y_t \in K_{t-1}$, for each point $x$ the fractional assignment cost is defined by assigning the point $x$ fractionally to the centers in $y_t$: 
 $D(y_t,x) := \min_{0\le \alpha_i\le (y_t)_i, \sum_i\alpha_i = 1} \sum_{i \in [d_{t-1}]} \alpha_i d(v_i,x)$.
%
%
 The cost for the instance is still the weighted sum as before: 
$\Cost_t(y_t) :=  \sum_{x \in R_t} w_x D(y_t,x)$.

\smallskip
\textbf{Our algorithm.}
At each time $t$, a weighted subset of points $R_t$ arrives. At time $t=0$, we initialize $y_0$ (which is a $k$-dimensional vector) to any $k$ arbitrary points in $R_0$.  At step $t$, we maintain a fractional solution $y_t$ on the points $R_{<t}$. To go from $y_t$ to $y_{t+1}$, we perform one step of mirror descent based on the $\beta$-hyperbolic entropy regularizer defined below:

\begin{definition}[$\beta$-hyperbolic entropy~\citep{ghai2020exponentiated}]\label{def:hyperbolicentropy}
For any $x \in \field{R}^d$ and $\beta > 0$, define the $\beta$-hyperbolic entropy of $x$, denoted $\phi_\beta(x)$ as:   $$\phi_{\beta}(x) := \sum_{i=1}^d x_i  \arcsinh \left(\frac{x_i}{\beta}\right) - \sqrt{x_i^2 + \beta^2} .$$
The associated Bregman divergence is given by $B_{\phi_{\beta}}(x||y) = \phi_\beta(x) - \phi_\beta(y) - \inp{\nabla \phi_\beta(y) , x-y} = $ 
$$\sum_{i=1}^d \left[ x_i \left ( \arcsinh \left(\frac{x_i}{\beta}\right) - \arcsinh\left(\frac{y_i}{\beta}\right)  \right)- \sqrt{x_i^2 + \beta^2} + \sqrt{y_i^2 + \beta^2} \right].  $$
\end{definition}
Note that we considered other regularizers such as $1/2||x||_2^2$ (which recovers online gradient descent) and $\sum_i x_i \log x_i$ (which recovers exponentiated gradient descent). The problem with the former is that the regret term in our analysis is not sublinear in $T$, while the latter is ill-defined for our setting as the gradient is not defined at $0$, which we need for the new points. Before taking one mirror descent step, for points in $R_t$ but not in $R_{<t}$, we set their fractional values to $0$. This gives the fractional solution $y_{t+1}$ for the next iteration. The detailed algorithm is given in Algorithm $1$.

\begin{restatable}{algorithm}{algonlinekmed}
  \caption{Algorithm for an instance of \bddonlinekmed}

  Initialize $y_0$ to any $k$ points in $R_0$, i.e., $y_0 \in \field{R}^{k}$ with $(y_0)_i = 1$ \, $\forall i \in [k]$. \\
  \For{$t=1,2, \ldots, T$}
  {
  Let the points in $R_{\leq {t-1}}$ be $v_1, \ldots, v_{d_{t-1}}.$ \\
  \For{ each $x \in R_t$}{ let $\alpha^x_i$ be the fractional assignment of $x$ to $v_i$, i.e., $D(y_{t},x) = \sum_i \alpha^x_i d(v_i,x)$. \\
  Define $M^{(x)} := \max_{i: \alpha^x_i > 0} d(v_i,x)$. \\
  }
  {\bf Sub-gradient Step:} Define $\nabla_t \in \field{R}^{d_t} $ as follows: for each $i \in [d_t]$, 
 $$(\nabla_t)_i  := - \sum_{x \in R_t} w_x(M^{(x)} - \min(M^{(x)} , d(x,v_i)).$$ \\
 {\bf Learning Rate:} Set  
$$\eta_t := \frac{1}{G_t \sqrt{t}}$$
   where $G_t = \max_{t' \leq t} ||\nabla_{t'}||_{\infty}$. \\
{\bf Update Step:} Define a vector $x_{t+1} \in \field{R}^{d_t}$ as follows: for each $i \in [d_t]$, 
$$(x_{t+1})_i := \frac{  \sinh( \arcsinh(d_{t} (y_t)_i - \eta_t (\nabla_t)_i )}{d_{t}},$$  
where we use $(y_t)_i = 0$ in the above equation for each $i \in [d_t] \setminus [d_{t-1}].$ \\
{\bf Projection Step:} Define
$$y_{t+1} := \argmin_{y \in K_t} B_{\phi_{1/d_t}}(y||x_{t+1}) $$
}
\end{restatable}

\smallskip
\textbf{Analysis.} To prove Algorithm $1$ is constant competitive, we compare its cost to a changing sequence of optimal solutions. To construct such a sequence of optimal solutions, we first partition the sequence of $k$-median instances into phases. Let $y^* \in K_T$ be an integral solution defined by a set $C$ of $k$ centers.  Let the centers in $C$ be $c^{(1)}, \ldots, c^{(k)}$. The set $C$ partitions $V$ into $k$ subsets -- let $V^{(i)}$ be the subset of $V$ for which the closest center is $c^{(i)}$ (we break ties arbitrarily). 

Considering each center in this optimal solution separately, we split the solution into different phases at times $j_i$, which are the smallest time $t$ such that the total weight of the points in $V^{(i)} \cap R_{\leq t}$ (i.e., points in $V^{(i)}$ arriving by time $t$)
exceeds $k \cdot 2^j$. Define the set $P:= \{j_i : i \in [k], j \in [\log(w(V^{(i)}))]\} \subseteq [1,T]$ be the set of times when a new phase starts. It's easy to check that $|P| \le k\log T$ as the total weight of arriving points is at most $k+1$ in each step.

For each phase, we bound the cost of the algorithm using standard tools from the OCO literature:

\begin{lemma}
    \label{lem:costinterval}
    For any time interval $I := [t_a, t_b] \subseteq [1,T]$, let $z$ be an integral vector in $K_{t_a}$. Then, 
    $$\sum_{t \in I} (\Cost_t(y_t)-\Cost_t(z)) \leq 3(k+1)k \Delta \sqrt{T} \log(7Tk)  
    $$
\end{lemma}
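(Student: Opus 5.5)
Over the interval $I$, we treat the algorithm as online mirror descent on a sequence of convex losses whose feasible sets are nested and growing, with $K_{t_a}\subseteq K_{t_a+1}\subseteq\cdots$ once vectors are padded with zeros. The comparator $z$ already lies in $K_{t_a}$, hence in every $K_t$ for $t\in I$, so it is feasible throughout the interval. The proof has four steps.

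\textbf{Step 1: the update uses a valid subgradient.} Each $\Cost_t(\cdot)$ is convex in $y$: $D(y,x)$ is the value of a min-cost fractional assignment LP, which is convex in its right-hand side $y$. We then check that $\nabla_t$ is a subgradient of $\Cost_t$ at $y_t$, via the standard LP-duality argument of \citep{fotakis2021efficient}. Using the dual of the assignment LP, one shows
\[
D(y,x)\ge M^{(x)}-\sum_i y_i\,\bigl(M^{(x)}-\min(M^{(x)},d(x,v_i))\bigr),
\]
with equality at $y_t$. This gives
\[
\Cost_t(y_t)-\Cost_t(z)\le\langle\nabla_t,y_t-z\rangle.
\]
So it suffices to bound the linearized regret $\sum_{t\in I}\langle\nabla_t,y_t-z\rangle$. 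Along the way we record the bound
\[
\|\nabla_t\|_\infty\le \sum_{x\in R_t} w_x\Delta\le(k+1)\Delta,
\]
which gives $G_t\le(k+1)\Delta$.

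\textbf{Step 2: per-step OMD inequality.} The update step is exactly the unconstrained mirror step
\[
\nabla\phi_\beta(x_{t+1})=\nabla\phi_\beta(y_t)-\eta_t\nabla_t,
\]
with $\beta=1/d_t$, since $\nabla\phi_\beta(y)_i=\arcsinh(y_i/\beta)$. The step is followed by a Bregman projection onto $K_t$. Applying the three-point identity together with the generalized Pythagorean inequality gives
\[
\eta_t\langle\nabla_t,y_t-z\rangle\le B_t(z\|y_t)-B_t(z\|y_{t+1})+B_t(y_t\|x_{t+1}),
\]
where $B_t:=B_{\phi_{1/d_t}}$. New coordinates are $0$ in $y_t$, so $y_t$ is viewed in $\mathbb R^{d_t}$ unchanged. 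The last term is controlled using the local strong convexity of $\phi_\beta$: the Hessian is $1/\sqrt{x_i^2+\beta^2}$, and since $y_t,x_{t+1}$ have $\ell_1$-ish mass about $k$, $\phi_\beta$ is $1/(k+d\beta)$-strongly convex w.r.t. $\|\cdot\|_1$ on that region (the bound from \citep{ghai2020exponentiated}). This gives
\[
B_t(y_t\|x_{t+1})\le O(\eta_t^2 k\,\|\nabla_t\|_\infty^2).
\]

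\textbf{Step 3: telescoping.} We divide by $\eta_t$ and sum over $t\in I$. This step is the main obstacle, because both $\eta_t$ and the regularizer parameter $\beta_t=1/d_t$ change over time. Summation by parts reduces the problem to two things. First, a uniform bound on the diameter, $B_t(z\|y)\le O(k\log(kd_t))\le O(k\log(7Tk))$, valid for all $y,z\in K_t$. This follows because $\arcsinh(x/\beta)\le\log(2x/\beta+1)$, and the $\sqrt{\cdot}$ terms contribute $O(d\beta)=O(1)$. Second, control of the increment $B_{t+1}(z\|y_{t+1})-B_t(z\|y_{t+1})$ caused by the shrinking $\beta$. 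For this, we show $\phi_\beta$ is monotone in $\beta$ in the right direction up to an additive $O(1/d)$ per coordinate, hence up to $O(1)$ total. Alternatively we absorb the increment into the $\max_t B_t/\eta_T$ diameter term using a ``changing regularizer'' lemma.

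\textbf{Step 4: assembling.} The two pieces combine to
\[
\sum_{t\in I}\langle\nabla_t,y_t-z\rangle\le \frac{O(k\log(7Tk))}{\eta_{t_b}}+\sum_{t\in I}O\bigl(\eta_t k\|\nabla_t\|_\infty^2\bigr).
\]
Substituting $\eta_t=1/(G_t\sqrt t)$ and $G_t\le(k+1)\Delta$, and using $\sum_t 1/\sqrt t\le2\sqrt T$, the first term is at most $O(k(k+1)\Delta\sqrt T\log(7Tk))$. The second term is at most $O(k(k+1)\Delta\sqrt T)$. Tracking constants then yields the stated bound
\[
3(k+1)k\Delta\sqrt T\log(7Tk).
\]
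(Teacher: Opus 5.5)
Your outline is the paper's: subgradient inequality, three-point identity plus Pythagoras, diameter and stability bounds, then substitute $\eta_t$. Your Step~1 via LP duality is actually cleaner than the paper's case analysis, since it gives $\Cost_t(y)\ge\Cost_t(y_t)+\langle\nabla_t,y-y_t\rangle$ for every $y\in K_t$, not only integral $y$.

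The genuine gap is in Step~3, exactly where you locate the main obstacle. The paper's own telescoping writes the middle terms as $B_{\phi_{1/d_t}}(z\|y_t)(1/\eta_t-1/\eta_{t-1})$, which tacitly identifies $B_{\phi_{1/d_{t-1}}}(z\|y_t)$ with $B_{\phi_{1/d_t}}(z\|y_t)$. So you were right to flag this step, but your proposed fix does not work.

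What must be compared is the Bregman divergence, not $\phi_\beta$ itself. Writing $a=z_i$ and $b=(y_{t+1})_i$, Cauchy--Schwarz gives
\[
\frac{\partial}{\partial\beta}\Bigl[a\bigl(\arcsinh(a/\beta)-\arcsinh(b/\beta)\bigr)-\sqrt{a^2+\beta^2}+\sqrt{b^2+\beta^2}\Bigr]=\frac{ab+\beta^2-\sqrt{(a^2+\beta^2)(b^2+\beta^2)}}{\beta\sqrt{b^2+\beta^2}}\le 0 .
\]
So $B_{\phi_\beta}(z\|y)$ \emph{increases} as $\beta=1/d_t$ shrinks, which is the wrong direction. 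The increase is also not $O(1/d)$ per coordinate. For $a=1$, $b=0$ the term equals $\arcsinh(d)-\sqrt{1+d^{-2}}+d^{-1}$, which grows by about $\log(d_{t+1}/d_t)$. Even an $O(1)$ bound per step would be useless, because each increment is multiplied by $1/\eta_t=G_t\sqrt t$, and summing over $I$ gives order $G_T T^{3/2}$. The ``changing regularizer lemma'' you mention is not specified, and it would have to control precisely this increment.

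What closes the gap is an increment bound that telescopes. Coordinates with $z_i=(y_{t+1})_i=0$ contribute nothing. Each of the at most $d_t$ coordinates with $z_i=0$ changes by at most $1/d_t-1/d_{t+1}$, since there $|\partial_\beta|\le 1$; in total this is at most $1-d_t/d_{t+1}\le\log(d_{t+1}/d_t)$. Each of the $k$ coordinates with $z_i=1$ changes by at most $\int_{1/d_{t+1}}^{1/d_t}\sqrt{1+\beta^2}\,\beta^{-1}\,d\beta\le\log(d_{t+1}/d_t)+1/d_t-1/d_{t+1}$. Hence
\[
B_{\phi_{1/d_{t+1}}}(z\|y_{t+1})-B_{\phi_{1/d_t}}(z\|y_{t+1})\le (k+1)\log\frac{d_{t+1}}{d_t}+k\Bigl(\frac{1}{d_t}-\frac{1}{d_{t+1}}\Bigr).
\]
Summed over $I$ this is at most $(k+1)\log(7Tk)+1$. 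Since $1/\eta_t$ is nondecreasing, the extra contribution is at most this quantity times $1/\eta_{t_b}\le(k+1)\Delta\sqrt T$, the same order as the diameter term. This yields the $O(k^2\Delta\sqrt T\log(Tk))$ bound that Theorem~\ref{thm:fractional} actually uses. However, the absolute constant is larger than $3$, so ``tracking constants'' does not deliver the stated expression.

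A smaller point concerns Step~2. Strong convexity only lower-bounds divergences, so it cannot by itself upper-bound $B_t(y_t\|x_{t+1})$. You need the identity $B_t(y_t\|x_{t+1})+B_t(x_{t+1}\|y_t)=\eta_t\langle\nabla_t,y_t-x_{t+1}\rangle$ together with strong convexity along the segment $[y_t,x_{t+1}]$, which requires $\|x_{t+1}\|_1=O(k)$. This does hold: $\eta_t\|\nabla_t\|_\infty\le 1$ gives $(x_{t+1})_i\le e\sqrt{(y_t)_i^2+d_t^{-2}}$. The paper sidesteps this by keeping the $-B_t(y_{t+1}\|x_{t+1})$ term and applying strong convexity only between $y_t$ and $y_{t+1}$, both of which lie in the simplex.
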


This lemma shows that up to additive factors, in each phase we can be competitive against any solution. Therefore it remains to show that there exists a series of solutions that only use available points whose cost is competitive against the offline optimal solution $y^*$. 

\begin{lemma}
    \label{lem:benchmark}
    
    For each $p \in P$, let the start and end time of phase $p$ be denoted $s_p$ and $e_p$ respectively. Then, there exist solutions $z^{(p)}$ such that
    $$\sum_{p \in P} \sum_{t=s_p}^{e_p} \Cost_t(z^{(p)}) =  O \left( k^2 \Delta + \sum_{t =1}^T \Cost_t(y^*) \right),$$
    where $y^*$ is an arbitrary integral vector in $K_T$. 
\end{lemma}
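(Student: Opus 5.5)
Fix the integral benchmark $y^*$ with centers $C=\{c^{(1)},\dots,c^{(k)}\}$ and clusters $V^{(i)}$. The plan is to build each $z^{(p)}$ center by center. For every $i\in[k]$ and every phase $p$, pick a surrogate center $z^{(p)}_i\in R_{<s_p}$ for $c^{(i)}$, and let $z^{(p)}=\{z^{(p)}_1,\dots,z^{(p)}_k\}$. This is integral and lies in $K_{s_p}$ (up to the off-by-one convention), so \Cref{lem:costinterval} applies to it. We bound the cost by assigning each point $x\in V^{(i)}\cap R_t$ to $z^{(p)}_i$, which gives $D(z^{(p)},x)\le d(x,c^{(i)})+d(c^{(i)},z^{(p)}_i)$. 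Summing over $t$, the first term gives exactly $\sum_t\Cost_t(y^*)$. It therefore suffices to bound, for each $i$ separately,
\[
\sum_{p}\; d\bigl(c^{(i)},z^{(p)}_i\bigr)\cdot w\bigl(V^{(i)}\cap R_{[s_p,e_p]}\bigr)
\;\le\; O\Bigl(k\Delta+\sum_t\sum_{x\in V^{(i)}\cap R_t} w_x\, d(x,c^{(i)})\Bigr).
\]
Summing over the $k$ centers then yields the claimed $O(k^2\Delta+\sum_t\Cost_t(y^*))$.

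For a fixed $i$, the phase boundaries relevant to $c^{(i)}$ are the doubling times $j_i$, at which the weight $W_i(t)=w(V^{(i)}\cap R_{\le t})$ first exceeds $k2^j$. In the first stretch, before $W_i$ exceeds $k$ (roughly $2k+1$, since each step adds at most $k+1$), choose $z^{(p)}_i$ arbitrarily. The cost there is at most $\Delta$ times the weight, i.e.\ $O(k\Delta)$ per center. After that, in a $c^{(i)}$-phase $[j_i,(j+1)_i)$, choose $z^{(p)}_i$ to be the point of $V^{(i)}\cap R_{<j_i}$ closest to $c^{(i)}$. A finer phase $p$ created by another center's boundary inherits this choice, which only improves as $p$ moves later. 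With weight at least $k2^j$ already present at distances at least $d_j:=d(c^{(i)},z_i)$ from $c^{(i)}$, the benchmark $y^*$ has already paid at least $k2^j d_j$ on those earlier points.

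The weight arriving in the whole $c^{(i)}$-phase is at most $k2^{j}+(k+1)$, which is $O(k2^j)$, so the extra cost charged in this phase is $O(k2^j d_j)$. This is bounded by the cost of $y^*$ on the earlier points of $V^{(i)}$. The main obstacle is double-counting: the same early points would be charged in every later phase. The geometric growth fixes this. Charge phase $j$ only to the points that arrived during the previous phase $[(j-1)_i,j_i)$; they carry weight at least $k2^{j-1}-O(k)$, which is $\Omega(k2^{j})$ once $2^{j}$ is a large constant. Each of them lies at distance at least $d_j$ from $c^{(i)}$, because $d_j$ is a minimum over a superset. So the charge is at most a constant times $y^*$'s cost on the previous phase's points, and every point is charged only $O(1)$ times. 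The small-$j$ cases, where the subtracted $O(k)$ matters, are absorbed into the $O(k\Delta)$ term. Summing over phases and over $i$ gives the lemma.
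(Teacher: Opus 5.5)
Your argument is correct, but it takes a different route from the paper's.

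The paper sets $z^{(p)}_i=x^{(i)}_{j-1}$, the best center \emph{within} the previous bucket $V^{(i)}_{j-1}$. It then bounds the cost of the current bucket against this center by $42\,\opt_1(V^{(i)}_{j-1}\cup V^{(i)}_j)$, using the 1-median averaging facts (\Cref{optappxrandom}, \Cref{ub}, \Cref{buckets}). Only at the end does it replace $\opt_1$ by the cost to $c^{(i)}$, counting each bucket twice.

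You instead take the already-seen point of $V^{(i)}$ closest to $c^{(i)}$ and split $d(x,z_i)\le d(x,c^{(i)})+d(c^{(i)},z_i)$. The key observation is that $d_j$ lower-bounds the distance to $c^{(i)}$ of every point in the preceding $c^{(i)}$-phase. That phase has weight exceeding $k2^{j-1}-(k+1)$, while the current phase carries at most $k2^j+k+1$.

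What your route buys: it needs only the triangle inequality and the doubling structure, and it gives much smaller constants. Your explicit handling of the per-step slack $k+1$, with small $j$ absorbed into $O(k\Delta)$, is also more careful than the paper's \Cref{buckets}. That lemma uses $w(V^{(i)}_{j-1})\ge k\cdot 2^{j-1}$, which holds for the cumulative weight but not for a single bucket; the slip only costs constants.

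What the paper's route buys: its intermediate quantity $\sum_j\opt_1(V^{(i)}_{j-1}\cup V^{(i)}_j)$ can be much smaller than the cost to a fixed $c^{(i)}$, for example for a slowly drifting cluster. Its surrogate is also determined by the previous bucket's points alone, given the partition. Your charging is tied to the location of $c^{(i)}$, which is all the lemma needs.
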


\eat{
The main idea of proving this lemma is to use the key property of $k$-median problem: as we show in the appendix, even if we restrict our attention to points that are available, the cost of the solutions will not be more than a constant factor worse. The problem we face is slightly more challenging because to maintain the same solution across a phase, we must only use points that have arrived before this phase. We show this again won't lose more than a constant factor due to the doubling property of the phases. See detailed analysis in the appendix.
}



\section{Integral Solutions via Online Rounding}\label{sec:rounding-small}In this section, we give online rounding algorithms for the \bddonlinekmed problem. Let $y_t$ be the fractional solution computed by Algorithm $1$ before the arrival of the $t$th instance $R_t$. Our rounding algorithms will take the fractional solution $y_t$ as input and return an integer solution $Y_t$ with exactly $k$ centers. The {\em rounding loss} is defined as the ratio of the (expected) cost of the integer solution $Y_t$ to that of the fractional solution $y_t$.

We give two rounding algorithms. The first algorithm is deterministic and has a rounding loss of $O(k)$. The second algorithm is randomized and has only $O(1)$ rounding loss. These are adaptations of existing rounding algorithms for $k$-median \citep{charikar1999constant,charikar2012dependent}; the main difference is that we need to bound the rounding loss for points we haven't seen as well. 

\smallskip
\textbf{Deterministic Rounding Algorithm.} 
The deterministic rounding algorithm is based on a natural greedy strategy: at time $t$, maintain a set $Y_t$ that is initially empty. Consider the points in $R_{<t} = R_0 \cup \ldots R_{t-1}$ in non-decreasing order of their connection costs $D(y_t,i)$. When considering a point $i$, add it to $Y_t$ if its connection cost to $Y_t$ exceeds its fractional connection cost in $y_t$ by an $\Omega(k)$ factor. This process ensures that $|Y_t| \leq k$ -- indeed, the intuition is that each time we add  a point to $Y_t$, we can draw a ball around it which has almost $1$ unit of fractional mass (according to $y_t$). The criteria for adding a point to $Y_t$ ensures that the total connection cost to $Y_t$ remains within $O(k)$ factor of that to $y_t$. Thus, we get a rounding algorithm that rounds $y_t$ to an integral solution $Y_t$ and loses an $O(k)$ factor in approximation -- we give the details in the appendix. Combining this fact with~\Cref{thm:reduction} and~\Cref{lem:benchmark}, we get the desired result~\eqref{eq:detmainresult}. 

\smallskip
\textbf{Randomized Rounding Algorithm.} The randomized rounding algorithm proceeds in two phases: the first phase is similar to the deterministic rounding algorithm described above. However, instead of keeping a threshold of $\Omega(k)$ for adding a center to $Y_t$, it uses a constant threshold. This results in $|Y_t|$ being larger than $k$. Subsequently a randomized strategy is used to prune this set to size $k$. We show in the appendix that this algorithm has constant approximation gap.  Combining this fact with~\Cref{thm:reduction} and~\Cref{lem:benchmark}, we get the desired result~\eqref{eq:randmainresult}.

In conclusion, we obtain that the time complexity of our overall algorithm (besides the reduction step) is $O(k^2 T^3)$, as the bottleneck is computing the $O(kT \times kT)$ distance matrix in each round.

\section{Experimental Results}\label{sec:experiments}In this section, we empirically evaluate the performance of our algorithms. 
%
In all our experiments, we use a heuristic to improve the performance of the rounding algorithms, where we do a binary search over the best threshold on the distance of centers to open (this value was $(2k+2)D(y_t,i)$ for deterministic rounding and $4D(y_t,i)$ for randomized rounding). We set this threshold to the smallest possible that still gives the desired number of centers $k$. We report the main experimental findings in this section, and give additional results and experiments in the appendix.

Our first two experiments are for simple i.i.d. instances. 
In {\bf Uniform Square}, each round has a set of points sampled uniformly from the unit square $[0,1]\times [0,1]$. In {\bf Multiple Clusters}, the underlying metric space consists of a set of clusters, and the points are sampled uniformly from these clusters. In both cases, the distances are given by the underlying Euclidean metric. As seen in Fig.~\ref{fig:scatter_ratio_comparison}, both the deterministic and randomized algorithms converge to expected solutions -- for {\bf Uniform Square}, the centers are distributed (roughly) uniformly over the square and for {\bf Multiple Clusters}, the centers spread themselves on the different clusters. In the adjoining plots, we show the approximation ratio of the algorithms, which approaches $1$ once the influence of additive regret declines.
\begin{figure}[ht]
    \begin{subfigure}{0.5\textwidth}
    \centering
         \includegraphics[width=0.6\linewidth]{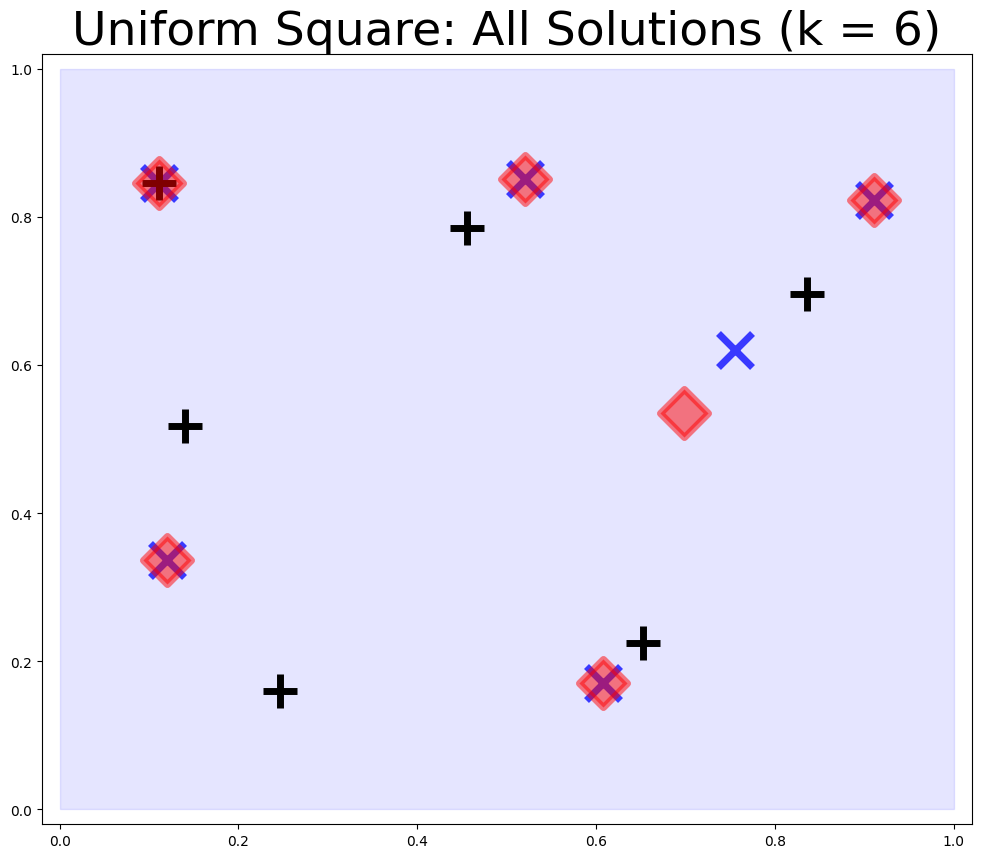}
    \end{subfigure}
    \hfill
    \begin{subfigure}{0.5\textwidth}
    \centering
         \includegraphics[width=0.6\linewidth]{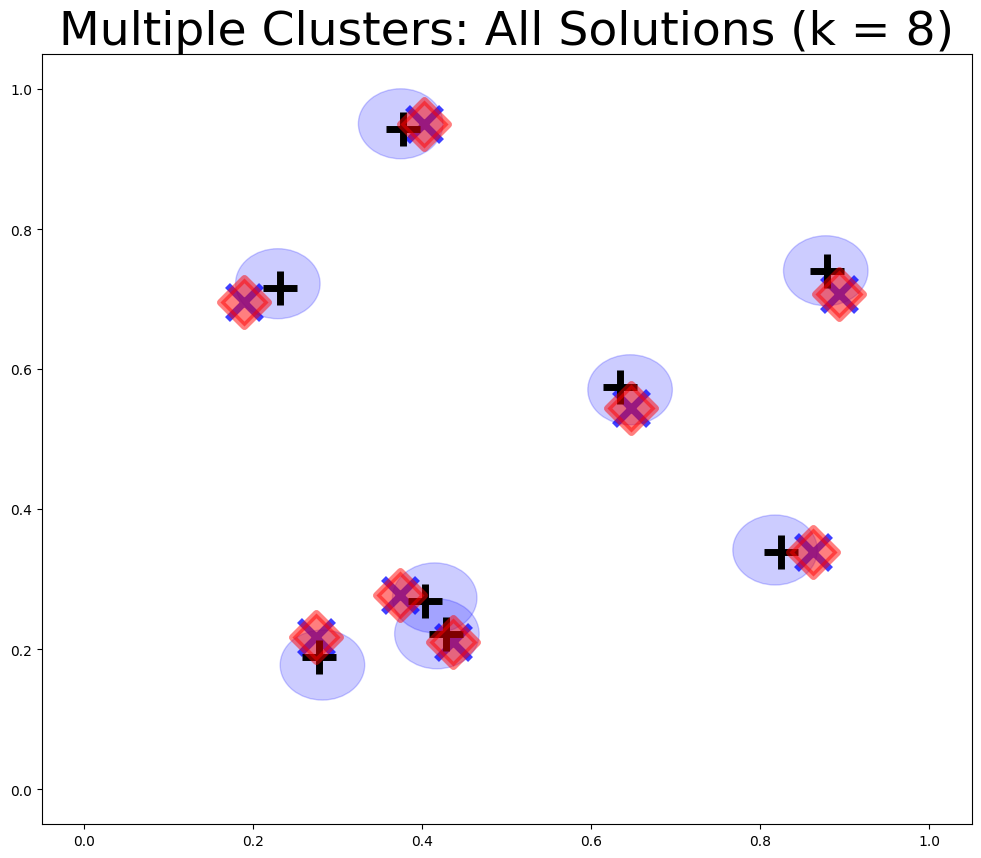}
    \end{subfigure}

    \begin{subfigure}{0.5\textwidth}
        \centering
        \includegraphics[width=0.8\linewidth]{uniform_ratio_comparison_k6.png}
    \end{subfigure}
    \hfill
    \begin{subfigure}{0.5\textwidth}
        \centering
        \includegraphics[width=0.8\linewidth]{varying_ratio_comparison_k8.png}
    \end{subfigure}
    \caption{For i.i.d. instances {\bf (Uniform Square} and  {\bf Multiple Clusters}), the optimal (black plus), deterministic (blue cross), and randomized (red diamond) solutions (top figure) and approximation ratios - avg. and std. dev. over 10 random instances (bottom figure).}
    \label{fig:scatter_ratio_comparison}
\end{figure}

In the third experiment {\bf Oscillating Instances}, we have two distinct clusters, and instances comprise alternating batches of geometrically increasing length that draws points from one of the clusters. The optimal solution alternates between the two clusters, switching every time the current batch becomes long enough to dominate all previous instances. 
In Fig.~\ref{fig:oscillating_main1}, we show two important properties of our algorithm in this setup. First, for $k=1$, we plot the ratio over time $t$ between $\sum_{\tau=1}^t \rho (Y_\tau,V_\tau)$ and $\sum_{\tau=1}^t \rho (\OPT_t,V_\tau)$, where $\OPT_t$ is the best fixed solution for instances $V_1,...,V_t$. Interestingly, our algorithm outperforms the optimal solution by virtue of being adaptive while the optimal solution is fixed. Next, in Fig. ~\ref{fig:oscillating_main2} we trace the migration of fractional mass between the two clusters for $k = 2$. We observe the algorithm quickly moves one unit of mass to the current cluster, and then slows down since moving more fractional mass does not significantly reduce cost.

\begin{figure}
  \begin{subfigure}{0.45\textwidth}
        \centering
  \includegraphics[width=\linewidth]{weightshift_ratio1_0.png}
      \caption{Approximation ratio for {\bf Oscillating Instances}}
          \label{fig:oscillating_main1}
  \end{subfigure}
  \hfill
  \begin{subfigure}{0.45\textwidth}
        \centering
  \includegraphics[width=\linewidth]{weightshift_frac2_0.png}
      \caption{Fractional mass of algorithm and OPT over time in one of the clusters for {\bf Oscillating Instances}}
       \label{fig:oscillating_main2}
  \end{subfigure}
  
 \begin{subfigure}{0.45\textwidth}
    \centering
    \includegraphics[width=\linewidth]{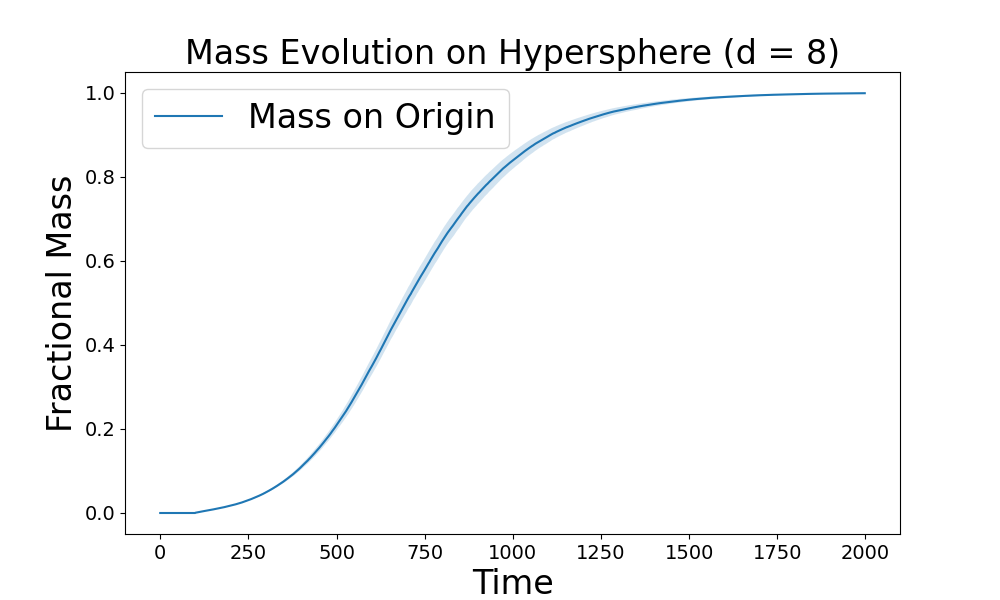}
    \caption{Fractional mass of algorithm over time at the center for {\bf Uniform Hypersphere} (avg. and std. dev. over 10 runs)}
    \label{fig:hypersphere_main}
  \end{subfigure}
  \hfill
  \begin{subfigure}{0.45\textwidth}
    \centering
    \includegraphics[width=\linewidth]{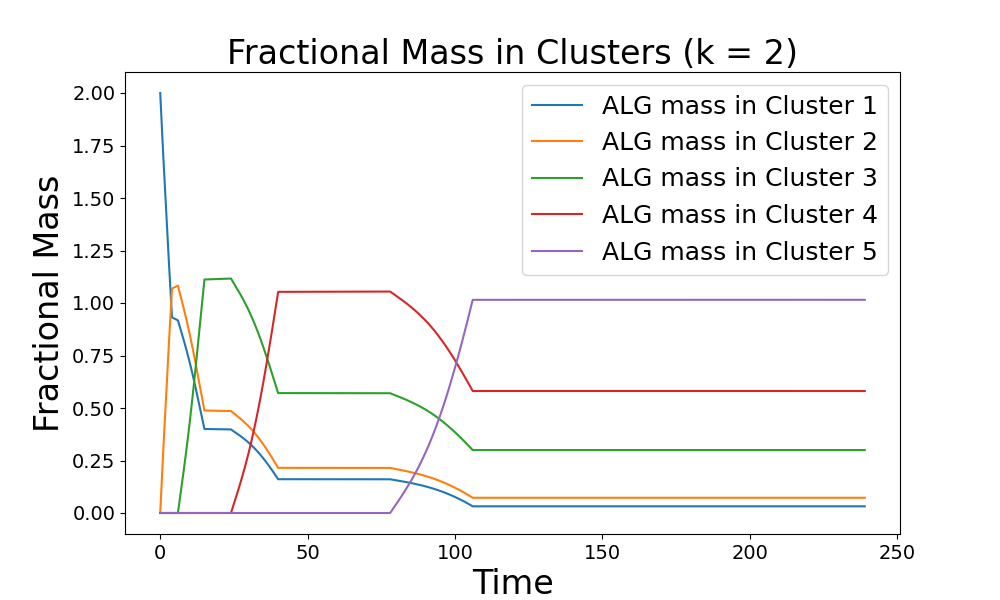}
   \caption{Fractional mass of algorithm in each cluster in {\bf Scale Changing}}
    \label{fig:scalechange_main}
  \end{subfigure} 
  \caption{Experimental Results for Dynamic Instances}
  \label{fig:oscillating_main}
\end{figure}

In the next set of experiments, we explore settings where the metric space changes significantly over time. In {\bf Uniform Hypersphere}, we choose $k=1$ and every round consists of points chosen at random from the surface of the unit sphere. Around the $100$th iteration, we also include the origin in the instance. This changes the space of solutions drastically because the origin is now a much better solution than any of the surface points. As seen in Fig.~\ref{fig:hypersphere_main}, the algorithm moves fractional mass to the origin once it becomes available, eventually converging entirely to this location.

\eat{

\begin{figure}[!t]
  \centering
  \begin{subfigure}{0.45\textwidth}
    \centering
    \includegraphics[width=\linewidth]{hypersphere_frac.png}
    \label{fig:hypersphere_main}
  \end{subfigure}
  \hfill
  \begin{subfigure}{0.45\textwidth}
    \centering
    \includegraphics[width=\linewidth]{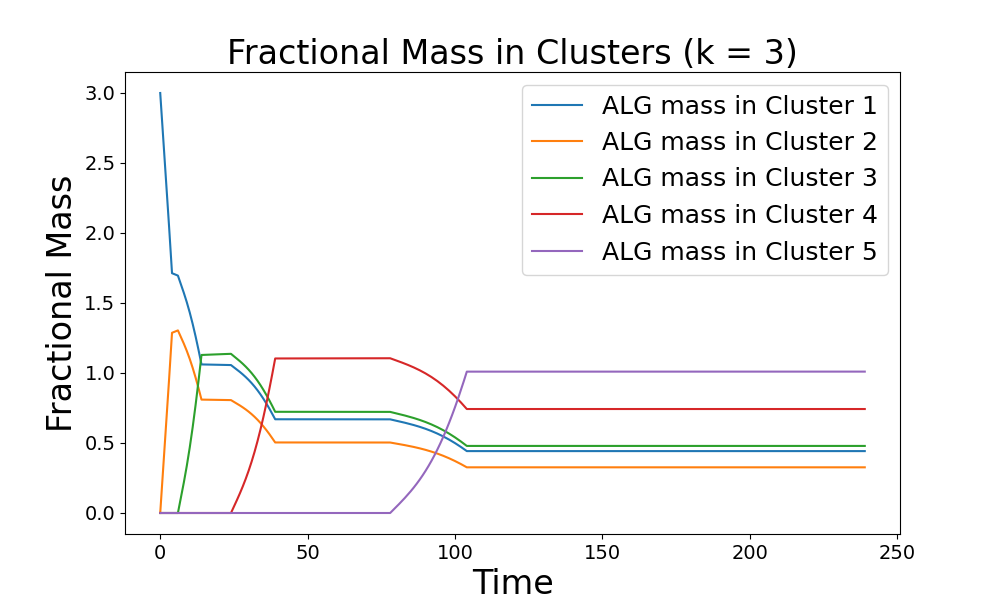}
    
    \label{fig:scalechange_main}
  \end{subfigure}
      \caption{Uniform Hypersphere and Scale Changing}
\end{figure}

}


Our final experiment is called {\bf Scale Changing}. Here, we have 5 clusters with 10 points each. The clusters grow geometrically far apart in distance. The sequence of instances iterates over these clusters in batches of geometrically increasing length. Fig.~\ref{fig:scalechange_main} shows the fractional mass of the algorithm's solution in the all clusters. We see that our algorithm adapts to the changing environment by increasing the fractional mass in the cluster being currently presented, and as earlier, this process slows after unit fractional mass has been moved to the new cluster.


\section{Closing Remarks}\label{sec:closing}In this paper, we studied the problem of solving a sequence of $k$-median clustering instances to match the best fixed solution in hindsight. 
One limitation of our techniques is that they are tailored to metric problems. While these techniques might extend to problems similar to $k$-median such as $k$-means clustering and facility location, they do not apply to other problem domains such as covering for which learning-augmented algorithms have been extensively studied. Nevertheless, our problem formulation bridging online learning and discrete optimization applies to the entire range of problems in combinatorial optimization. We leave the design of algorithms in our framework for other problem domains as interesting future work.

\newpage

\begin{ack}
Debmalya Panigrahi and Anish Hebbar were supported in part by NSF awards CCF-195570 and CCF-2329230. Rong Ge was supported in part by NSF Award DMS-2031849.
\end{ack}

\bibliographystyle{plainnat}  
\bibliography{main}

\clearpage

\appendix

\section*{Technical Appendices and Supplementary Material}

\section{Reduction from \onlinekmed to \bddonlinekmed}\label{sec:reduction}Recall that in Section \ref{sec:reduction-small} we first reduce \onlinekmed to a simpler problem we call \bddonlinekmed, which has at most $k$ weighted points per instance. We do this because the original $k$-median instances $V_t$ can have arbitrary size, which would mean that the metric space on which we design the algorithm expands in an arbitrary manner in each online step.

We give the details of this reduction step in this section.

\subsection{Bounded Instances of \onlinekmed}

A \bddonlinekmed instance is specified by a metric space $\cM = (V, d)$  and we are also given a sequence of subinstances. At time $t$, instance $R_t$ is specified by a set of at most $k$ weighted points $R_t$. Each point $x\in R_t$ has an associated weight $w^t_x$ (for simplicity, we often omit the superscript $t$ when it is clear from context). The instances further satisfy $\sum_{x \in R_t} w_x \leq (k+1)$. The cost for a round is now given by the weighted $k$-median cost:
$$ \Cost_t(Y, R_t) = \sum_{x \in R_t} w_x D(Y,x). $$
We often just use $\Cost_t(Y)$ when the instance is clear from context.


\subsection{Reduction}

We now show that any instance of \onlinekmed can be reduced (in an online manner) to an instance of \bddonlinekmed while preserving the competitive ratio up to a constant factor. More formally, we show
\begin{reptheorem}{thm:reduction}
    Given an instance for the \onlinekmed problem, there exists an algorithm $\mathcal{A}$ that maps each sub-instance $V_t$  to a sub-instance $R_t = \mathcal{A}(V_t)$, resulting in a \bddonlinekmed instance. If solutions $Y_1,...,Y_T$ for the new instance of \bddonlinekmed satisfy
    $$   \sum_{t=1}^T  \Cost_t(Y_t, R_t) \leq \alpha \cdot \min_{\widehat{Y}\subseteq R, |\widehat{Y}| = k}\sum_{t=1}^T \Cost_t(\widehat{Y},R_t) + \gamma, $$
    where $\alpha \ge 1$ and $R = R_0 \cup R_1 \ldots \cup R_T$, then we have 
\[
    \sum_{t=1}^T \rho(Y_t, V_t) \le O(\alpha) \cdot  \min_{Y^*\subseteq V, |Y^*| = k} \sum_{t=1}^T \rho(Y^*, V_t) + O(\gamma).
\]
\end{reptheorem}

We will first describe the reduction. When the sub-instance $V_t$ arrives at time $t$, we shall create a corresponding sub-instance $R_t$ as follows: consider an offline instance of the $k$-median problem where the only points available are $V_t$. 
We use a constant factor approximation algorithm $\cal A$ for the $k$-median problem to obtain a set of $k$ centers. More formally, we find a constant factor approximation to the following problem: 
$$ \min_{C \subseteq V_{t}, |C| = k} \sum_{x \in V_t}D(C,x).$$

Let $C_t$ be the subset of $k$ centers returned by the above approximation algorithm. 
Let the points in $C_t$ be $\{c_1, \ldots, c_k\}$ and $V_t^{(i)}$ denote the points in $V_t$ that are assigned to $c_i$. Thus, 
$$ \sum_{x \in V_t}  D(C_t,x) =  \sum_{i=1}^k\sum_{x \in V_t^{(i)}} d(c_i,x).$$

In the sub-instance $R_t$, the set of points is just $R_t = C_t$, and the weight of a point $c_i \in R_t$ is equal to 
$$\frac{|V^{(i)}_t|}{\sum_{x \in V_t} D(C_t,x)}.$$







Note that the sum of weights of the above points is 

$$   \frac{ \sum_{i=1}^{k}|V^{(i)}_t|}{\sum_{x \in V_t} D(C_t,x) }= \frac{|V_t|}{\sum_{x \in V_t} D(C_t,x) } \leq \frac{|V_t|}{|V_t| -k} \leq k+1,
$$
because $|V_t| \geq k+1$, and 
 the minimum distance between any two distinct points in the metric space is $1$ due to our scaling. This completes the description of the instance of \bddonlinekmed. We now show that a competitive algorithm for the new instance implies the same for the original \onlinekmed instance, while losing only a constant factor in the competitive ratio.

To prove this, we will need the following well-known fact about $k$-median. 

\begin{claim} \label{pointsoutside}
Consider a weighted instance of the $k$-median problem where the set of clients is given by a subset $I$ of  points in the underlying metric space. Let $C^\star$ be an optimal solution (i.e., optimal set of $k$ centers) to the instance and $\widehat{C}$ be an optimal solution where the set of centers is required to be a subset of $I$. Then, 
$$\sum_{x \in I}w_x D(\widehat{C},x) \leq 2\cdot \sum_{x \in I} w_x D(C^\star,x).$$

where $w_x$ is the weight of the client $x$.
\end{claim}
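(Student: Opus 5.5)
The plan is to build $\widehat{C}$ from $C^\star$ by the standard "move each center to the nearest client" argument, and then use the triangle inequality. For each center $c \in C^\star$, let $\pi(c) \in I$ be a client in $I$ closest to $c$, i.e.\ $d(c,\pi(c)) = \min_{x \in I} d(c,x)$. Let $C' := \{\pi(c) : c \in C^\star\}$. Then $C' \subseteq I$ and $|C'| \le k$. If $|C'| < k$, we pad it with arbitrary further points of $I$; adding centers never increases cost. The padding is possible whenever $|I| \ge k$, which holds in our setting, and otherwise the constraint is vacuous. Since $\widehat{C}$ is optimal among center sets inside $I$, it suffices to show
\[
\sum_{x \in I} w_x D(C',x) \le 2 \sum_{x \in I} w_x D(C^\star,x).
\]

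Fix a client $x \in I$, and let $c \in C^\star$ be its closest center, so $D(C^\star,x) = d(x,c)$. By the triangle inequality,
\[
D(C',x) \le d(x,\pi(c)) \le d(x,c) + d(c,\pi(c)).
\]
Since $x \in I$ and $\pi(c)$ is the client of $I$ closest to $c$, we have $d(c,\pi(c)) \le d(c,x)$. Hence $D(C',x) \le 2\, d(x,c) = 2 D(C^\star,x)$.

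Multiplying by the nonnegative weights $w_x$ and summing over $x \in I$ gives the claimed bound. The bound then transfers to $\widehat{C}$ by its optimality.

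There is no real obstacle in this argument. The only point that needs care is that the comparison is pointwise, which is what lets it work for arbitrary nonnegative weights. The key fact making it work is that each client $x$ is itself a candidate in the minimization defining $\pi(c)$.
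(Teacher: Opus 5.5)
Your proof is correct, and it takes a different route from the paper's. The paper works cluster by cluster. For the cluster $S_i$ served by the $i$-th center $y^{(i)}$ of $C^\star$, it expands $2w(S_i)\sum_{x\in S_i} w_x d(y^{(i)},x)$ as a double sum over pairs and applies the triangle inequality. This shows that the weight-proportional average, over $x'\in S_i$, of $\sum_{x\in S_i} w_x d(x',x)$ is at most $2\sum_{x\in S_i} w_x d(y^{(i)},x)$, and it then takes the best client of $S_i$ as the replacement center. That is an averaging argument: the factor $2$ holds in aggregate per cluster rather than per client, and the chosen center depends on the weights.

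You instead project each center onto its nearest client and obtain the pointwise bound $D(C',x)\le 2D(C^\star,x)$. The only fact you need is that $x$ itself competes in the minimization defining $\pi(c)$. Your version is shorter, and it produces a single set $C'$ that works simultaneously for every nonnegative weighting. It also handles collisions and unused centers explicitly by padding, which the paper glosses over: an empty $S_i$ would leave its replacement center undefined.

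What the paper's route buys is a bound phrased in weighted pairwise distances inside a cluster. The same double-counting computation reappears as Fact~\ref{optappxrandom}, which drives the bucket argument in Lemma~\ref{buckets}. Your nearest-client projection does not give that pairwise-distance bound, though the claim itself does not need it.
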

 \begin{proof}
                                            
 Suppose $y$ was the optimal $k$-median solution for $I$, with centres $y^{(1)}, \ldots ,y^{(k)}$, partitioning $I$ into the corresponding clusters $S_1, \ldots S_k$. Note that  $y^{(i)}$ is the optimal $1$-median of the cluster $S_i$. For any set of clients $X$, let $w(X) = \sum_{x \in X} w_x$ denote the total weight of the set.

\begin{align*}
2w(S_i) \sum_{x \in S_i} w_xd(y^{(i)},x) = 2 \sum_{x' \in S_i} w_{x'}  \sum_{x \in S_i} w_xd(y^{(i)},x) & = \sum_{x' \in S_i} \sum_{x \in S_i}  w_{x'}  w_x d(y^{(i)},x') +   w_{x'}  w_x d(y^{(i)},x) \\
& \geq \sum_{x' \in S_i} \sum_{x \in S_i}   w_{x'}  w_x  d(x',x)    \quad \text{(Triangle inequality)}
\end{align*}

We then divide both sides by $w(S_i)$ to obtain

\begin{align*}
2\sum_{x \in S_i} w_xd(y^{(i)},x)   \geq \sum_{x' \in S_i}   \frac{w_{x'}}{w(S_i)} \sum_{x \in S_i}  w_x  d(x',x).    
\end{align*}

Since the right hand side is a weighted combination of $\sum_{x \in S_i}  w_x  d(x',x)$ for different $x' \in S_i$, this implies that  $y'^{(i)} = \argmin_{x' \in S_i}  \sum_{x \in S_i}   w_xd(x',x)   $ satisfies $$ \sum_{x \in S_i} w_xd( y'^{(i)} ,x) \leq 2 \sum_{x \in S_i} w_xd( y^{(i)} ,x)   .$$

 If we denote $y'$ as the collection of the $k$ centres $y'^{(i)}$, by adding up the above inequality for each $i \in [k]$ we obtain the desired result.
 \[  
 \sum_{x \in I} w_x(y',x)  \leq \sum_{i=1}^k \sum_{x \in S_i} w_xd( y'^{(i)} ,x)  \leq 2  \sum_{i=1}^k \sum_{x \in S_i} w_xd( y^{(i)} ,x)  =2\sum_{x \in I} w_xD(y,x). \qedhere
 \] 
 \end{proof}

We are now ready to prove the correctness of the reduction.

\begin{proof}[Proof of Theorem~\ref{thm:reduction}]
For a time $t$, the subset $C_t$ is a constant factor approximation to the $k$-median instance specified by the set of points $V_t$. In fact, we have the additional restriction that $C_t$ is a subset of $V_t$. Now, Claim~\ref{pointsoutside} implies that if $Y_t^*$ is an optimal solution to this $k$-median instance, then
\begin{align} \label{approxweight1} \sum_{x\in V_t} D(C_t,x)    =     O(1) \cdot  \sum_{x\in V_t} D(Y^*_t,x)  \end{align}
Let $Y_t = \{y_t^{(1)}, \ldots, y_t^{(k)}\}$ and $C_t = \{c_t^{(1)}, \ldots, c_t^{(k)} \}.$
Let $V_t^{(j)}$ be the subset of $V_t$ that is assigned to $c_t^{(j)}$, i.e., the points in $V_t$ for which the closest center in $C_t$ is $c_t^{(j)}$.  Now we upper bound the competitive ratio of the solution $Y_t$ in the instance $V_t$:


\begin{align}  \label{algagainstnew}
    \rho(Y_t, V_t) = \frac{\sum_{x\in R_t} D(Y_t,x)}{  \sum_{x\in V_t} D(Y_t^*,x)} & \leq O(1) \cdot \frac{\sum_{x\in V_t} D(Y_t,x)}{\sum_{x\in V_t} D(C_t,x)} 
     \quad (\text{from } \eqref{approxweight1}) \nonumber  \\ 
     & = O(1) \cdot \frac{  \sum_{j=1}^k  \sum_{x \in V_t^{(j)}}   d(Y_t,x)}{\sum_{x\in V_t} D(C_t,x)} \nonumber \\
     &\leq    O(1) \cdot  \frac{ \sum_{j=1}^k \sum_{x \in V_t^{(j)}}   (d(Y_t,c_t^{(j)}) + d(c_t^{(j)}, x))}{\sum_{x\in V_t} D(C_t, x)}   \quad (\text{triangle inequality}) \nonumber \\ 
     & =  O(1) \cdot  \frac{    \sum_{j=1}^k  |V_t^{(j)}|  d(Y_t, c_t^{(j)}) +    \sum_{x \in V_t}   D(C_t, x)}{\sum_{x\in V_t} D(C_t, x)} \nonumber \\
     & = O(1) \cdot  \Cost_t(Y_t, R_t) +  O(1)  
\end{align}

Let $Y^*$ be the optimal subset of $k$ points that minimizes $\sum_{t=1}^T \rho(Y^*, V_t)$. We now lower bound $\sum_t \rho(Y^*, V_t).$ We again use the partition of $V_t$ into $V_t^{(j)}, j \in [k]$ and triangle inequality to get: 
\begin{align}  \label{algagainstnew1}
    \rho(Y^*, V_t) = \frac{\sum_{x\in V_t} D(Y^*,x)}{  \sum_{x\in V_t} D(Y_t^*,x)} & \geq  \frac{\sum_{x\in V_t} D(Y^*,x)}{\sum_{x\in V_t} D(C_t,x)} 
     \nonumber  \\ 
     & =  \frac{  \sum_{j=1}^k  \sum_{x \in V_t^{(j)}}   d(Y^*,x)}{\sum_{x\in V_t} D(C_t,x)} \nonumber \\
     &\geq    \frac{ \sum_{j=1}^k \sum_{x \in V_t^{(j)}}   (d(Y^*,c_t^{(j)}) - d(c_t^{(j)}, x))}{\sum_{x\in V_t} D(C_t, x)}   \quad (\text{triangle inequality}) \nonumber \\ 
     & =    \frac{    \sum_{j=1}^k  |V_t^{(j)}|  d(Y^*, c_t^{(j)}) -   \sum_{x \in V_t}   D(C_t, x)}{\sum_{x\in V_t} D(C_t, x)} \nonumber \\
     & =   \Cost_t(Y^*, R_t) -1.
\end{align}

Consequently, combining  \eqref{algagainstnew} and \eqref{algagainstnew1}  we obtain

\begin{align*}     \sum_{t=1}^T \rho(Y_t, V_t) &  \leq    O(1) \cdot \sum_{t=1}^T \Cost_t(Y_t, R_t) + O(T)  \leq    O(\alpha) \cdot \sum_{t=1}^T \Cost_t(\widehat{Y}, R_t) + O(T +\gamma) \\
&\leq  O(\alpha) \cdot \sum_{t=1}^T \Cost_t(Y^*, R_t) + O(T +\gamma) \quad (\Cref{pointsoutside})  \\
 &\leq   O(\alpha) \cdot \sum_{t=1}^T (\rho(Y^*,V_t) + 1) + O(T + \gamma) =  O(\alpha)  \cdot \sum_{t=1}^T \rho(Y^*,V_t)  + O(\gamma).
\end{align*}
 where in the last step we used that $\alpha \geq 1$ and  $\rho(Y^*,V_t) \geq 1$. This proves the desired result. \end{proof}


\section{Detailed Proofs for Fractional Algorithm}
In this section, we give a detailed description of the online algorithm described in \Cref{sec:fractional}   for maintaining a fractional solution to an instance of the \bddonlinekmed problem. We also give the missing proofs in the analysis of this algorithm.

\textbf{Problem Setup:} Consider an instance  $\cM = (V, d)$   of the \bddonlinekmed problem. 
There are $T+1$ rounds in total, and in each round $t$, a weighted sub-instance $R_t$ arrives. This sub-instance is specified by a set of $k$ points $R_t$ in a metric space $\cal M$. Note that the points in the metric space $\cal M$ are revealed over time, and hence, till time $t$, only the points in $R_{\leq t} := \cup_{t' \leq t} R_t$ in the metric space have been revealed. Let $d_t$ denote $|R_{\leq t}|$, and $\Delta_{d_t}^k$ denote  $\{z \in \field{R}
_{\geq 0}^{d_t} : ||z||_1 = k\}$ where each coordinate $i$ refers to a corresponding point in $R_{\leq t}$. By ensuring that a point in $R_t$ corresponds to the same coordinate in $\Delta_{d_t'}$ for all $t' \geq t$, we have $\Delta_{d_t} \subseteq \Delta_{d_{t'}}$ for all $t,t'$ where $t' \geq t$. Finally, let $K_t$ denote the intersection of $\Delta_{d_t}^k$ and $[0,1]^{d_t}$. In an integral solution, we would require that each time $t$, the algorithm outputs a subset $Y_t$ of $k$ points in $R_{\leq {t-1}}$; however, in a fractional solution we relax this requirement as follows -- the online algorithm outputs a vector $y_t \in K_{t-1}$ for each time $t$. Given a vector $y_t \in K_{t-1}$, the fractional assignment cost at time $t$ is given by: 
$$\Cost_t(y_t) :=  \sum_{x \in R_t} w_x D(y_t,x),$$
where $D(y_t,x)$ is obtained by fractionally assigning $x$ to an extent of $1$ to the closest points in $y_t$. In other words, let $v_i$ denote the point corresponding to coordinate $i$ in $K_{t-1}$. Then, 
$D(y_t,x) := \min_{\alpha \in [0,1]^{d_{t-1}}} \sum_{i \in [d_{t-1}]} \alpha_i d(v_i,x),$ where $\sum_i \alpha_i = 1$ and $0 \leq \alpha_i \leq (y_t)_i$.




 
\subsection{Basic facts about the 1-median problem}

In this section, we state some basic facts about the $1$-median problem. Given a weighted set $X$ of points in a metric space ${\cal M}$, let $\opt_1(X)$ be the optimal weighted $1$-median cost of $X$, i.e., 
\[\opt_1(X) := \min_{y \in {\cal M}} \sum_{x \in X}w_x d(y,x).\]
where $w_x$ is the weight of the point $x$. The minimizer $y$ above shall be referred as the {\em optimal $1$-median center} of $X$. 
The following result shows that $\opt_1(X)$ is closely approximated by a weighted sum of the pair-wise distances between the points in $X$. 

\begin{fact} \label{optappxrandom}
Let $X$ be a weighted set of points in a metric space. Then, 
\[ 
    \opt_1(X)  \geq \frac{1}{2} \sum_{x \in X} \frac{w_x}{w(X)} \sum_{x' \in X} w_{x'} d(x,x').
\]    
\end{fact}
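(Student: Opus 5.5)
The argument is the triangle inequality averaged over the weights, exactly as in the proof of Claim~\ref{pointsoutside}. Let $y$ be an optimal $1$-median center of $X$, so that $\opt_1(X) = \sum_{x \in X} w_x d(y,x)$. For every pair $x, x' \in X$, the triangle inequality gives
\[
d(x,x') \le d(x,y) + d(y,x').
\]
Multiply this by $w_x w_{x'}$ and sum over all ordered pairs $(x,x') \in X \times X$. The right-hand side becomes
\[
\sum_{x}\sum_{x'} w_x w_{x'}\bigl(d(y,x) + d(y,x')\bigr) = 2\, w(X) \sum_{x \in X} w_x d(y,x) = 2\, w(X)\, \opt_1(X),
\]
because each of the two terms separates into $w(X)$ times the $1$-median cost. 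The left-hand side is $\sum_{x}\sum_{x'} w_x w_{x'} d(x,x')$.

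Dividing both sides by $2\,w(X)$, which is positive since $X$ is a nonempty weighted set with positive weights, gives
\[
\opt_1(X) \ge \frac{1}{2}\sum_{x \in X} \frac{w_x}{w(X)} \sum_{x' \in X} w_{x'} d(x,x'),
\]
which is the claimed bound.

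There is no real obstacle. The only points needing care are that the optimal center $y$ may be any point of $\cM$, not necessarily a point of $X$, and the argument never uses $y \in X$; and that the minimum defining $\opt_1(X)$ is attained. If $\cM$ were infinite one could instead run the same argument with any $y$ and then take the infimum over $y$. In this paper $V$ is finite, so the minimum exists.
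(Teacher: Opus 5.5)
Your proof is correct and is the same as the paper's: fix an optimal $1$-median center, apply the triangle inequality through it to every weighted ordered pair, note that the right-hand side sums to $2\,w(X)\,\opt_1(X)$, and divide by $2\,w(X)$. Your remarks that the center need not lie in $X$, and that one could take an infimum if the minimum were not attained, are accurate side observations that the argument does not depend on.
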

\begin{proof}
Let $y^*$ be an optimal $1$-median center of $X$. Then, 
\begin{align*}
2w(X) \sum_{x \in X} w_xd(y^*,x) = 2 \sum_{x' \in X} w_{x'}  \sum_{x \in X} w_xd(y^*,x) & = \sum_{x' \in X} \sum_{x \in X}  w_{x'}  w_x d(y^*,x) +   w_{x'}  w_x d(y^*,x') \\
& \geq \sum_{x' \in X} \sum_{x \in X}   w_{x'}  w_x  d(x,x')    \quad \text{(Triangle inequality)}
\end{align*}
Dividing both sides by $2w(X)$ now gives the desired bound.
\end{proof}

The following result states that the weighted distance of $X$ from a point $v$ can be well-approximated by the weighted distance between $v$ and an optimal $1$-median of $X$. 
\begin{fact}\label{distbound}
Let $X$ be a weighted set of points in a metric space and $y^*$ be an optimal $1$-median of $X$. For any point $v$ in the metric space, 
\[
d(v,y^*) \leq 2 \sum_{x \in X} \frac{w_x}{w(X)} d(v,x).
\]
\end{fact}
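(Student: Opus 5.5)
The plan is to apply the triangle inequality pointwise for each $x \in X$ and then take the weighted average over $X$. For any $x \in X$ we have $d(v,y^*) \le d(v,x) + d(x,y^*)$. Multiplying by $w_x/w(X)$ and summing over $x \in X$ gives
\[
d(v,y^*) \le \sum_{x\in X}\frac{w_x}{w(X)}\, d(v,x) + \frac{1}{w(X)}\sum_{x \in X} w_x d(x,y^*) = \sum_{x\in X}\frac{w_x}{w(X)}\, d(v,x) + \frac{\opt_1(X)}{w(X)},
\]
where the final equality holds because $y^*$ is an optimal $1$-median center of $X$.

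It then remains to bound the term $\opt_1(X)/w(X)$ by $\sum_{x} \frac{w_x}{w(X)} d(v,x)$. This is the only step that uses optimality of $y^*$. Since $y^*$ minimizes the weighted distance sum over all points of the metric space, we may compare it with the candidate center $v$ itself:
\[
\opt_1(X) = \sum_{x \in X} w_x d(y^*,x) \le \sum_{x\in X} w_x d(v,x).
\]
Dividing by $w(X)$ bounds the second term by the first, and together the two terms give the factor $2$ claimed in \Cref{distbound}.

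The argument has no real obstacle. The one point to check is that the minimization defining $\opt_1(X)$ ranges over all of ${\cal M}$, so that $v$ is a legal competitor even if $v \notin X$. This matches the definition of $\opt_1$ given just above. If the minimization were instead restricted to $X$, the factor would degrade: \Cref{pointsoutside} would give a factor of $2$ on $\opt_1(X)$, and hence a factor of $3$ overall, rather than $2$.
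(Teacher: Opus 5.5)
Your proof is correct and matches the paper's argument: average the triangle inequality $d(v,y^*) \le d(v,x) + d(x,y^*)$ over $X$ with weights $w_x/w(X)$, then bound $\sum_x w_x d(y^*,x) \le \sum_x w_x d(v,x)$ using the optimality of $y^*$ over the whole metric space. Your closing remark, that restricting the center to $X$ degrades the factor to $3$, is exactly Corollary~\ref{ub} in the paper.
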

\begin{proof}
We have:
\begin{align*}
\frac{w(X) d(v,y^*)}{w(X)} = \frac{\sum_{x \in X} w_x d(v,y^*)}{w(X)} &\leq  \frac{\sum_{x \in X} w_x d(v,x)}{w(X)} + \frac{\sum_{x \in X} w_x d(y^*,x)}{w(X)} \quad \text{(Triangle inequality)} \\ & \leq  2 \sum_{x \in X} 
 \frac{w_x d(v,x)}{w(X)} \quad (\text{Optimality of } y^*). \qedhere
\end{align*}
\end{proof}

Computing the optimal $1$-median center $y^*$ can be hard as we don't know the entire metric space-- instead the following result shows that replacing $y^*$ by the optimal center amongst $X$ retains the property stated above.  
\begin{corollary}\label{ub}
Let $X$ be a set of points in a metric space $\cal M$ and $v$ be an arbitrary point in $\cal M$. 
Let $x^\star \in X$ be a point that minimizes $\sum_{x \in X} w_x d(x^*,x).$ Then, 
$$d(v,x^*) \leq 3 \sum_{x \in X} \frac{w_x}{w(X)} d(v,x)$$
\end{corollary}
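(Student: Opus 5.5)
Write $\bar d(v) := \sum_{x \in X} \frac{w_x}{w(X)} d(v,x)$ for the weighted average distance from $v$ to $X$. The plan is to go through $x^*$ with the triangle inequality and control the one unknown term by the optimality of $x^*$ among points of $X$. For every $x \in X$ we have $d(v,x^*) \le d(v,x) + d(x,x^*)$. Averaging this with weights $w_x/w(X)$ gives
\[
d(v,x^*) \le \bar d(v) + \sum_{x \in X}\frac{w_x}{w(X)} d(x,x^*).
\]
So it suffices to show that the second term is at most $2\bar d(v)$.

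The second term equals $\frac{1}{w(X)}\sum_{x\in X} w_x d(x^*,x)$. By the choice of $x^*$, this is at most $\frac{1}{w(X)}\sum_{x\in X} w_x d(x',x)$ for every $x' \in X$. Therefore it is also at most any convex combination of these quantities over $x'\in X$. Take the weights $w_{x'}/w(X)$. This bounds the second term by $\frac{1}{w(X)^2}\sum_{x'\in X}\sum_{x\in X} w_{x'}w_x d(x',x)$, the same weighted average pairwise distance that appears in \Cref{optappxrandom}.

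Next, bound each pairwise distance through $v$: $d(x',x)\le d(x',v)+d(v,x)$. Summing, the double sum splits into two symmetric pieces, each equal to $\bar d(v)$. Hence the average pairwise distance is at most $2\bar d(v)$, and combining with the first display gives $d(v,x^*)\le 3\bar d(v)$, which is the claim.

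An alternative route would combine \Cref{distbound} with \Cref{optappxrandom} and a comparison of $x^*$ against the true optimal $1$-median $y^*$. That chain tends to lose larger constants. The main obstacle is therefore getting exactly the constant $3$, and the way to secure it is the averaging step: compare $x^*$ against the weighted average over all candidates $x'\in X$ rather than against any single point. Note that $v$ is arbitrary and need not lie in $X$; the argument above never uses $v \in X$.
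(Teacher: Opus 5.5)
Your proof is correct and is essentially the paper's argument. Both use the same averaged triangle inequality $d(v,x^*)\le \bar d(v)+\frac{1}{w(X)}\sum_{x} w_x d(x^*,x)$, and both bound the second term by $2\bar d(v)$ via the weighted average pairwise distance.

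The only difference is that the paper passes through the optimal $1$-median $y^*$: Fact~\ref{optappxrandom} gives $\sum_x w_x d(x^*,x)\le 2\,\opt_1(X)$, and optimality of $y^*$ gives $\opt_1(X)\le \sum_x w_x d(v,x)$. That route also yields exactly $3$. You instead apply the triangle inequality through $v$ directly and never need $y^*$. So your remark about losing constants applies only to a chain through Fact~\ref{distbound}, not to the paper's route.
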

\begin{proof}
Let $y^*$ be an optimal $1$-center of $X$. Using Fact~\ref{optappxrandom},  we have:
$$ \sum_{x \in X} w_x d(y^*,x) \geq \frac{1}{2} 
 \min_{x \in X} \sum_{x' \in X} w_{x'} d(x,x') = \frac{1}{2}   \sum_{x \in X} w_x d(x^*,x) $$
Proceeding as in the proof of Fact~\ref{distbound}, we get
\[
d(v,x^*) = \frac{\sum_{x \in X} w_xd(v,x^*)}{w(X)} \leq \frac{\sum_{x \in X} w_x d(v,x)}{w(X)} + \frac{\sum_{x \in X} w_x d(x^*,x)}{w(X)}.
\]
The result now follows from \Cref{pointsoutside}, 
\[
\sum_{x \in X} w_x d(x^*,x) \leq 2\sum_{x \in X} w_x d(y^*,x) \leq 2\sum_{x \in X} w_x d(v,x). \qedhere
\]
\end{proof}
\subsection{Regularizer and its properties}
In this section, we define the regularizer that our online mirror descent algorithm shall use. We also give useful properties of the regularizer. 

\begin{definition}[$\beta$-hyperbolic entropy] \label{hyperbolicentropy} \citep{ghai2020exponentiated}
For any $x \in \field{R}^d$ and $\beta > 0$, define the $\beta$-hyperbolic entropy of $x$, denoted $\phi_\beta(x)$ as:   $$\phi_{\beta}(x) := \sum_{i=1}^d x_i  \arcsinh \left(\frac{x_i}{\beta}\right) - \sqrt{x_i^2 + \beta^2} .$$
\end{definition}

Note that $\phi_\beta(x)$ is convex and twice differentiable, and its gradient is given by

\begin{align}
\label{eq:gradientphi}
 (\nabla \phi_{\beta}(x))_i
 = \arcsinh(x_i/\beta) \quad \forall i \in [d]
 \end{align}

Consequently its Hessian is the diagonal matrix,

$$\nabla^2 \phi_{\beta}(x) = \text{Diag}\left(\frac{1}{x_1^2 + \beta^2}, \frac{1}{x_2^2 + \beta^2}, \ldots , \frac{1}{x_d^2 + \beta^2} \right) $$

We now establish strong convexity properties of the function $\phi_\beta$. 
\begin{definition}[Strongly convex function] \label{strongconvexity}
A twice differentiable function $f : K \to \field{R}$ on a convex set $K \subseteq \field{R}^d$ is said  to be $\alpha$-strongly convex with respect to a norm $||\cdot||$ on $K$ if for all $x, y \in K$

$$f(x) - f(y) -\inp{\nabla f(y),x-y} \geq \frac{\alpha}{2}||x-y||^2, $$
or equivalently, 
$$ \inf_{x \in K, y \in \field{R}^{d} : ||y|| = 1} y^T \nabla^2 f(x) y \geq \alpha.$$
\end{definition}

\begin{lemma}
The function $\phi_{\beta}$ is $\frac{1}{k + \beta d}$-strongly convex over $\Delta_d^k$ with respect to the $\ell_1$ norm.
\end{lemma}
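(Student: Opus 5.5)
We use the Hessian characterization in Definition~\ref{strongconvexity}. The Hessian of $\phi_\beta$ is diagonal, so it suffices to show that for every $x \in \Delta_d^k$ and every $y \in \field{R}^d$,
\[
\sum_{i=1}^d \frac{y_i^2}{x_i^2+\beta^2} \;\ge\; \frac{1}{k+\beta d}\,\|y\|_1^2 .
\]
Setting $\|y\|_1 = 1$ then gives the stated strong convexity parameter. The plan is a single Cauchy--Schwarz step followed by an elementary bound on the resulting denominator.

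First, we write $|y_i| = \frac{|y_i|}{\sqrt{x_i^2+\beta^2}}\cdot\sqrt{x_i^2+\beta^2}$ and apply Cauchy--Schwarz:
\[
\|y\|_1^2 = \Big(\sum_i |y_i|\Big)^2 \le \Big(\sum_i \frac{y_i^2}{x_i^2+\beta^2}\Big)\Big(\sum_i \sqrt{x_i^2+\beta^2}\Big).
\]
Rather than the weight $x_i^2+\beta^2$, we deliberately choose the weight $\sqrt{x_i^2+\beta^2}$ inside the square root. This choice makes the second factor linear-growth in $x$, so it can be controlled by the $\ell_1$ constraint.

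Second, we bound that factor using the elementary inequality $\sqrt{a^2+b^2}\le a+b$ for $a,b\ge 0$. Since $x_i \ge 0$, this gives
\[
\sum_i \sqrt{x_i^2+\beta^2} \le \sum_i (x_i + \beta) = \|x\|_1 + \beta d = k + \beta d .
\]
Combining the two displays yields $y^T\nabla^2\phi_\beta(x)\,y \ge \|y\|_1^2/(k+\beta d)$, which is exactly the claim.

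The only real obstacle is picking the right Cauchy--Schwarz weighting. The naive weighting produces $\sum_i (x_i^2+\beta^2)$, which is not bounded by $k+\beta d$ in the required form. With the square-root weighting above, the rest of the argument is routine.

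Finally, we remark that the bound also holds on the larger set $\field{R}^d_{\ge 0}\cap\{\|x\|_1\le k\}$, which contains $K_t$. The same argument covers $K_t$ because $K_t\subseteq\Delta_d^k$.
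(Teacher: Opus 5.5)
\textbf{There is a genuine gap: your Cauchy--Schwarz step is invalid as written.} Write $c_i := x_i^2+\beta^2$. Cauchy--Schwarz in weighted form reads $\|y\|_1^2 \le \bigl(\sum_i y_i^2/w_i\bigr)\bigl(\sum_i w_i\bigr)$.
\begin{itemize}
\item With your factorization $|y_i| = \frac{|y_i|}{\sqrt{c_i}}\cdot\sqrt{c_i}$, the weight is $w_i = c_i$. That gives $\bigl(\sum_i y_i^2/c_i\bigr)\bigl(\sum_i c_i\bigr)$, which is exactly the ``naive'' bound you wanted to avoid.
\item Switching to $w_i=\sqrt{c_i}$ changes the first factor as well, to $\sum_i y_i^2/\sqrt{c_i}$. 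That is no longer your quadratic form.
\end{itemize}
The hybrid inequality you display is in fact false. For $d=1$ it reads $y_1^2 \le y_1^2/\sqrt{x_1^2+\beta^2}$, which fails at $x_1=k=1$ for every $\beta>0$.

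No cleverer weighting can repair this, because the conclusion is false for the quadratic form you wrote. Take $d=1$, $k=2$, $x=(2)$, $y=(1)$. Then $\sum_i y_i^2/(x_i^2+\beta^2) = 1/(4+\beta^2)$, which is strictly less than $1/(2+\beta)$ for every $\beta>0$.

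The real problem is the Hessian you started from. Since $(\nabla\phi_\beta(x))_i = \arcsinh(x_i/\beta)$ and $\frac{d}{ds}\arcsinh(s/\beta) = 1/\sqrt{s^2+\beta^2}$, the Hessian is $\mathrm{Diag}\bigl(1/\sqrt{x_i^2+\beta^2}\bigr)$. The Hessian displayed in the paper just before the lemma drops this square root. The paper's own proof works with the correct form $\sum_i y_i^2/\sqrt{\beta^2+x_i^2}$.

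With the correct form, apply Cauchy--Schwarz to $a_i = |y_i|\,(x_i^2+\beta^2)^{-1/4}$ and $b_i = (x_i^2+\beta^2)^{1/4}$. This gives exactly $\|y\|_1^2 \le \bigl(\sum_i y_i^2/\sqrt{x_i^2+\beta^2}\bigr)\bigl(\sum_i \sqrt{x_i^2+\beta^2}\bigr)$. Your second step, $\sqrt{x_i^2+\beta^2}\le x_i+\beta$ summed to $k+\beta d$, then finishes the proof exactly as the paper does. Once the Hessian is corrected, your closing remark about the larger set $\{x\ge 0,\ \|x\|_1\le k\}$ is also valid.
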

\begin{proof}
We use the second characterization in Definition \ref{strongconvexity}. Consider  vectors $x \in \Delta_d^k$ and $y \in \field{R}^{d}$ such that $||y||_1 = 1$. Then
\begin{align*}
  y^T \nabla^2 \phi_{\beta}(x)y &=    \sum_{i=1}^d \frac{y_i^2}{\sqrt{\beta^2 + x_i^2}} 
  = \frac{1}{\sum_{i=1}^d \sqrt{\beta^2 + x_i^2}} \left(\sum_{i=1}^d \frac{y_i^2}{\sqrt{\beta^2 + x_i^2}}\right) \left(\sum_{i=1}^d \sqrt{\beta^2 + x_i^2} \right)\\
 & \geq \frac{1}{\sum_{i=1}^d \sqrt{\beta^2 + x_i^2}}  \left ( \sum_{i=1}^d |y_i| \right)^2   = \frac{1}{\sum_{i=1}^d \sqrt{\beta^2 + x_i^2}} \geq  \frac{1}{\sum_{i=1}^d(\beta + x_i)}  = \frac{1}{k + \beta d},
\end{align*}
where the first inequality follows from Cauchy-Schwarz. 
\end{proof}
\begin{corollary}\label{hyperbolicstrongconvexity}
The function $\phi_{\frac{1}{d}}$ is $\frac{1}{k + 1}$-strongly convex over $\Delta_d^k$ with respect to the $\ell_1$ norm.
\end{corollary}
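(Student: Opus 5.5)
Corollary~\ref{hyperbolicstrongconvexity} is the preceding lemma specialized to $\beta = 1/d$. That lemma states that $\phi_\beta$ is $\frac{1}{k+\beta d}$-strongly convex over $\Delta_d^k$ with respect to the $\ell_1$ norm. Substituting $\beta = 1/d$ gives $\beta d = 1$, so the modulus becomes $\frac{1}{k+1}$, which is exactly the claim. No additional argument beyond this instantiation is needed.

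Alternatively, the bound can be checked directly through the second characterization in Definition~\ref{strongconvexity}. Fix $x \in \Delta_d^k$ and $y$ with $\|y\|_1 = 1$. By Cauchy--Schwarz,
\[
\sum_i \frac{y_i^2}{\sqrt{x_i^2 + 1/d^2}} \;\ge\; \frac{\|y\|_1^2}{\sum_i \sqrt{x_i^2 + 1/d^2}} .
\]
Next, use $\sqrt{a^2+b^2} \le a+b$ for $a,b \ge 0$. Since the entries of $x$ are nonnegative and sum to $k$, the denominator is at most $\sum_i (x_i + 1/d) = k + 1$. This gives the modulus $\frac{1}{k+1}$.

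There is one mild point to check. The Hessian displayed before the lemma has diagonal entries $1/(x_i^2+\beta^2)$, whereas the derivative of $\arcsinh(x_i/\beta)$ is $1/\sqrt{x_i^2+\beta^2}$. The lemma's proof correctly uses the square-root form, so I would rely on that form. Since the corollary is only used on $K_t \subseteq \Delta_{d_t}^k$, which is also convex, strong convexity restricts to $K_t$ as well. I see no real obstacle here; the step amounts to a parameter substitution.
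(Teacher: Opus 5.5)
Your proposal is correct and matches the paper: the corollary is just the preceding lemma's modulus $\frac{1}{k+\beta d}$ evaluated at $\beta = 1/d$, and your direct Cauchy--Schwarz check reproduces that lemma's proof. You are also right that the Hessian diagonal should be $1/\sqrt{x_i^2+\beta^2}$ rather than the displayed $1/(x_i^2+\beta^2)$, and the lemma's proof does use the correct form.
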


We now recall the notion of Bregman divergence and state some well-known properties \citep{Hazan16}.
\begin{definition}[Bregman Divergence] Let $g: K \rightarrow \field{R}$ be a convex function defined on a convex set $K \subseteq \field{R}^d$. Given two points $x, y \in K$, 
The Bregman divergence $B_g(x||y)$ with respect the function $g$ is defined as
$$B_g(x||y) := g(x) - g(y) - \inp{\nabla g(y) , x-y} $$
\end{definition}
Note that convexity of $g$ implies that $B_g(x||y) \geq 0$. 
\begin{fact}[Law of Cosines for Bregman Divergence]
\label{cosines}
Let $g: K  \rightarrow  \field{R}$ be a convex differentiable function. Then
$$   \inp{\nabla g(y) - \nabla g(z) , y-x} = B_g(x||y) +B_g(y||z) - B_g(x||z).$$
\end{fact}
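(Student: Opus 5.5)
The identity is purely algebraic, so I will expand all three Bregman divergences from the definition $B_g(a\|b) = g(a) - g(b) - \inp{\nabla g(b), a-b}$ and compare the two sides directly. Convexity plays no role; only the definition and linearity of the inner product are used, together with differentiability of $g$ at $y$ and $z$ so the gradients exist.

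On the right-hand side I will write out
\[
B_g(x\|y) + B_g(y\|z) - B_g(x\|z)
\]
term by term. The function values cancel: $g(x)$ from $B_g(x\|y)$ cancels with $-g(x)$ from $-B_g(x\|z)$. Likewise $-g(y)$ from $B_g(x\|y)$ cancels with $g(y)$ from $B_g(y\|z)$, and $-g(z)$ from $B_g(y\|z)$ cancels with $+g(z)$ from $-B_g(x\|z)$.

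What remains is the sum of inner products
\[
-\inp{\nabla g(y), x-y} - \inp{\nabla g(z), y-z} + \inp{\nabla g(z), x-z}.
\]
The last two terms combine by linearity into $\inp{\nabla g(z), x-y}$, because $(x-z)-(y-z)=x-y$. The remainder is therefore
\[
\inp{\nabla g(z), x-y} - \inp{\nabla g(y), x-y} = \inp{\nabla g(y)-\nabla g(z), y-x},
\]
which is exactly the left-hand side.

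There is no real obstacle here. The only things to get right are the sign bookkeeping and the order of arguments in each divergence, which I will fix by matching each term against the stated identity as I expand it.
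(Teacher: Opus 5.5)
Your proof is correct: after expanding the three divergences, the function values cancel and the inner-product terms combine by linearity into exactly the left-hand side, and you are right that convexity is not needed. The paper gives no proof of its own (it cites this as a well-known fact from Hazan's text), and your direct expansion is the standard argument.
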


\begin{fact}[Generalized Pythagorean Theorem for Bregman Divergences] \label{pythag}
Let $g: K \to \field{R}$ be a convex, differentiable function and $K' \subseteq K$ be a convex subset. Given $x \in K, z \in K'$, let $y \in K'$ be a minimizer of $ B_g(y||z)$. Then
$$ B_g(x||y) +B_g(y||z) \leq B_g(x||z). $$
\end{fact}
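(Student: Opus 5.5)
The plan is to notice that under the hypotheses as stated, $z$ already lies in $K'$. This makes the Bregman projection trivial, and the inequality then follows from the law of cosines (Fact~\ref{cosines}) with the cross term vanishing.

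\textbf{Step 1: the projection attains zero.} Since $z \in K'$, we have $\min_{y' \in K'} B_g(y'\|z) \le B_g(z\|z) = 0$. Convexity gives $B_g(\cdot\|z) \ge 0$, so any minimizer $y$ satisfies $B_g(y\|z) = 0$. In the case the paper actually uses, $g = \phi_\beta$ is strictly convex, since its Hessian $\mathrm{Diag}\bigl((x_i^2+\beta^2)^{-1/2}\bigr)$ is positive definite. Then $B_g(y\|z)=0$ forces $y = z$, and the claimed inequality reads $B_g(x\|z) + 0 \le B_g(x\|z)$, which holds with equality.

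\textbf{Step 2: general convex $g$.} Without strict convexity, I would reduce to showing $\nabla g(y) = \nabla g(z)$ and then invoke Fact~\ref{cosines} with the triple $(x,y,z)$:
\[
B_g(x\|y) + B_g(y\|z) - B_g(x\|z) = \inp{\nabla g(y) - \nabla g(z), y - x} = 0 .
\]
To obtain the gradient identity, set $h(u) := g(u) - \inp{\nabla g(z), u}$, which is convex and differentiable. Then $h(u) - h(z) = B_g(u\|z) \ge 0$, so $z$ is a global minimizer of $h$ over $K$. Moreover $B_g(y\|z)=0$ says $h(y) = h(z)$, so $y$ is also a global minimizer of $h$. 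If $y$ is interior to the domain, the first-order condition gives $\nabla h(y) = 0$, i.e.\ $\nabla g(y) = \nabla g(z)$.

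\textbf{Main obstacle.} The only delicate point is this first-order step when $y$ sits on the boundary of the domain. Here I would argue by convexity of $h$ along the segment from $y$ toward $x$ (which stays in $K$). Minimality of $h$ at $y$ gives $\inp{\nabla h(y), x - y} \ge 0$, i.e.
\[
\inp{\nabla g(y) - \nabla g(z), y - x} \le 0 .
\]
Plugging this into the law of cosines yields $B_g(x\|y) + B_g(y\|z) - B_g(x\|z) \le 0$, which is exactly the desired inequality, without needing equality of gradients. I would present this variational-inequality version as the main argument, since it covers both cases uniformly, and mention Step 1 as the simpler route for the strictly convex regularizer $\phi_\beta$ used later.
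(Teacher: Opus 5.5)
\textbf{Valid for the printed statement, but not for the version the paper uses.} Your argument is logically sound for the statement exactly as printed. However, the printed statement has the set memberships of $x$ and $z$ transposed, and your proof depends entirely on that transposition. The paper gives no proof of its own; it quotes this as a standard fact from the OCO literature.

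The version actually invoked, in the proof of Lemma~\ref{lem:costinterval}, has the comparator in $K'$ and the projected point arbitrary. There, $y_{t+1}$ is the projection onto $K_t$ of $x_{t+1}$. The entries of $\nabla_t$ are nonpositive, so $x_{t+1}\ge y_t$ coordinatewise. Hence $x_{t+1}$ typically has total mass exceeding $k$ and lies outside $K_t$. Meanwhile the comparator $z$ lies in $K_{t_a}\subseteq K_t$. So the fact is needed with $x\in K'$ and $z\in K$.

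Under those hypotheses two parts of your proof fail:
\begin{enumerate}
\item Step~1 fails: $\min_{K'}B_g(\cdot\|z)$ is generally positive and $y\neq z$.
\item The pivotal claim in your boundary step, that $y$ is a global minimizer of $h$ over all of $K$, also fails.
\end{enumerate}
What you have proved is a degenerate special case that does not cover the paper's use. You cannot keep $x\in K$ arbitrary once $z\notin K'$. For example, take $g=\frac12\|\cdot\|^2$, $K'$ a line, $z$ off the line, and $x=z$. Then the left side is $\|z-y\|^2>0=B_g(x\|z)$.

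\textbf{The repair.} It is short, and it is the standard proof. The point $y$ minimizes the differentiable function $u\mapsto B_g(u\|z)=h(u)-h(z)$ over the convex set $K'$, and its gradient at $y$ is $\nabla g(y)-\nabla g(z)$. Take any $x\in K'$. The segment from $y$ to $x$ stays in $K'$; this is exactly where $x\in K'$ is used. First-order optimality along that segment gives $\langle \nabla g(y)-\nabla g(z),\, x-y\rangle\ge 0$. Substituting into Fact~\ref{cosines} yields $B_g(x\|y)+B_g(y\|z)-B_g(x\|z)\le 0$.

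So your variational-inequality step is the right tool; it just needs to be run over $K'$ rather than over $K$. Once you do that, you no longer need $B_g(y\|z)=0$, strict convexity, the gradient-equality argument, or the interior/boundary case split.
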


A direct application of the definition of Bregman divergence and~\eqref{eq:gradientphi} yields: 
\begin{claim}\label{cl:relativehypentropy}
Given $x, y \in \field{R}^d,$
$$B_{\phi_{\beta}}(x||y) = \sum_{i=1}^d \left[ x_i \left ( \arcsinh \left(\frac{x_i}{\beta}\right) - \arcsinh\left(\frac{y_i}{\beta}\right)  \right)- \sqrt{x_i^2 + \beta^2} + \sqrt{y_i^2 + \beta^2} \right].$$
\end{claim}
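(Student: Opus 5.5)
The claim to prove is Claim~\ref{cl:relativehypentropy}, the closed form of $B_{\phi_\beta}(x\|y)$. The plan is to substitute the definition of $\phi_\beta$ from Definition~\ref{hyperbolicentropy} and the gradient formula~\eqref{eq:gradientphi} into the Bregman divergence definition,
\[
B_{\phi_\beta}(x\|y) = \phi_\beta(x) - \phi_\beta(y) - \inp{\nabla\phi_\beta(y), x-y}.
\]
Both $\phi_\beta$ and its gradient split coordinatewise, so it is enough to check a single coordinate $i$ and then sum over $i \in [d]$.

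\textbf{Step 1 (confirm the gradient).} I would first verify~\eqref{eq:gradientphi}, since the rest depends on it. Differentiate $x_i\arcsinh(x_i/\beta) - \sqrt{x_i^2+\beta^2}$ with respect to $x_i$. This gives
\[
\arcsinh(x_i/\beta) + \frac{x_i}{\beta}\cdot\frac{1}{\sqrt{1+x_i^2/\beta^2}} - \frac{x_i}{\sqrt{x_i^2+\beta^2}}.
\]
The last two terms are both equal to $x_i/\sqrt{x_i^2+\beta^2}$, so they cancel and the derivative is $\arcsinh(x_i/\beta)$, as claimed.

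\textbf{Step 2 (write out coordinate $i$).} The $i$-th coordinate of the Bregman divergence is
\[
x_i\arcsinh\!\left(\frac{x_i}{\beta}\right) - \sqrt{x_i^2+\beta^2} - y_i\arcsinh\!\left(\frac{y_i}{\beta}\right) + \sqrt{y_i^2+\beta^2} - \arcsinh\!\left(\frac{y_i}{\beta}\right)(x_i - y_i).
\]

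\textbf{Step 3 (cancel).} The term $-y_i\arcsinh(y_i/\beta)$ cancels against the $+y_i\arcsinh(y_i/\beta)$ that comes from expanding the inner-product term. What remains is
\[
x_i\left(\arcsinh\!\left(\frac{x_i}{\beta}\right) - \arcsinh\!\left(\frac{y_i}{\beta}\right)\right) - \sqrt{x_i^2+\beta^2} + \sqrt{y_i^2+\beta^2}.
\]
Summing this over $i \in [d]$ gives exactly the formula in the claim.

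The argument is entirely mechanical. The only step that needs care is the gradient check in Step 1, and in particular getting the signs of the square-root terms right.
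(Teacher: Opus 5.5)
Your proposal is correct and matches the paper's argument. The paper also just substitutes the definition of $\phi_\beta$ and the gradient formula~\eqref{eq:gradientphi} into the Bregman divergence and cancels the $y_i\arcsinh(y_i/\beta)$ terms coordinatewise; your explicit check of~\eqref{eq:gradientphi} (which uses $\beta>0$) is a useful addition, since the paper only asserts it.
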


We now give an upper bound on $B_{\phi_\beta}(x||y)$. 
\begin{lemma} \label{diameter}
Let $x, y \in \Delta_d^k$, assume $\beta < 1$ and $||x||_\infty \leq 1$. Then 
$$B_{\phi_{\beta}}(x||y) \leq k \log(7/\beta). $$
\end{lemma}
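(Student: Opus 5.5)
Starting from \Cref{cl:relativehypentropy}, I will bound the Bregman divergence coordinate by coordinate and then sum, using the two constraints $\|x\|_1=\|y\|_1=k$ and $0\le x_i\le 1$. Write
\[
B_{\phi_\beta}(x\|y)=\sum_{i=1}^d \Big[x_i\arcsinh(x_i/\beta)-\sqrt{x_i^2+\beta^2}\Big]+\sum_{i=1}^d \Big[\sqrt{y_i^2+\beta^2}-x_i\arcsinh(y_i/\beta)\Big].
\]
The term $-x_i\arcsinh(y_i/\beta)$ is nonpositive because $x_i,y_i\ge 0$, so I drop it. In the first sum the term $-\sqrt{x_i^2+\beta^2}\le -x_i$ (it is also $\le -\beta$). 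I keep the $-x_i$ version so that it can partially cancel against $\sqrt{y_i^2+\beta^2}\le y_i+\beta$.

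Next I bound $x_i\arcsinh(x_i/\beta)$. Since $x_i\le 1$ and $\arcsinh(z)=\log(z+\sqrt{z^2+1})$, I get $\arcsinh(x_i/\beta)\le \log\big((1+\sqrt{1+\beta^2})/\beta\big)\le \log(3/\beta)$ for $\beta<1$. Summing over $i$ and using $\sum_i x_i=k$ gives $\sum_i x_i\arcsinh(x_i/\beta)\le k\log(3/\beta)$. What remains is
\[
\sum_i\big(\sqrt{y_i^2+\beta^2}-\sqrt{x_i^2+\beta^2}\big)\le \sum_i (y_i+\beta) - \sum_i x_i = \beta d.
\]

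This leaves the \emph{main obstacle}: the additive $\beta d$ does not look bounded by $O(k)$ for general $\beta$. There are two ways to handle it. In the intended use $\beta=1/d_t$, so $\beta d=1\le k$. That gives $k\log(3/\beta)+1\le k\log(3e/\beta)\le k\log(7/\beta)$ because $3e<7$, which is where the constant $7$ comes from.

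If the statement has to hold for arbitrary $\beta<1$ and $d$, I will instead pair the terms more carefully. I use $\sqrt{y_i^2+\beta^2}-\sqrt{x_i^2+\beta^2}\le |y_i-x_i|$ on the coordinates where $x_i>0$ or $y_i>0$, and the remaining coordinates, where both are zero, cancel exactly. This gives a contribution of at most $\|x-y\|_1\le 2k$, and $k\log(3/\beta)+2k=k\log(3e^2/\beta)$. That overshoots $7$ slightly, so to reach exactly $k\log(7/\beta)$ I plan to combine the $-x_i$ saving with $\sqrt{y_i^2+\beta^2}\le y_i+\beta$ only on the support of $y$. Whichever route works, the proof closes by checking the numerical constant.
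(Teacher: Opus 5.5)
Your skeleton (drop $-x_i\arcsinh(y_i/\beta)\le 0$, bound $\arcsinh(x_i/\beta)$ using $x_i\le 1$, then sum using $\sum_i x_i=k$) is the same as the paper's. As written, however, the proof does not close, for two concrete reasons.

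First, the ``main obstacle'' $\beta d$ is self-inflicted. You chose $-\sqrt{x_i^2+\beta^2}\le -x_i$. The bound $-\sqrt{x_i^2+\beta^2}\le -\beta$, which you mention and then discard, exactly cancels the $+\beta$ from $\sqrt{y_i^2+\beta^2}\le y_i+\beta$. That gives $\sum_i\bigl(\sqrt{y_i^2+\beta^2}-\sqrt{x_i^2+\beta^2}\bigr)\le \sum_i y_i=k$, with no dependence on $d$, and this is exactly what the paper does.

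Neither of your routes replaces this step. Route 1 adds the hypothesis $\beta d\le 1$, which is not part of the lemma. It happens to hold in the application with $\beta=1/d_t$, but that is not what you are asked to prove. Route 2's refinement (use $y_i+\beta$ only on $S=\mathrm{supp}(y)$ together with the $-x_i$ saving) still leaves $k+\beta|S|-\sum_{i\in S}x_i$. This equals $\beta d$ whenever $y$ has full support, e.g.\ $y_i=k/d$ for all $i$. Your Lipschitz bound via $\|x-y\|_1\le 2k$ ends at a constant far above $7$.

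Second, the numerical check is wrong: $3e\approx 8.15>7$. You must not round $1+\sqrt{1+\beta^2}$ up to $3$. Keep $\arcsinh(x_i/\beta)\le\log\bigl((1+\sqrt2)/\beta\bigr)$, so the total is $k+k\log\bigl((1+\sqrt2)/\beta\bigr)=k\log\bigl(e(1+\sqrt2)/\beta\bigr)\le k\log(7/\beta)$, since $e(1+\sqrt2)\approx 6.56$. With the constant $3$, even the correct residual $k$ gives $k\log(3e/\beta)$, which exceeds the target. Your Route 1 bound $k\log(3/\beta)+1$ also exceeds $k\log(7/\beta)$ when $k=1$. You need both fixes: the $-\beta$ choice and the sharper constant.
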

\begin{proof} Consider vectors $x$ and $y$ as in the statement of the Lemma. Using Claim~\ref{cl:relativehypentropy}, we get
\begin{align*}B_{\phi_{\beta}}(x ||y ) &= \sum_{i=1}^d  \left[x_i \left( \arcsinh \left(\frac{x_i}{\beta}\right) - \arcsinh\left(\frac{y_i}{\beta}\right)  \right)- \sqrt{x_i^2 + \beta^2} + \sqrt{y_i^2 + \beta^2} \right] \\
& \leq \sum_{i=1}^d \left[x_i  \arcsinh \left(\frac{x_i}{\beta}\right) - \beta + (y_i + \beta) \right]\\
& =k + \sum_{i=1}^d x_i \log \left(\frac{x_i + \sqrt{x_i^2 + \beta^2}}{\beta} \right)  \leq k + \sum_{i=1}^d x_i \log \left(\frac{1 + \sqrt{2}}{\beta} \right)  \\
& \leq  k +  k \log \left(\frac{1 + \sqrt{2}}{\beta} \right)   \leq   k \log\left(\frac{7}{\beta} \right), 
\end{align*}
where we have used the facts that $x_i \leq 1$ and $\beta < 1$.  
\end{proof}
\subsection{Algorithm}
Let us now recall the fractional online algorithm that we described in Section \ref{sec:fractional}. At each time $t$, a weighted subset of points $R_t$ arrives. At time $t=0$, we initialize $y_0$ (which is a $k$-dimensional vector) to any $k$ arbitrary points in $R_0$.  At step $t$, we maintain a fractional solution $y_t$ on the points $R_{<t}$. To go from $y_t$ to $y_{t+1}$, we perform one step of mirror descent based on the $\beta$-hyperbolic entropy regularizer defined earlier. Before taking the mirror descent step, for points in $R_t$ but not in $R_{<t}$, we set their fractional values to $0$. This gives the fractional solution $y_{t+1}$ for the next iteration.

\algonlinekmed*



   
        
    



\subsection{Analysis} In this section, we analyze the algorithm. Recall that the vector $y_t$ lies in $\field{R}^{d_{t-1}}$. But we shall often consider it to lie in $\field{R}^{d_t}$ as well by setting $(y_t)_i$ to $0$ for all $i \in [d_t] \setminus [d_{t-1}].$

\begin{claim}\label{subgrad_id}
For each $t \in [T]$, 
$$ \nabla_t = \frac{1}{\eta_t} \left( \nabla \phi_{1/d_t}(y_t)  - \nabla\phi_{1/d_t}(x_{t+1})\right).  $$
\end{claim}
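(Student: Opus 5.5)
The identity follows directly from the gradient formula \eqref{eq:gradientphi} together with the definition of $x_{t+1}$ in the Update Step of Algorithm 1. It is a coordinatewise check, so the plan is to fix $i \in [d_t]$ and compute both gradients of $\phi_{1/d_t}$ explicitly.

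First, with $\beta = 1/d_t$, \eqref{eq:gradientphi} gives
\[
(\nabla \phi_{1/d_t}(y_t))_i = \arcsinh(d_t (y_t)_i),
\]
where $y_t$ is viewed in $\field{R}^{d_t}$ by padding with zeros. For a new coordinate $i \in [d_t]\setminus[d_{t-1}]$ this reads $\arcsinh(0)=0$. This is consistent with the convention in the Update Step that $(y_t)_i=0$ for such coordinates.

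Second, the Update Step defines $(x_{t+1})_i = \sinh\big(\arcsinh(d_t (y_t)_i) - \eta_t (\nabla_t)_i\big)/d_t$. I read the parenthesization as the standard mirror step, so that the subtraction of $\eta_t(\nabla_t)_i$ happens inside the $\sinh$ but outside the $\arcsinh$. Since $\arcsinh$ is the inverse of $\sinh$ on all of $\field{R}$, applying \eqref{eq:gradientphi} again gives
\[
(\nabla\phi_{1/d_t}(x_{t+1}))_i = \arcsinh(d_t (x_{t+1})_i) = \arcsinh(d_t (y_t)_i) - \eta_t (\nabla_t)_i .
\]

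Third, subtracting the two displays coordinatewise gives $\eta_t(\nabla_t)_i$. Dividing by $\eta_t$ then yields the claim, which is legitimate because $\eta_t = 1/(G_t\sqrt t) > 0$.

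The only points needing care are the following.
\begin{enumerate}
\item Division by $\eta_t$ requires $G_t \neq 0$. If $\nabla_t$ were identically zero, one can adopt any finite $\eta_t$ convention and the identity holds trivially.
\item The Update Step must be read with the intended parenthesization, as above.
\item The identity must hold in $\field{R}^{d_t}$ even though $\phi$ is applied to unconstrained vectors. This is fine because $\phi_\beta$ and its gradient are defined on all of $\field{R}^d$, and $\sinh$ and $\arcsinh$ are bijections of $\field{R}$, so no domain issue arises.
\end{enumerate}
There is no real obstacle beyond these.
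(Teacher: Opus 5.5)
Your proposal is correct and is essentially the paper's own proof: compute $(\nabla\phi_{1/d_t}(y_t))_i - (\nabla\phi_{1/d_t}(x_{t+1}))_i = \arcsinh(d_t(y_t)_i) - \arcsinh(d_t(x_{t+1})_i)$ coordinatewise via \eqref{eq:gradientphi}, then use the definition of $x_{t+1}$ and the fact that $\arcsinh$ inverts $\sinh$ to get $\eta_t(\nabla_t)_i$. Your reading of the Update Step's parenthesization matches what the paper's proof uses. Your remark about $G_t = 0$ is a harmless extra that the paper does not address.
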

\begin{proof}
Using~\eqref{eq:gradientphi} and the definition of $x_{t+1}$, we have that for any $i \in [d_t]$, 
\begin{align*}  (\nabla \phi_{1/d_t}(y_t))_i  - (\nabla \phi_{1/d_t}(x_{t+1}))_i  &= \arcsinh(d_t(y_t)_i) - \arcsinh(d_t (x_{t+1})_i)  \\
& =  \arcsinh(d_t (y_t)_i - \left( \arcsinh(d_t (y_t)_i - \eta_t (\nabla_t)_i\right) 
 = \eta_t (\nabla_t)_i.\qedhere
\end{align*}\end{proof}

Using the above claim, we bound the difference between  $B_{\phi_{1/d_t}}(y_t||x_{t+1})$ and $B_{\phi_{1/d_t}}(y_{t+1}||x_{t+1})$. 
\begin{claim}
    \label{cl:Bregchange}
    For any time $t$, 
    $$B_{\phi_{1/d_t}}(y_t||x_{t+1}) - B_{\phi_{1/d_t}}(y_{t+1}||x_{t+1}) \leq 
    k\eta_t^2 ||\nabla_t||_{\infty}^2.$$
\end{claim}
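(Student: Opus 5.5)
The plan is the standard one-step online mirror descent estimate: rewrite the difference of the two divergences as a linear term in $\nabla_t$ minus a Bregman term, then trade one against the other using strong convexity. Throughout, $y_t$ is viewed in $\field{R}^{d_t}$ by padding with zeros, as stated at the start of the analysis. Padding does not change $\|y_t\|_1 = k$ or the box constraints, so both $y_t$ and $y_{t+1}$ lie in $K_t \subseteq \Delta_{d_t}^k$. Write $\phi := \phi_{1/d_t}$.

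\textbf{Step 1 (exact identity).} Expand both divergences from the definition. The $\phi(x_{t+1})$ terms cancel, which gives
\[
B_\phi(y_t\|x_{t+1}) - B_\phi(y_{t+1}\|x_{t+1}) = \phi(y_t) - \phi(y_{t+1}) - \inp{\nabla\phi(x_{t+1}),\, y_t - y_{t+1}}.
\]
Add and subtract $\inp{\nabla\phi(y_t), y_t - y_{t+1}}$. The expression becomes
\[
-B_\phi(y_{t+1}\|y_t) + \inp{\nabla\phi(y_t) - \nabla\phi(x_{t+1}),\, y_t - y_{t+1}}.
\]
This is also the law of cosines (Fact~\ref{cosines}) applied to the three points $y_{t+1}, y_t, x_{t+1}$. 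By Claim~\ref{subgrad_id}, the inner product equals $\eta_t\inp{\nabla_t, y_t - y_{t+1}}$.

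\textbf{Step 2 (strong convexity and H\"older).} Both $y_t$ and $y_{t+1}$ lie in $\Delta_{d_t}^k$, so Corollary~\ref{hyperbolicstrongconvexity} gives
\[
B_\phi(y_{t+1}\|y_t) \ge \tfrac{1}{2(k+1)}\|y_{t+1}-y_t\|_1^2.
\]
H\"older's inequality gives
\[
\eta_t\inp{\nabla_t, y_t-y_{t+1}} \le \eta_t\|\nabla_t\|_\infty\, u, \qquad u := \|y_t-y_{t+1}\|_1 .
\]
The difference is therefore at most $\eta_t\|\nabla_t\|_\infty u - \frac{u^2}{2(k+1)}$. Maximizing this quadratic over $u \ge 0$ gives the bound $\frac{k+1}{2}\eta_t^2\|\nabla_t\|_\infty^2$. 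Since $\frac{k+1}{2} \le k$ for $k \ge 1$, the claim follows.

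\textbf{Main obstacle.} The delicate point is that strong convexity of $\phi_{1/d_t}$ is only available on $\Delta_{d_t}^k$. The unprojected point $x_{t+1}$ generally does not lie in this set: the gradient is nonpositive, so the update only increases coordinates. The proof must therefore never invoke strong convexity at $x_{t+1}$. The decomposition in Step 1 handles this, because $x_{t+1}$ enters only through $\nabla\phi(x_{t+1})$, and Claim~\ref{subgrad_id} converts that term into $\eta_t\nabla_t$ exactly. The remaining check is that the padded $y_t$ genuinely lies in the same simplex as $y_{t+1}$ with the same parameter $\beta = 1/d_t$. This holds because the regularizer used for the step is $\phi_{1/d_t}$ for both points.
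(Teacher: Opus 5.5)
Your proof is correct and follows the paper's argument essentially line for line. You cancel $\phi(x_{t+1})$, add and subtract $\inp{\nabla\phi(y_t), y_t - y_{t+1}}$ to expose $-B_\phi(y_{t+1}\|y_t)$, and bound that term by Corollary~\ref{hyperbolicstrongconvexity}; you then turn the remaining inner product into $\eta_t\inp{\nabla_t, y_t-y_{t+1}}$ via Claim~\ref{subgrad_id} and maximize the resulting quadratic after H\"older. Your two extra remarks, that the maximum is $\frac{k+1}{2}\eta_t^2\|\nabla_t\|_\infty^2 \le k\eta_t^2\|\nabla_t\|_\infty^2$ and that strong convexity is never needed at the unprojected point $x_{t+1}$, make explicit what the paper leaves implicit.
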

\begin{proof}
    Using the using the definition of Bregman Divergence,  we see that:
\begin{align*}
& B_{\phi_{1/d_t}}(y_t||x_{t+1}) - B_{\phi_{1/d_t}}(y_{t+1}||x_{t+1}) \\
 & = \phi_{1/d_t}(y_t) - \phi_{1/d_t}(x_{t+1}) - \phi_{1/d_t}(y_{t+1}) + \phi_{1/d_t}(x_{t+1})  - \inp{\nabla \phi_{1/d_t} (x_{t+1}) , y_t - x_{t+1} + x_{t+1} - y_{t+1}} \\
&  = \phi_{1/d_t}(y_t)  - \phi_{1/d_t}(y_{t+1}) -  \inp{\nabla \phi_{1/d_t} (x_{t+1}) , y_t - y_{t+1}}\\
&= \phi_{1/d_t}(y_t)  - \phi_{1/d_t}(y_{t+1}) - \inp{\nabla \phi_{1/d_t} (y_{t}) , y_t - y_{t+1}} +  \inp{\nabla \phi_{1/d_t} (y_{t}) , y_t - y_{t+1}} - \inp{\nabla \phi_{1/d_t} (x_{t+1}) , y_t - y_{t+1}}\\
& \stackrel{\text{\small{Cor.~\ref{hyperbolicstrongconvexity}}}}{\leq}  -\frac{1}{2(k+1)} ||y_t-y_{t+1}||_1^2  + \inp{\nabla \phi_{1/d_t} (y_{t}) - \nabla \phi_{1/d_t} (x_{t+1}) , y_t - y_{t+1}} \\
& \stackrel{\text{\small{Cl.~\ref{subgrad_id}}}}{=}   -\frac{1}{2(k+1)} ||y_t-y_{t+1}||_1^2  + \eta_t \inp{\nabla_t,y_t-y_{t+1}}  \\
& \stackrel{\text{\small{Holder's ineq.}}}{\leq} -\frac{1}{2(k+1)} ||y_t-y_{t+1}||_1^2  + \eta_t ||\nabla_t||_{\infty} ||y_t -y_{t+1}||_1. 
\end{align*}
The desired result now follows from the fact that the maximum value of the function $f(z) := -\frac{1}{2(k+1)} z^2  + \eta_t ||\nabla_t||_{\infty} z $ is at most $k\eta_t^2 ||\nabla_t||_{\infty}^2.$
\end{proof}

We now analyze the performance of our algorithm with respect to an arbitrary integral solution.
\begin{claim}\label{subgradient} Let $y \in K_{t}$ be an integral vector. Then, 
$$ \Cost_t(y) \geq \Cost_t(y_t) + \inp{\nabla_t, y - y_t}.$$

\end{claim}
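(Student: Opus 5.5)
Since both $\Cost_t$ and the inner product split over the clients $x \in R_t$ with weights $w_x \ge 0$, the plan is to prove the inequality one client at a time and then sum. Fix $x \in R_t$. Let $\alpha^x$ be the fractional assignment from Algorithm 1, so $D(y_t,x)=\sum_i \alpha^x_i d(v_i,x)$, and write $M = M^{(x)}$. Define the per-client vector
\[
g^x_i := -\bigl(M - \min(M, d(x,v_i))\bigr) \le 0,
\]
so that $\nabla_t = \sum_{x} w_x g^x$. It then suffices to show, for every integral $y\in K_t$,
\[
D(y,x) \;\ge\; D(y_t,x) + \inp{g^x, y - y_t}.
\]
Here we view $y_t$ in $\field{R}^{d_t}$ with zeros on the new coordinates.

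\textbf{Evaluating the right-hand side.} First I would compute $\inp{g^x, y_t}$. Because $\alpha^x$ is an optimal fractional assignment, it fills the closest points first. So every $i$ with $d(v_i,x) < M$ has $\alpha^x_i = (y_t)_i$. Points with $d(v_i,x) \ge M$ contribute $0$ to $g^x$. Hence
\[
\inp{g^x,y_t} = -\sum_{i:\, d_i<M} (y_t)_i (M-d_i) = -\sum_i \alpha^x_i (M-d_i) = -(M - D(y_t,x)),
\]
where $d_i = d(v_i,x)$ and we used $\sum_i \alpha^x_i = 1$ together with $\alpha^x_i = 0$ whenever $d_i > M$ (terms with $d_i = M$ vanish anyway). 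The right-hand side therefore becomes
\[
D(y_t,x) + M - D(y_t,x) + \inp{g^x,y} = M - \sum_{i\in S}(M-\min(M,d_i)),
\]
where $S$ is the support of $y$. This set has exactly $k$ points, each with coefficient $1$.

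\textbf{Comparing with $D(y,x)$.} Let $j\in S$ be the closest center of $y$ to $x$, so $D(y,x)=d_j$. The sum over $S$ is a sum of nonnegative terms, so the right-hand side is at most $M - (M - \min(M,d_j)) = \min(M,d_j) \le d_j = D(y,x)$. This proves the per-client inequality, and multiplying by $w_x$ and summing over $x\in R_t$ finishes the claim.

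\textbf{Main obstacle.} The delicate step is the identity $\inp{g^x,y_t} = -(M - D(y_t,x))$. It needs the greedy structure of the optimal fractional assignment, namely that every point strictly closer than $M$ is saturated. It also needs the convention that the new coordinates of $y_t$ are zero, so they contribute nothing. Ties at distance exactly $M$ are harmless because their coefficient in $g^x$ is $0$. Integrality of $y$ is used only to make every coefficient in the sum over $S$ equal to $1$, so that the single term for $j$ already gives the bound.
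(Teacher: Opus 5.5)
Your proof is correct and follows essentially the same route as the paper. You decompose over clients, use that the optimal fractional assignment saturates every point strictly closer than $M^{(x)}$ to reduce the right-hand side to $M^{(x)} - \sum_{i} \bigl(M^{(x)}-\min(M^{(x)},d(x,v_i))\bigr) y_i$, and compare with the closest center of $y$. Your last step keeps only the closest center's nonnegative term and uses $\min(M^{(x)},d_j)\le d_j$; this replaces the paper's two-case split ($i_0 \ge i^*$ versus $i_0 < i^*$) and handles ties at distance exactly $M^{(x)}$ more cleanly.
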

\begin{proof}
Using the definition of $\nabla_t$, it suffices to show that for each $x \in R_t$: 
\begin{align}
    \label{eq:xtshow}
     D(x,y) \geq D(x,y_t) - \sum_i (M^{(x)}-\min(M^{(x)},d(x,v_i))(y_i-(y_t)_i),
\end{align}
where $M^{(x)}$ and $v_i$ are as defined in Algorithm $1$.
For ease of notation, relabel the points in $R_{\leq {t}}$ arranged in increasing order of distance from $x$, i.e., $d(x,v_1) \leq d(x, v_2) \leq \ldots \leq d(x,v_{d_t}).$ Let $i^*$ be the smallest index such that $\sum_{i \leq i^*} (y_t)_i \geq 1$. Observe that $M^{(x)} = d(x, v_{i^*})$ and $\alpha_i^x = (y_t)_i$ for all $i < i^*$ and is $0$ for all $i > i^*$. Since $y$ is an integral vector $D(y,x) = d(v_{i_0},x)$ for some $i_0 \in [d_t].$ Thus,~\eqref{eq:xtshow} is equivalent to showing:
\begin{align}
     d(x,v_{i_0}) & \geq \sum_{i \leq  i^*} \alpha^x_i d(x,v_i)  - \sum_{i < i^*} (d(x,v_{i^*})-d(x,v_i))(y_i-(y_t)_i) \notag \\
     & = \sum_{i <  i^*} (y_t)_i d(x,v_i) + \alpha^x_{i^*} d(x,v_{i^*})  - \sum_{i < i^*} (d(x,v_{i^*})-d(x,v_i))(y_i-(y_t)_i)\notag \\
     & = \alpha^x_{i^*} d(x,v_{i^*}) + d(x,v_{i^*}) \sum_{i < i^*} (y_t)_i  - \sum_{i < i^*} (d(x,v_{i^*})-d(x,v_i)) y_i \notag \\
     & = d(x,v_{i^*}) - \sum_{i < i^*} (d(x,v_{i^*})-d(x,v_i)) y_i
      \label{eq:xtshow1}
\end{align}
Two cases arise: (i) $i_0  \geq i^*$, or (ii) $i_0 < i^*$. In the first case, $y_i = 0$ for all $i < i^*$ and hence,~\eqref{eq:xtshow1} follows easily because $d(x,v_{i_0}) \geq d(x,v_{i^*}).$ In the second case, the r.h.s. of~\eqref{eq:xtshow1} can be expressed as 
\begin{align*}
   &  d(x,v_{i^*}) - (d(x,v_{i^*})-d(x,v_{i_0})) - \sum_{i_0 < i < i^*} (d(x,v_{i^*})-d(x,v_i)) y_i  \\
   & \quad = d(x,v_{i_0}) - \sum_{i_0 < i < i^*} (d(x,v_{i^*})-d(x,v_i)) y_i \leq d(x,v_{i_0}). 
\end{align*}
This proves the desired result. \end{proof}

\noindent
{\bf Phases.} We shall divide the timeline into phases, and shall bound the cost incurred by the algorithm in a phase with respect to an off-line solution that only uses points revealed till the beginning of the phase. The following result bounds the cost incurred during a time interval: 

\begin{replemma}{lem:costinterval}
    Consider a time interval $I := [t_a, t_b] \subseteq [1,T]$, and let $z$ be an integral vector in $K_{t_a}$. Then, 
    $$\sum_{t \in I} (\Cost_t(y_t)-\Cost_t(z)) \leq 3(k+1)k \Delta \sqrt{T} \log(7Tk). 
$$
\end{replemma}

\begin{proof}
Consider a time $t  \in I$. 
Using Claim~\ref{subgradient}, we get:

$$ \Cost_t(y_t) - \Cost_t(z) \leq \inp{\nabla_t,y_t -z}.  $$
Summing this over all $t \in I$, and using Claim~\ref{subgrad_id}, we see that $\sum_{t \in I} (\Cost_t(y_t) - \Cost_t(z) )$ is at most
\begin{align*}
    & \sum_{t \in I} \frac{1}{\eta_t}   \inp {\nabla  \phi_{1/d_t}(y_t) - \nabla \phi_{1/d_t}(x_{t+1})  , y_t - z} \\
    &\stackrel{\text{\small{Fact~\ref{cosines}}}}{=} \sum_{t \in I} \frac{1}{\eta_t} \left(B_{\phi_{1/d_t}}(z||y_t) + B_{\phi_{1/d_t}}(y_t||x_{t+1}) - B_{\phi_{1/d_t}}(z||x_{t+1})\right) \\
    & \stackrel{\text{\small{Fact~\ref{pythag}}}}{=} \sum_{t \in I}  \frac{1}{\eta_t} (B_{\phi_{1/d_t}}(z||y_t) + B_{\phi_{1/d_t}}(y_t||x_{t+1}) - B_{\phi_{1/d_t}}(z||y_{t+1}) -B_{\phi_{1/d_t}}(y_{t+1}||x_{t+1}) ) \\
    & \leq \frac{1}{\eta_{t_a}} B_{\phi_{1/d_{t_a}}}(z||y_{t_a}) + \sum_{t = t_a+1}^{t_b} B_{\phi_{1/d_t}}(z||y_t) \left( \frac{1}{\eta_t} - \frac{1}{\eta_{t-1}} \right) + \\
    & \quad + \sum_{t \in I}  \frac{1}{\eta_t} \left(  B_{\phi_{1/d_t}}(y_t||x_{t+1})-B_{\phi_{1/d_t}}(y_{t+1}||x_{t+1}) \right).
\end{align*}



  Using Lemma~\ref{diameter}
and Claim~\ref{cl:Bregchange}, the above expression is at most  (note that $d_t$ is a non-decreasing function of $t$)
\begin{align} \label{eq:intermediate_frac}
& \frac{k \log(7d_{t_a})}{\eta_{t_{a}}} 
+ k \log(7d_{t_b}) \cdot \sum_{t=t_a+1}^{t_b}\left(\frac{1}{\eta_t} - \frac{1}{\eta_{t-1}}\right) + \sum_{t \in I} k \eta_t  ||\nabla_t||_{\infty}^2  \nonumber\\
& \quad \leq \frac{k \log(7d_{t_b})}{\eta_{t_{b}}}  + \sum_{t \in I} k \eta_t  ||\nabla_t||_{\infty}^2. 
\end{align}
Observe that   $||\nabla_t||_{\infty} \leq (k+1)\Delta$, and thus $G_t \leq (k+1) \Delta$. Indeed, by definition of $\nabla_t$, 
\begin{align}\label{subgradbound}|(\nabla_t)_i| \leq \sum_{x \in R_t} w_x|M^{(x)} -d(x,v_i)| \leq \sum_{x \in R_t} w_x \Delta \leq (k+1) \Delta, \end{align}
because the total weight of the points in $R_t$ is at most  $(k+1)$. Using this observation and the fact that $\eta_t = \frac{1}{G_t\sqrt{t}}$, \eqref{eq:intermediate_frac} is at most
$$  (k+1) k \Delta \sqrt{T}\log (7d_T)  +   \sum_{t \in I}  \frac{(k+1)k\Delta }{\sqrt{t}}.$$ The desired result now follows from the fact that $d_T \leq Tk$.
\end{proof}

We now define the phases. Let $y^* \in K_T$ be an integral solution defined by a set $C$ of $k$ centers. Let $V := R_1 \cup R_2 \ldots \cup R_T$ denote the set of points that arrive over the last $T$ timesteps. Let the centers in $C$ be $c^{(1)}, \ldots, c^{(k)}$. The set $C$ partitions $V$ into $k$ subsets -- let $V^{(i)}$ be the subset of $V$ for which the closest center is $c^{(i)}$ (we break ties arbitrarily). 
For a non-negative integer $j$, let $j_i$ be the smallest time $t$ such that the total weight of the points in $V^{(i)} \cap( R_{1} \cup \ldots R_t ) $ (i.e., points in $V^{(i)}$ arriving by time $t$)
exceeds $k \cdot 2^j$. Let $w(V^{(i)})$ denote the total weight of the points in $V^{(i)}$. Define the set $P:= \{j_i : i \in [k], j \in [\log(w(V^{(i)}))]\} \subseteq [1,T]$. Observe that 
\begin{align}
    \label{eq:numphases}
    |P| = \sum_{i \in [k]} \log (w(V^{(i)})/k) \leq k \log T,
\end{align}
because at each time $t$, the total weight of the arriving points is at most $(k+1)$. We shall treat the indices in $P$ as subset of $[1,T]$, and hence, these partition the timeline into $|P|+1$ phases. We shall apply Lemma~\ref{lem:costinterval} for each of these phases. 

In order to apply this lemma, we need to specify a candidate solution  $z_p$ for each of these phases $p$. We need some more definitions to describe this integral solution.  For each $i \in [k]$ and index $j$, define $V^{(i)}_j := V^{(i)} \cap (R_{\leq j_i} \setminus R_{\leq j_{i-1}})$, i.e., 
the points in $V^{(i)}$ whose corresponding arrival time lies in the range $(j_{i-1},j_i]$.  We shall often refer to the subset $V^{(i)}_j$ as a {\em bucket} of $V^{(i)}$. 

For each of the set of points $V^{(i)}_j$, let $x^{(i)}_j$ denote the optimal $1$-median solution where the center is restricted to lie in $V^{(i)}_j$ only, i.e., 
$$x^{(i)}_j := \argmin_{x \in V^{(i)}_j} w_x \sum_{v \in V^{(i)}_j} d(x,v). $$
For a phase $p$ and index $i \in [k]$, the time slots in $p$ are contained in a single  bucket $V^{(i)}_j$ of $V^{(i)}$. Let $\ell^{(i)}_p$ denote the index corresponding to $x_{j-1}^{(i)}$ in $K_{T}$, i.e., the optimal $1$-median solution of the last bucket in $V^{(i)}$ that does not intersect this phase. We are now ready to define the candidate integral solution $z^{(p)}$ for a phase $p$. 

Fix a phase $p$ and let $s_p$ denote the start time of this phase. Define an integral vector $z^{(p)} \in K_{s_p}$ as follows: for each $i \in [k]$, we set $z^{(p)}_j = 1$ where $j = \ell_p^{(i)}$; all other coordinates of $z^{(p)}$ are $0$ -- in case there is no bucket in $V^{(i)}$ ending before time $s_p$, we set $z^{(p)}$ to be an arbitrary point. In other words, this solution selects $k$ points, namely, the $1$-median center of the last bucket in $V^{(i)}$ that ends before $s_p$. This completes the description of the vector $z^{(p)}$ corresponding to a phase $p$. Lemma~\ref{lem:costinterval} shows that the total cost incurred by the algorithm during a phase $p$ is close to that incurred by the fixed solution $z^{(p)}$. Thus, it remains to show that the total cost incurred by $z^{(p)}$ during a phase is close to that incurred by $y^*$. 

\begin{lemma}
\label{buckets}
Consider any $i \in [k]$ and  index $j \geq 1$. Then, 
$$  \sum_{x \in V^{(i)}_j} w_x d(x^{(i)}_{j-1},x)   \leq  42 \cdot \opt_1(V^{(i)}_{j-1} \cup V^{(i)}_{j}).  $$
\end{lemma}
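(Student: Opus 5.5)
Let $A := V^{(i)}_{j-1}$, $B := V^{(i)}_j$, $x^* := x^{(i)}_{j-1}$ (the best center inside $A$), and let $y^*$ be an optimal 1-median of $A\cup B$. The plan is to route each point $x\in B$ to $x^*$ through $y^*$:
\[
\sum_{x\in B} w_x d(x^*,x) \le \sum_{x\in B} w_x d(x,y^*) + w(B)\, d(y^*,x^*).
\]
The first term is at most $\opt_1(A\cup B)$. Everything then reduces to bounding $w(B)\,d(y^*,x^*)$ by a constant times $\opt_1(A\cup B)$.

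For this we need $w(A)$ and $w(B)$ to be comparable, which is where the doubling definition of the phases enters. Bucket $A$ collects the weight of $V^{(i)}$ arriving between crossing thresholds $k2^{j-2}$ and $k2^{j-1}$, and bucket $B$ between $k2^{j-1}$ and $k2^{j}$. Each step adds total weight at most $k+1$, so the overshoot past a threshold is at most $k+1$. Hence $w(B)\le k2^{j-1}+(k+1)$ and $w(A)\ge k2^{j-2}-(k+1)$ roughly. This gives $w(B)\le c\,w(A)$ for an absolute constant $c$ once $j$ is large enough for the overshoot to be absorbed. The boundary cases (small $j$, and the first bucket, where $A$ starts at time $1$) need separate care, and I would resolve them by reading the bucket thresholds so that $w(A)\ge k$ and $w(B)\le 2k2^{j-1}+k+1$. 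This bookkeeping is the step I expect to be most delicate, because the final constant $42$ depends on it.

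Given $w(B)\le c\,w(A)$, I would bound $d(y^*,x^*)$ by averaging over $A$. For every $a\in A$, $d(y^*,x^*)\le d(y^*,a)+d(a,x^*)$, so
\[
w(A)\,d(y^*,x^*) \le \sum_{a\in A} w_a d(a,y^*) + \sum_{a\in A} w_a d(a,x^*).
\]
The first sum is at most $\opt_1(A\cup B)$. For the second, \Cref{pointsoutside} (with $k=1$), applied to $A$ with centers restricted to $A$, gives $\sum_{a\in A} w_a d(a,x^*) \le 2\opt_1(A) \le 2\opt_1(A\cup B)$, where the last step uses monotonicity of $\opt_1$ under adding points. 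Therefore $w(B)\,d(y^*,x^*)\le 3c\cdot \opt_1(A\cup B)$. Equivalently, this follows from \Cref{ub} with $v=y^*$ and $X=A$, which gives $d(y^*,x^*)\le 3\sum_{a\in A}\frac{w_a}{w(A)}d(y^*,a)$.

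Combining the two parts yields the bound $(1+3c)\,\opt_1(A\cup B)$. With the weight ratio worked out carefully (I expect $c\le 13$ after accounting for the $k+1$ overshoot against thresholds that are multiples of $k$), the constant comes out at most $42$. I would first derive the weight ratio cleanly and then check the constant.
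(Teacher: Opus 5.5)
Your main inequality is sound, and it takes a different route from the paper. The paper applies Corollary~\ref{ub} pointwise, with $v=x$ for each $x\in V^{(i)}_j$. This gives the cross sum $\frac{3}{w(A)}\sum_{x\in B}\sum_{x'\in A}w_xw_{x'}d(x,x')$. The paper then replaces $1/w(A)$ by $7/w(A\cup B)$ and bounds the resulting average pairwise distance by $2\opt_1(A\cup B)$ via Fact~\ref{optappxrandom}; this is where $42=3\cdot 7\cdot 2$ comes from. You instead route through the optimal center $y^*$ of $A\cup B$ and invoke Corollary~\ref{ub} only once, at $v=y^*$. This avoids the pairwise-distance fact and gives the sharper factor $1+3\,w(B)/w(A)$, which would be $19$ under the paper's ratio $w(A\cup B)\le 7\,w(A)$. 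Both arguments, however, depend entirely on a lower bound for $w(A)$ in terms of $w(B)$.

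That is where your proposal has a genuine gap, and more careful bookkeeping cannot close it. Write $W(t)$ for the cumulative weight of $V^{(i)}$ by time $t$. The definitions only give $w(V^{(i)}_{j-1})=W((j-1)_i)-W((j-2)_i)>k2^{j-2}-(k+1)$, which is vacuous at $j=2$. Your reading ``$w(A)\ge k$'' holds for $A=V^{(i)}_0$ (since $W(0_i)>k$) but not for $A=V^{(i)}_1$.

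In fact the lemma is false at $j=2$. Let the arrival that first lifts $W$ above $k$ bring it to $2k-\varepsilon$. Let the next arrival be a single point $a$ of weight $2\varepsilon$. Put all further weight of $V^{(i)}$, up to the threshold $4k$, on a single point $b$. Then $V^{(i)}_1=\{a\}$, so $x^{(i)}_1=a$, and the left-hand side is at least $(2k-\varepsilon)\,d(a,b)$, whereas $\opt_1(V^{(i)}_1\cup V^{(i)}_2)\le 2\varepsilon\,d(a,b)$. A single arrival of weight close to $k+1$ can even jump over both $k$ and $2k$, leaving $V^{(i)}_1$ empty. For $k=1$ the same failure occurs at $j=3$. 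The paper's own proof passes over this point by using $w(V^{(i)}_{j-1})\ge k\cdot 2^{j-1}$, which the definitions do not support.

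What your argument does establish is the inequality in two ranges. For $j=1$, we have $w(V^{(i)}_0)>k$ and $w(V^{(i)}_1)<2k+1$. For every $j\ge 4$, we have $k+1\le k2^{j-3}$, so $w(A)>k2^{j-3}$ and $w(B)<k2^{j-1}+k+1\le 5k2^{j-3}$; this gives $c=5$ and the constant $16$. The buckets $V^{(i)}_2$ and $V^{(i)}_3$ have weight $O(k)$. In Lemma~\ref{lem:benchmark} they should simply be charged $\Delta$ per unit weight, as is already done for $V^{(i)}_0$, or the bucket definition should be changed. This only changes the constant in the $O(k^2\Delta)$ additive term.
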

\begin{proof}
    Consider a point $x \in V^{(i)}_j$. Using Corollary \ref{ub}, we obtain
$$  d(x^{(i)}_{j-1}, x) \leq 3 \sum_{x' \in V^{(i)}_{j-1}} \frac{w_{x'}}{w(V^{(i)}_{j-1})} d(x,x')    \leq  3\sum_{x' \in V^{(i)}_{j-1}} \frac{w_{x'}}{k \cdot 2^{j-1}} d(x,x') ,   $$
where the last inequality follows from the fact that $w(V^{(i)}_{j-1}) \geq k \cdot 2^{j-1}.$
Thus, 
$$\sum_{x \in V^{(i)}_j} w_x \, d(x^{(i)}_{j-1}, x) \leq 3 \sum_{x \in V^{(i)}_{j}}\sum_{x' \in V^{(i)}_{j-1}} \frac{w_x w_{x'}}{k \cdot 2^{j-1}} d(x,x'). $$

Since the weight of the points arriving at any  particular time is at most $k+1$, we have the bound $w(V^{(i)}_{j-1} \cup V^{(i)}_{j})  \leq k \cdot 2^{j-1} + k \cdot 2^j + 2k+2 \leq 7 k \cdot 2^{j-1}$. Consequently, the r.h.s. above is at most $$ 21 \sum_{x \in V^{(i)}_{j}}\sum_{x' \in V^{(i)}_{j-1}} \frac{w_x w_{x'}}{w(V^{(i)}_{j-1} \cup V^{(i)}_{j})} d(x,x') \leq 21 \sum_{x, x' \in V^{(i)}_{j-1} \cup V^{(i)}_{j}}\sum_{x' \in V^{(i)}_{j-1}} \frac{w_x w_{x'}}{w(V^{(i)}_{j-1} \cup V^{(i)}_{j})} d(x,x'), $$ which is at most  $42 \cdot \opt_1( V^{(i)}_{j-1} \cup V^{(i)}_{j})$ (using
 Fact~\ref{optappxrandom}). 
\end{proof}

We are now ready to bound the total cost incurred by the solution $z^{(p)}$ during phase $p$. 

\begin{replemma}{lem:benchmark}     
    For each $p \in P$, let the start and end time of phase $p$ be denoted $s_p$ and $e_p$ respectively. Then, there exist solutions $z^{(p)}$ such that
    $$\sum_{p \in P} \sum_{t=s_p}^{e_p} \Cost_t(z^{(p)}) =  O \left( k^2 \Delta + \sum_{t =1}^T \Cost_t(y^*) \right),$$
    where $y^*$ is an arbitrary integral vector in $K_T$. 
\end{replemma}

\begin{proof}
For a point $x \in V^{(i)}$, let $p_x$ be the phase containing the arrival time of $x$. Then, 
\begin{align}
    \label{eq:overallsum}
 \sum_{p \in P} \sum_{t=s_p}^{e_p} \Cost_t(z^{(p)}) = \sum_{i \in [k]} \sum_{x \in V^{(i)}} D(z^{(p)},x) \leq  \sum_{i \in [k]} \sum_{x \in V^{(i)}} w_x d(x, x^{(i)}_{\ell_p^{(i)}}).
\end{align}

For $x \in V^{(i)}_0$, $d( x^{(i)}_{\ell_p^{(i)}},x) \leq \Delta.$ For $j \geq 1$ and $x \in V^{(i)}_j, d(x^{(i)}_{\ell_p^{(i)}},x) = d(x^{(i)}_{j-1},x)$. Thus, the r.h.s. of~\eqref{eq:overallsum} is upper bounded by 
\begin{align*}
 & \sum_{i \in [k]} \left( \Delta \cdot w(V^{(i)}_0) + \sum_{j \geq 1} \sum_{x \in V^{(i)}_j} w_x d(x^{(i)}_{j-1},x) \right)  \stackrel{\text{\small{Lemma~\ref{buckets}}}}{\leq} 2k^2 \Delta + 42 \, \sum_{i \in [k]} \sum_{j \geq 1} \opt_1(V^{(i)}_{j-1} \cup  V^{(i)}_{j}) \\
 & \quad \leq 2k^2 \Delta + 84 \sum_{i \in [k]} \sum_{x \in V^{(i)}} w_x d(c_i,x) \\
 & \quad = O\left( k^2 \Delta + \sum_{t=1}^T \Cost_t(y^*) \right).
\end{align*}
where we used $ \opt_1(V^{(i)}_{j-1} \cup  V^{(i)}_{j}) \leq  \sum_{x \in V^{(i)}_{j-1} \cup  V^{(i)}_{j}} w_x d(c_i,x)  $. This proves the lemma. \end{proof}

Combining Lemma~\ref{lem:costinterval} and \Cref{lem:benchmark}, since there are at most $O(k \log T)$ phases, we get: 

\begin{theorem}
\label{thm:fractional}
Let $y^*$ be an arbitrary integral vector in $K_T$. Then, 
$$\sum_{t =1}^T \Cost_t(y_t) = O \left( \sum_{t=1}^T \Cost_t(y^*)\right) + O\left(k^3  \Delta \cdot \sqrt{T} \log (T)\log (kT)\right)  .$$
\end{theorem}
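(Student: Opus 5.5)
The plan is to split the horizon $[1,T]$ at the phase boundaries in $P$ and then combine \Cref{lem:costinterval}, applied phase by phase, with \Cref{lem:benchmark}. The times in $P$ cut the timeline into at most $|P|+1 \le k\log T + 1$ consecutive intervals $[s_p,e_p]$, by \eqref{eq:numphases}. For each phase $p$, take the integral vector $z^{(p)}$ built just before \Cref{buckets}. Its centers are $1$-median centers of buckets that end before $s_p$, so $z^{(p)} \in K_{s_p}$ and \Cref{lem:costinterval} applies with $t_a = s_p$. Any coordinate that is undefined because no such bucket exists is filled with a point of $R_0$, which keeps $z^{(p)}$ integral and feasible.

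\textbf{Per-phase step.} Applying \Cref{lem:costinterval} to phase $p$ gives
\[
\sum_{t=s_p}^{e_p}\Cost_t(y_t) \le \sum_{t=s_p}^{e_p}\Cost_t(z^{(p)}) + 3(k+1)k\Delta\sqrt{T}\log(7Tk).
\]

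\textbf{Summation step.} Summing over all phases, the left-hand side becomes exactly $\sum_{t=1}^T \Cost_t(y_t)$, because the phases partition $[1,T]$. The additive terms total at most $(k\log T+1)\cdot 3(k+1)k\Delta\sqrt T\log(7Tk) = O(k^3\Delta\sqrt T\log T\log(kT))$.

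\textbf{Benchmark step.} The benchmark terms $\sum_p\sum_{t=s_p}^{e_p}\Cost_t(z^{(p)})$ are bounded by \Cref{lem:benchmark}, giving $O(k^2\Delta + \sum_t \Cost_t(y^*))$. The $k^2\Delta$ term is absorbed into the regret term.

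\textbf{Main obstacle.} The main care point is bookkeeping the phase boundaries: the initial phase before the first time in $P$, and the fact that \Cref{lem:benchmark} is stated as a sum over $p\in P$. The first interval, before any doubling time, must also be covered. Its $z^{(p)}$ uses arbitrary points, so its per-point cost is at most $\Delta$. Its total weight per center is at most $k+1$ plus a single step's weight, which is exactly the $V^{(i)}_0$ term already handled inside \Cref{lem:benchmark}'s proof ($2k^2\Delta$). So including this phase in the sum loses nothing. A second care point is that $y^*$ must be an integral vector in $K_T$, exactly as \Cref{lem:benchmark} requires, so no extra conversion is needed. Combining the three bounds yields the theorem.
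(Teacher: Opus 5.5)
Your proposal is correct and follows the paper's argument: apply \Cref{lem:costinterval} to each of the $O(k\log T)$ phases with the candidate $z^{(p)}$, sum the per-phase additive terms to get $O(k^3\Delta\sqrt{T}\log T\log(kT))$, and bound the total benchmark cost by \Cref{lem:benchmark}. Your extra handling of the initial phase is a valid check that the paper's one-line combination leaves implicit: each $V^{(i)}$ contributes only $O(k)$ weight there, so its benchmark cost is $O(k^2\Delta)$.
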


\section{Details about the Rounding Algorithms}
In this section, we give a detailed description of the online rounding algorithms described in \Cref{sec:rounding-small} for the \bddonlinekmed problem. Let $y_t$ be the fractional solution computed by Algorithm $1$ before the arrival of the $t^{th}$instance $R_t$. Our rounding algorithms will take the fractional solution $y_t$ as input and return an integer solution $Y_t$ with exactly $k$ centers. The {\em rounding loss} is defined as the ratio of the (expected) cost of the integer solution $Y_t$ to that of the fractional solution $y_t$.

We give two rounding algorithms. The first algorithm is deterministic and has a rounding loss of $O(k)$. The second algorithm is randomized and has only $O(1)$ rounding loss. These are adaptations of existing rounding algorithms for $k$-median \citep{charikar1999constant,charikar2012dependent}; the main difference is that we need to bound the rounding loss for points we haven't seen as well.

\subsection{Deterministic Rounding Algorithm} 
\label{sec:detrounding}

Initially, there are no centers in $Y_t$. The algorithm considers each point in $R_{<t} = R_0 \cup \ldots R_{t-1}$ in non-decreasing order of their connection costs $D(y_t,i)$, and decides whether to open a center at the current point. A new center is opened at the current point (call it $i$) if the connection cost of $i$ to the previously opened centers in $Y_t$ exceeds $(2k+2)$ times its connection cost in the fractional solution $y_t$. I.e., $Y_t \leftarrow Y_t \cup \{i\}$ if $D(Y_t,i) > (2k+2)\cdot D(y_t,i)$. 

We show in the next lemma that the algorithm produces a feasible solution, i.e., at most $k$ centers are opened in $Y_t$.

\begin{lemma}\label{lem:det-feasible}
The deterministic algorithm produces a feasible solution, i.e., opens at most $k$ centers.
\end{lemma}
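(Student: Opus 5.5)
The plan is to show that the opened centers have pairwise disjoint balls, each carrying fractional mass more than $1/(k+1)$... but that alone only gives at most $k(k+1)$ centers. So we need a sharper ball argument.

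Fix an opened center $i$ and write $r_i := D(y_t,i)$. Consider the ball $B_i := \{v : d(i,v) \le (k+1) r_i\}$. By a Markov-type argument on the fractional assignment of $i$, the mass of $y_t$ outside $B_i$ that is used to serve $i$ is at most $r_i/((k+1)r_i) = 1/(k+1)$. Hence $y_t(B_i) \ge 1 - 1/(k+1) = k/(k+1)$; each ball is therefore ``almost one unit'' of mass. If $r_i = 0$, then $i$ itself carries mass $1$ (since $y_t\le 1$ coordinatewise), and we take $B_i=\{i\}$.

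Next, the balls are pairwise disjoint. Let $i$ be opened after $i'$, so that $r_{i'} \le r_i$ by the processing order. The opening rule gives $d(i,i') \ge D(Y_t,i) > (2k+2) r_i \ge (k+1)(r_i + r_{i'})$. By the triangle inequality, $B_i$ and $B_{i'}$ cannot intersect. This is exactly why the threshold $2k+2$ is chosen.

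Finally, suppose $m$ centers were opened. Summing over the disjoint balls gives $m \cdot k/(k+1) \le \|y_t\|_1 = k$, so $m \le k+1$. This is the main obstacle: the bound is off by one. To close the gap, I would use strict inequalities. Markov gives mass outside the closed ball strictly less than $1/(k+1)$ unless the whole assignment sits at distance exactly $(k+1)r_i$, and that forces $r_i = (k+1)r_i$, hence $r_i=0$, where the mass is $1$. So each ball has mass strictly greater than $k/(k+1)$. Then $(k+1)$ disjoint balls would carry mass strictly greater than $k$, a contradiction. Hence $m\le k$. If the algorithm opens fewer than $k$, we can pad with arbitrary points to get exactly $k$; this only lowers cost.
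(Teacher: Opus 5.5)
Your proof is correct and follows the same route as the paper: you take disjoint balls around the opened centers, each carrying $y_t$-mass strictly more than $1-\frac{1}{k+1}$, so $k+1$ centers would force total mass above $k$. The one difference works in your favor. The paper uses radius $(k+2)\,D(y_t,j)$, which makes the strict mass bound automatic but only yields $d(j,j')\le (2k+4)\,D(y_t,j')$ for overlapping balls, and that does not contradict the opening threshold $2k+2$. Your radius $(k+1)\,D(y_t,i)$ matches the threshold exactly, and your strictness refinement handles the remaining off-by-one. That refinement is most cleanly stated as: every point outside the closed ball is at distance strictly greater than $(k+1)r_i$, so for $r_i>0$ the serving mass outside is strictly below $\frac{1}{k+1}$.
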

\begin{proof}
For $j \in Y_t$, let $B_j = \{x \in R_{<t} : d(j,x) \le (k+2) \cdot D(y_t,j)\}$ be the ball around $j$ of radius $k+2$ times the fractional cost of $j$. We will show that
\begin{enumerate}
    \item[(a)] The fractional mass on points in $B_j$ in $y_t$ is strictly greater than $1 - \frac{1}{k+1}$.
    \item[(b)] The balls $B_j$ are disjoint for different $j$.
\end{enumerate}
Given these properties, if $|Y_t| \ge k+1$, then the total fractional mass $y_t$ in the balls $B_j$ for $j\in Y_t$ is strictly greater than $k$, which is a contradiction. 

We show the two properties. For property (a), if the fractional mass $y_t$ in $B_j$ is at most $1 - \frac{1}{k+1}$, then $j$ must be served at ar least $\frac{1}{k+1}$ fraction by centers that are outside $B_j$. It follows that $D(y_t, j) \ge \frac{1}{k+1}\cdot (k+2)\cdot D(y_t, j)$, which is a contradiction. For property (b), suppose $B_j, B_{j'}$ overlap where $j' > j$, i.e., $D(y_t, j') \ge D(y_t, j)$. Then, $d(j, j') \le 2 (k+2)\cdot D(y_t, j')$, which means that the algorithm will not open a center at $j'$, a contradiction.
\end{proof}

For the rounding loss, note that the algorithm explicitly ensures that for any $i\in R_{<t}$, we have $D(Y_t,i) \le (2k+2) \cdot D(y_t,i)$. The next claim bounds the rounding loss for points $j\in R_t$. 

\begin{lemma}\label{lem:simplerounding}
For any point $j \in R_t$, we have $D(Y_t,j) \leq (4k+3)\cdot D(y_t,j)$. 
\end{lemma}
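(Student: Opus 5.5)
The plan is to route the unseen point $j\in R_t$ through a nearby point $i\in R_{<t}$. For points of $R_{<t}$ the greedy algorithm already guarantees $D(Y_t,i)\le(2k+2)\,D(y_t,i)$, so it suffices to find $i$ in the support of $y_t$ such that both $d(j,i)$ and $D(y_t,i)$ are controlled by $r:=D(y_t,j)$. Then we apply
\[
D(Y_t,j)\le d(j,i)+D(Y_t,i)\le d(j,i)+(2k+2)\,D(y_t,i).
\]
I should say at the outset that the argument below only reaches the weaker constant $4k+5$ in one case; I do not yet have a proof of the stated constant $4k+3$.

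\textbf{Step 1 (choosing $i$).} Let $\alpha^j$ be an optimal fractional assignment of $j$, so $\sum_l\alpha^j_l=1$, $\alpha^j_l\le (y_t)_l$, and $\sum_l\alpha^j_l d(j,l)=r$. Take $i$ to be the point of the support of $\alpha^j$ closest to $j$, and write $a:=d(j,i)$. Since $a$ is at most the $\alpha^j$-average distance, $a\le r$.

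\textbf{Step 2 (bounding $D(y_t,i)$).} The vector $\alpha^j$ is itself a feasible fractional assignment for $i$: it has total mass $1$ and respects the caps $(y_t)_l$. The $i$-th term contributes $0$ because $d(i,i)=0$. For $l\ne i$, the triangle inequality gives $d(i,l)\le a+d(j,l)$. Hence
\[
D(y_t,i)\le\sum_{l\ne i}\alpha^j_l\bigl(a+d(j,l)\bigr)=(1-\alpha^j_i)a+r-\alpha^j_i a=r+(1-2\alpha^j_i)\,a.
\]
Equivalently, we keep the term $-\alpha^j_i a$ that a cruder bound would discard.

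\textbf{Step 3 (combining).} Substituting Step 2 into the displayed inequality gives
\[
D(Y_t,j)\le a+(2k+2)\bigl(r+(1-2\alpha^j_i)a\bigr).
\]
If $\alpha^j_i\ge 1/2$, this is at most $a+(2k+2)r\le(2k+3)r$, well below $(4k+3)r$. If $\alpha^j_i<1/2$, the crude bound $(1-2\alpha^j_i)a\le a\le r$ yields only $(4k+5)r$.

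\textbf{Main obstacle.} The difficulty is closing the remaining $2r$ gap when $\alpha^j_i$ is small. I see two things to try. First, instead of the nearest point, choose $i$ randomly from the distribution $\alpha^j$ and average the bound. This gives
\[
\mathbb{E}\bigl[D(y_t,i)\bigr]\le 2r-2\sum_l(\alpha^j_l)^2d(j,l),
\]
which helps only when $\alpha^j$ is concentrated. Second, use the structure of the greedy algorithm directly. The point $i$ is either opened, in which case $D(Y_t,j)\le a\le r$, or it is blocked by an opened center $c$ with $D(y_t,c)\le D(y_t,i)$ and $d(c,i)\le(2k+2)D(y_t,i)$. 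In the blocked case, one can try to compare $c$ against the ball around $j$ and show that the two lost $r$-terms cannot both be tight. If neither idea works, this approach establishes $D(Y_t,j)=O(k)\cdot D(y_t,j)$, which suffices for the $O(k)$ rounding loss used in~\eqref{eq:detmainresult} but not for the exact constant $4k+3$.
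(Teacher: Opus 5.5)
Your route is the paper's route. You pass from $j$ to a point $i\in R_{<t}$ with $d(i,j)\le D(y_t,j)$ and $D(y_t,i)\le 2D(y_t,j)$, then apply the greedy guarantee $D(Y_t,i)\le (2k+2)D(y_t,i)$. The paper takes $i$ to be the point of $R_{<t}$ nearest to $j$ and uses $D(y_t,i)\le D(y_t,j)+d(i,j)$. Your choice of $i$ in the support of $\alpha^j$, with the sharper bound $D(y_t,i)\le r+(1-2\alpha^j_i)a$, has the same worst case. The constant $4k+5$ you reach is exactly what this argument gives. The paper's last line asserts $(2k+2)D(y_t,i)+d(i,j)\le (4k+2)D(y_t,j)+D(y_t,j)$. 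But $D(y_t,i)\le 2D(y_t,j)$ only bounds the first term by $(4k+4)D(y_t,j)$, so the paper's $4k+3$ comes from an arithmetic slip.

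The $2r$ gap you are trying to close cannot be closed: for $k\ge 2$ the constant $4k+3$ is false for the greedy rule with threshold $2k+2$. Build the example as follows:
\begin{itemize}
\item Take $m$ points $l_1,\dots,l_m\in R_{<t}$ at distance $1$ from $j\in R_t$, pairwise at distance $2$, each with $y_t(l_s)=1/m$.
\item Take $c\in R_{<t}$ with $y_t(c)=1$ and $d(c,l_s)=L:=(4k+4)(1-1/m)$ for every $s$. In the shortest-path metric this gives $d(c,j)=L+1$.
\item Put the remaining $k-2$ units of mass on isolated, very distant points.
\end{itemize}
Then $D(y_t,j)=1$, $D(y_t,l_s)=2(1-1/m)$ and $D(y_t,c)=0$. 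The greedy opens $c$ and the distant points. It opens no $l_s$, because $D(Y_t,l_s)=L=(2k+2)D(y_t,l_s)$ is not strictly larger than the threshold. Hence
\[
D(Y_t,j)=L+1=(4k+5)-\frac{4k+4}{m}>(4k+3)\,D(y_t,j)\qquad\text{whenever } m>2k+2 .
\]
In your notation $a=r=1$ and $\alpha^j_i=1/m$, so your Step~2 bound is tight here. As $m\to\infty$ the ratio tends to $4k+5$, so no further use of the greedy structure can beat $4k+5$. Drop both ideas in your last paragraph and state the lemma with $4k+5$; your proof of that is complete. Nothing downstream changes, since only an $O(k)$ rounding loss is needed for \eqref{eq:detmainresult}.
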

\begin{proof}
Let $i$ be the point in $R_{<t}$ that is closest to $j$. Clearly, 
\[
    D(y_t, j) \ge d(i, j),
\] 
since the entire fractional mass of $y_t$ is on points in $R_{<t}$. 
Moreover, 
\[
    D(y_t, i) \le D(y_t, j) + d(i, j),
\]
since the RHS is at most the cost of serving $i$ using $j$'s fractional centers. 
Adding the two inequalities, we get $D(y_t, j) \ge \frac{D(y_t, i)}{2}$. Note that there is a center in $Y_t$ that is within distance $(2k+2)\cdot D(y_t, i)$ of $i$ since $i\in R_1\cup R_2\cup \ldots \cup R_{t-1}$. The distance from this center to $j$ is at most 
\[
    (2k+2)\cdot D(y_t, i) + d(i, j) \le (4k+2)\cdot D(y_t, j) + D(y_t, j) = (4k+3)\cdot D(y_t, j).\qedhere
\]    
\end{proof}

This implies that the rounding loss of the deterministic rounding algorithm is $4k+3$.

\subsection{A Randomized Rounding Algorithm}
\label{sec:randround}
The rounding algorithm has two phases. The first phase selects a set of centers $\barY_t$ that might be larger than $k$. This phase is deterministic. The second phase subselects at most $k$ centers from $\barY_t$ to form the final solution $Y_t$. This phase is randomized. The guarantee that at most $k$ centers are selected in $Y_t$ holds deterministically, but the cost of the solution will be bounded in expectation. We desribe these two phases below.

{\bf Phase 1:} This phase is similar to the deterministic algorithm in the previous section but with different parameters. Initially, there are no centers in $\barY_t$. The algorithm considers each point in $R_{<t}$ in non-decreasing order of their connection costs $D(y_t,i)$, and decides whether to open a center at the current point. A new center is opened at the current point (call it $i$) if the connection cost of $i$ to the previously opened centers in $\barY_t$ exceeds $4$ times its connection cost in the fractional solution $y_t$. I.e., $\barY_t \leftarrow \barY_t \cup \{i\}$ if $D(Y_t,j) > 4\cdot D(y_t,i)$. 

{\bf Phase 2:} For any center $i\in \barY_t$, let $r_i$ be the distance to its closest center in $\barY_t$. We define the weight of $i$, denoted $w_i$, as the the total fractional mass in $y_t$ in the open ball of radius $\nf{r_i}{2}$ centered at $i$. I.e.,
\[
    w_i = \sum_{j: d(i, j) < \nf{r_i}{2}} y_t(j).
\]

We start by creating a matching on the centers in $\barY_t$. The first matched pair is the closest pair of centers in $\barY_t$. The matching continues iteratively by pairing the two closest unmatched centers in $\barY_t$ in each step. Eventually, either all centers in $\barY_t$ are in matched pairs or there is a solitary center that goes unmatched. For any center $i\in \barY_t$ that is in a matched pair, we denote the center it got matched to by $i'$.

Next, we order the matched pairs arbitrarily to create an induced ordering on the points that satisfies the property that $i, i'$ are adjacent. If there is an unmatched point, it is added at the end of this order. Let us call the ordering $i_1, i_2, \ldots, i_{|\barY_t|}$. We now associate an interval with each point as follows: for point $i_s$, its associated interval is $I_s := [\sum_{j = 1}^{s-1} w_{i_j},  \sum_{j = 1}^s w_{i_j})$, i.e., the points occupy adjacent intervals on the real line in the given order. Next, we generate a value $\theta$ uniformly at random from $[0, 1)$, and add to $Y_t$ every point $i_s \in \barY_t$ such that $a+\theta \in I_s$ for some non-negative integer $a$.

First, we establish that the algorithm is correct, i.e., $|Y_t| \le k$.
\begin{lemma}\label{lem:randomround-correct}
    The output $Y_t$ of the randomized rounding algorithm has at most $k$ centers.
\end{lemma}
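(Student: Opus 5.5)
The count $|Y_t|$ is the number of integers $a$ with $a+\theta$ landing in some interval $I_s$ that lies inside $[0,W)$, where $W=\sum_{s} w_{i_s}$ is the total weight. Since the intervals $I_s$ are disjoint and tile $[0,W)$, each point $a+\theta$ with $a+\theta<W$ lands in exactly one interval. Hence $|Y_t| = |\{a\in\mathbb{Z}_{\ge 0}: a+\theta < W\}| \le \lceil W\rceil$. The plan is therefore to show $W\le k$. This gives $|Y_t|\le k$ deterministically, because at most $\lceil W \rceil \le k$ integer shifts of $\theta\in[0,1)$ lie below $W$. Note that if $W$ is not an integer, the bound $\lceil W\rceil$ still works as long as $W\le k$.

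To bound $W$, I would show that the open balls $B_i=\{j: d(i,j)<r_i/2\}$ for $i\in\barY_t$ are pairwise disjoint. Suppose $j\in B_i\cap B_{i'}$ for distinct centers $i,i'\in\barY_t$. By the definition of $r_i$ and $r_{i'}$ as distances to the nearest other center, $d(i,i')\ge\max(r_i,r_{i'})$. The triangle inequality then gives
\[
d(i,i')\le d(i,j)+d(j,i') < \frac{r_i}{2}+\frac{r_{i'}}{2}\le \max(r_i,r_{i'}),
\]
which is a contradiction. So $W=\sum_{i\in\barY_t} w_i\le \sum_j y_t(j)=k$, since $y_t\in K_{t-1}$ has $\ell_1$ mass exactly $k$ and all its entries are nonnegative.

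There is one subtlety to check. The guarantee should perhaps hold for each pair as well, i.e., matched pairs should not both be chosen when the combined weight is at most $1$. This property is not needed for the cardinality bound, though. Both the matching and the pair-adjacency matter only for the cost analysis, so I would leave them aside here.

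The main obstacle is the ceiling. If $W$ were, say, $k$ exactly, the half-open tiling $[0,W)$ ensures that $a+\theta = k$ is never counted, so at most $k$ points are selected. I expect this boundary case to be the only delicate point, and it is handled by the half-open intervals. Degenerate cases, such as $|\barY_t|=1$ where $r_i$ is undefined or infinite, I would treat by taking $B_i$ to be the whole space, giving $w_i = k$ and still $W\le k$.
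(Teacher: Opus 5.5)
Your proof is correct and follows the paper's argument: the open balls of radius $r_i/2$ around the centers are pairwise disjoint, so the total weight is at most $\|y_t\|_1 = k$, and this bounds the number of selected centers. Your explicit count via $\lceil W\rceil$ fills in a step the paper leaves implicit. One fix: it should read $|Y_t| \le |\{a \in \mathbb{Z}_{\ge 0} : a+\theta < W\}|$ rather than equality, since weights can exceed $1$ and such a center can be hit by two shifts, but only that inequality is needed.
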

\begin{proof}
    It is sufficient to show that the sum of weights of the centers in $\barY_t$ is at most $k$. In turn, this follows if we show that the open balls $B_i := \{j: d(i, j) < \nf{r_i}{2}\}$ are disjoint for different $i\in \barY_t$. Suppose not; let $d(i_1, i_2) < \frac{r_{i_1} + r_{i_2}}{2}$ for $i_1, i_2\in \barY_t$. Then, $r_{i_1} \le d(i_1, i_2)$ and $r_{i_2} \le d(i_1, i_2)$ by definition of $r_i$. This is a contradiction.
\end{proof}

Next, we bound the rounding loss of the algorithm. Note that for any point $j\in R_1 \cup R_2 \cup \ldots \cup R_{t-1}$, we have the following explicit property from the construction of $\barY_t$:
\begin{equation}\label{eq:bary}
    D(\barY_t, j) < 4\cdot D(y_t, j).
\end{equation}
Another important property sets a lower bound on the weight of any point:
\begin{claim}\label{cl:weight}
    For any point $i\in \barY_t$, we have 
    \[
        w_i \ge 1-\frac{2 \cdot D(y_t, i)}{r_i} > \frac{1}{2}.
    \]
Therefore, for any matched pair $i, i'\in \barY_t$, at least one of $i, i'$ is in $Y_t$.
\end{claim}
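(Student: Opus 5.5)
The plan is to mirror property (a) in the proof of Lemma~\ref{lem:det-feasible}: fix $i\in\barY_t$ and let $B_i=\{j: d(i,j)<\nf{r_i}{2}\}$ be the open ball whose $y_t$-mass is $w_i$. Since $D(y_t,i)$ is the cost of assigning $i$ fractionally to an extent of exactly $1$ using mass from $y_t$, at least $1-w_i$ of this assignment must go to points outside $B_i$, each at distance at least $\nf{r_i}{2}$ from $i$. Hence $D(y_t,i)\ge (1-w_i)\cdot \nf{r_i}{2}$, which rearranges to $w_i\ge 1-\frac{2D(y_t,i)}{r_i}$. One subtlety: if $w_i\ge 1$ the bound is trivial, so we may assume $w_i<1$, in which case the optimal assignment $\alpha$ has $\sum_{j\in B_i}\alpha_j\le w_i$ and the remaining mass at least $1-w_i$ lies outside.

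For the strict inequality $w_i>\nf12$ it suffices to show $r_i>4D(y_t,i)$. Let $i''$ be the center of $\barY_t$ closest to $i$, so $r_i=d(i,i'')$. I split on the processing order of Phase 1. If $i''$ was processed before $i$, then $i''$ was already in $\barY_t$ when $i$ was considered, and $i$ was opened only because its distance to every previously opened center exceeded $4D(y_t,i)$; thus $r_i=d(i,i'')>4D(y_t,i)$. If instead $i''$ was processed after $i$, then $D(y_t,i'')\ge D(y_t,i)$ by the non-decreasing order, and since $i''$ was opened while $i$ was already present, $d(i,i'')>4D(y_t,i'')\ge 4D(y_t,i)$. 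In both cases $\frac{2D(y_t,i)}{r_i}<\nf12$, giving $w_i>\nf12$. A degenerate case is $D(y_t,i)=0$, where the bound reads $w_i\ge 1$; this is consistent and needs no extra care, except to ensure that ties in the order do not break the argument. The second case handles ties, since it only uses $D(y_t,i'')\ge D(y_t,i)$. If $|\barY_t|=1$, then $r_i=\infty$, the ball contains all mass, and $w_i=k\ge 1$.

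For the matched-pair consequence: $i,i'$ occupy adjacent intervals $I_s,I_{s+1}$ whose union has length $w_i+w_{i'}>1$, and each has length at most $1$, since the disjoint balls from Lemma~\ref{lem:randomround-correct} together with $y_t\le 1$ coordinatewise give each single weight $\le$ the mass in one ball. Any half-open interval of length greater than $1$ contains a point of the form $a+\theta$ with $a$ a non-negative integer (for $\theta\in[0,1)$), so at least one of $I_s,I_{s+1}$ contains such a point, and the corresponding center is added to $Y_t$.

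The main obstacle I expect is in this last step, namely checking boundary and half-open conventions so that the union of the two intervals truly captures some $a+\theta$ for every $\theta$. This holds because its length strictly exceeds $1$.
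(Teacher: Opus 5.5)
Your proof is correct and follows the paper's argument: the same bound $D(y_t,i)\ge (1-w_i)\cdot r_i/2$ from the mass outside the open ball, the same case split on whether the nearest center of $\barY_t$ was processed before or after $i$ to get $r_i>4D(y_t,i)$, and the same conclusion from $w_i+w_{i'}>1$. The only blemish is your aside that each interval has length at most $1$, which is false in general (your own case $|\barY_t|=1$ gives $w_i=k$, and a single ball can hold more than one unit of mass even when $|\barY_t|\ge 2$); it is never used, though, because the union $I_s\cup I_{s+1}$ having length greater than $1$ already suffices.
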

\begin{proof}
    For the first inequality, note that
    $D(y_t, i) \ge 0\cdot w_i + (r_i/2)\cdot (1-w_i)$ since at least $1-w_i$ fraction of the service for $i$ in the fractional solution $y_t$ comes from outside the open ball of radius $r_i/2$ centered at $i$. Re-arranging this inequality gives the first inequality. 
    
    For the second inequality, let $j$ be the closest center to $i$ in $\barY_t$ at the time that $i$ was considered in phase 1 of the rounding algorithm. Since $i$ was added to $\barY_t$, it must be that $d(i, j) > 4\cdot D(y_t, i)$. Moreover, any center $\ell$ added to $\barY_t$ later in the algorithm must satisfy $d(i, \ell) > 4\cdot D(y_t, \ell) \ge 4\cdot D(y_t, i)$. Therefore, $r_i > 4\cdot D(y_t, i)$, which implies the second inequality.

    Therefore, for any matched pair $i, i'$, their cumulative weight $w_i + w_{i'}> 1$. This implies that at least one of $i, i'$ will be added to $Y_t$ in phase 2.
\end{proof}

We use these property to bound the rounding loss for $Y_t$:
\begin{lemma}\label{lem:randrounding}
    The following properties hold:
\begin{enumerate}
\item For any point $i \in \barY_t$, we have $\E[D(Y_t,i)] \leq 4\cdot  D(y_t,i)$.
\item For any point $j \in R_0 \cup \ldots \cup R_{t-1}$, we have $\E[D(Y_t,j)] \leq 8 \cdot D(y_t,j)$.
\item For any $j \in R_t$, we have $\E[D(Y_t,j)] \leq 17 \cdot D(y_t,j)$.
\end{enumerate}
\end{lemma}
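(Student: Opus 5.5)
The three parts are proved in order, each building on the previous one.

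\textbf{Part 1.} Fix $i\in\barY_t$. If $i\in Y_t$ the cost is $0$. Otherwise I route $i$ to a nearby center that is guaranteed to be open.

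If $i$ is matched, Claim~\ref{cl:weight} says $i'\in Y_t$ whenever $i\notin Y_t$, so $D(Y_t,i)\le d(i,i')$. The goal is then to show $\Pr[i\notin Y_t]\cdot d(i,i')\le 4D(y_t,i)$. The interval $I_s$ has length $w_i<1$, so $\Pr[i\in Y_t]=w_i$. Claim~\ref{cl:weight} gives $1-w_i\le 2D(y_t,i)/r_i$, hence $\E[D(Y_t,i)]\le (1-w_i)\,d(i,i')\le 2D(y_t,i)\,d(i,i')/r_i$. It therefore suffices that $d(i,i')\le 2r_i$, and I expect this to be the main obstacle.

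For $d(i,i')\le 2r_i$, I would use the greedy matching. Let $j$ be the nearest center to $i$ in $\barY_t$. Either $j$ was still unmatched when $i$ got matched, in which case greediness gives $d(i,i')\le d(i,j)=r_i$. Or $j$ was matched earlier to some $j'$ with $d(j,j')\le d(i,j)=r_i$. In the second case this does not directly bound $d(i,i')$, so I would bound the cost differently. Route $i$ to $j$ or $j'$ instead: at least one of them is open, and both lie within $2r_i$ of $i$. This uses the fact that at least one center of each matched pair is open, so $D(Y_t,i)\le 2r_i$ whenever $i$ is closed. The expectation is then at most $(1-w_i)\cdot 2r_i\le 4D(y_t,i)$.

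If $i$ is the solitary unmatched point, then its nearest center $j$ is matched, and the same argument with $j,j'$ applies.

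\textbf{Part 2.} For $j\in R_{<t}$, let $i\in\barY_t$ be its nearest center in $\barY_t$. By \eqref{eq:bary}, $d(i,j)<4D(y_t,j)$. The points were processed in order of $D(y_t,\cdot)$, so I want $D(y_t,i)\le D(y_t,j)$. This holds if $i$ was open when $j$ was examined. If $i$ was added after $j$, then $D(y_t,i)\ge D(y_t,j)$, and I would instead use the center that blocked $j$ at the time $j$ was examined. That center was added earlier, so its fractional cost is at most $D(y_t,j)$, and it lies within $4D(y_t,j)$ of $j$. So take $i$ to be this earlier center. Then
\[
\E[D(Y_t,j)]\le d(i,j)+\E[D(Y_t,i)]\le 4D(y_t,j)+4D(y_t,i)\le 8D(y_t,j),
\]
using Part 1 for the last term.

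\textbf{Part 3.} For $j\in R_t$, copy the argument of Lemma~\ref{lem:simplerounding}. Take $i\in R_{<t}$ nearest to $j$. Then $d(i,j)\le D(y_t,j)$ and $D(y_t,i)\le 2D(y_t,j)$. Therefore
\[
\E[D(Y_t,j)]\le d(i,j)+8D(y_t,i)\le D(y_t,j)+16D(y_t,j)=17\,D(y_t,j).
\]
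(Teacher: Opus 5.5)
Your proposal is correct and follows essentially the same route as the paper. In Part 1 you bound $D(Y_t,i)\le 2r_i$ deterministically using the nearest center $j$ and its partner $j'$, and multiply by $\Pr[i\notin Y_t]\le 2D(y_t,i)/r_i$. Part 2 uses the center that blocked $j$ in Phase 1 together with Part 1, and Part 3 repeats the argument of Lemma~\ref{lem:simplerounding}.

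One small correction: $w_i$ need not be below $1$, since several points carrying fractional mass can lie in the ball. So write $\Pr[i\notin Y_t]=\max(0,1-w_i)$ rather than $1-w_i$. When $w_i\ge 1$ the center is always selected and the bound is trivial, which is why the paper states $\Pr[i\in Y_t]\ge\min(w_i,1)$.
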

\begin{proof}
We prove each of these properties separately. Note that $\Pr[i\in Y_t] \ge \min(w_i, 1)$ in phase 2 of the algorithm. Using \Cref{cl:weight}, we have 
\[
    \Pr[i\in Y_t] \ge 1 - \frac{2\cdot D(y_t, i)}{r_i}.
\]    
Moreover, with probability $1$, we have $D(Y_t, i) \le 2 r_i$. This follows from two cases:
(a) If $i$ is matched to its closest point in $\barY_t$, then by \Cref{cl:weight}, at least one of $i, i'$ is in $Y_t$.
(b) Otherwise, suppose $i$ is unmatched or it is matched to $i'$ which is not its closest point in $\barY_t$. Let $j$ be closest point to $i$ in $\barY_t$. Then, the fact that $j$ was matched to $j'$ and not to $i$ implies that $d(j, j') \le d(i, j) = r_i$. By triangle inequality, $d(i, j') \le 2r_i$, and at least one of $j, j'$ is in $Y_t$ by \Cref{cl:weight}.

Combining the two observations above, we get
\[
    \E[D(Y_t, i)] \le \frac{2\cdot D(y_t, i)}{r_i} \cdot 2r_i = 4 \cdot D(y_t, i).
\]

Next, we show the second property. Consider a point $j \in R_{<t}\setminus \barY_t$. since $j$ was not added to $\barY_t$, there exists $i \in \barY_t$ such that $D(\barY_t,i) \leq D(\barY_t,j)$ and $d(i,j) \leq 4\cdot D(y_t,j)$. Therefore, 
\[
    \E[D(Y_t,j)] \leq  d(i,j) +  \E[D(Y_t,i)] \leq 4\cdot D(y_t,j) + 4\cdot D(y_t,i)  \le 8\cdot D(y_t,j),
\]
where we used the first property in the second to last inequality. 

Finally, we show the third property. Consider a point $j \in R_t$. 
Let $i$ be the point in $R_{<t}$ that is closest to $j$. Clearly, 
\[
    D(y_t, j) \ge d(i, j),
\] 
since the entire fractional mass of $y_t$ is on points in $R_{<t}$. 
Moreover, 
\[
    D(y_t, i) \le D(y_t, j) + d(i, j),
\]
since the RHS is at most the cost of serving $i$ using $j$'s fractional centers. 
Adding the two inequalities, we get $D(y_t, i) \le 2\cdot D(y_t, j)$. 
Therefore,
\[
   \E[D(Y_t, j)] \le \E[D(Y_t, i)] + d(i, j) \le 8 \cdot D(y_t, i) + D(y_t, j) \le 17 \cdot D(y_t, j),
\]    
where we used the second property in the second to last inequality.
\end{proof}

This implies that the rounding loss of the randomized rounding algorithm is $17$.

\eat{

\dpalert{Old Stuff after this.}

Firstly, we will redefine $y_t$ to be a fractional allocation of $k$ centres on a \textbf{multiset} $S_t$ with elements in $R_1 \cup \ldots \cup R_{t-1}$ as opposed to just being a fractional allocation of $k$ centres  on elements in $R_1 \cup \ldots \cup R_{t-1}$. This is done by splitting a point $x \in R_1 \cup \ldots \cup R_{t-1}$ into multiple copies if necessary, so that for any point $j \in R_1 \cup \ldots \cup R_{t-1} $, the allocation of $x$ to $j$ in the fractional solution $y_t$ is either $0$ or $y_t^{(x)}$. Note that this ensures $y_t^{(x)} \leq 1 \, \forall x \in S_t$ and $\sum_{x \in S_t} y_t^{(x)} = k$.\\

We will now create an intermediate integer solution $Y_t$ as follows.

\begin{itemize}
\item Set $Y_t = \phi$ initially
\item Process $i \in R_1 \cup \ldots R_{t-1}$ in increasing order of $D(y_t,i)$. If $i$ is the current point being processed, we set $Y_t = Y_t \cup \{i\}$ if $D(Y_t,i) > 4\cdot D(y_t,i)$
\end{itemize}

The next claim is an immediate consequence of the algorithm's definition.

\begin{claim}\label{filtering} The following properties hold:
\begin{enumerate}
\item For any two distinct $j,j' \in  Y_t$, we have $d(j,j') > 4 \cdot \max\{D(y_t,j), D(y_t,j') \}$.
\item For any $j' \in (R_1 \cup \ldots  \cup R_{t-1} )\setminus Y_t$, there exists $j \in Y_t$ such that $D(y_t,j) \leq D(y_t,j')$ and $d(j,j') \leq 4\cdot D(y_t,j')$.
\end{enumerate}
\end{claim}
\eat{
\begin{proof}
We will prove the first statement. Consider any two distinct points $j,j' \in Y_t$, and assume without loss of generality that $j$ is added to $Y_t$ first. Since we process the points in $R_1 \cup \ldots \cup R_{t-1}$ in increasing order of their fractional connection cost, $D(y_t,j) \leq D(y_t,j')$. Since we opened a centre at $j'$ when there was a centre at $j$ already, we also conclude $d(j,j') > 4D(y_t,j') = 4 \max \{D(y_t,j), D(y_t,j') \}$.\\

Now we prove the second statement. Since $j' \notin Y_t$, at the time $j'$ was processed there must have been some $j \in Y_t$ such that $d(j,j') \leq 4 D(y_t,j')$. Moreover since $j$ was processed before $j'$, $D(y_t,j) \leq D(y_t,j')$.
\end{proof}
}

We will now consider balls of appropriate radius around each $j \in Y_t$. For any $j \in Y_t$, let $F_j$ denote the multiset of centres in $R_1 \cup \ldots \cup R_{t-1}$ that serve $j$ in the fractional solution $y_t$. Let $$D_j = \frac{1}{2} \min_{j' \in Y_t, j \neq j'} d(j,j')$$ be half the distance of $j$ to the closest $j'$ in $Y_t$.  We will now try to assign any point $x \in F_j$ that is within a distance of $D_j$ from $j$ to the set $B_j$. As a centre could serve multiple points in $Y_t$, it is possible that we try to assign $x$ to multiple such sets, in which case we only assign it to one such set $B_j$ arbitrarily. \dpalert{How is it possible for balls of radius $D_j$ around $j$ in $Y_t$ to intersect? Are you worried about points on the boundaries of the balls?}

We define the volume of  $B_j$ to be the fractional mass in $y_t$ on points in $B_j$ that is allocated to $j$. That is,

$$ \vol(B_j) = \sum_{x \in B_j} y_t^{(x)}  $$

\begin{claim}\label{volume} Both of the following statements hold.
\begin{enumerate}
\item  $B_j \cap B_{j'} = \emptyset, \, \forall j,j' \in Y_t, j \neq j'$
\item $1/2 \leq \vol(B_j) \leq 1, \, \forall j \in Y_t$

 \end{enumerate}
\end{claim}
\begin{proof}
The first statement follows from the definition of the balls $B_j$, as we assign any point in $R_1 \cup \ldots \cup R_{t-1}$ that is within an appropriate radius for multiple $j \in Y_t$ to only one such $B_j$.\\

Now we consider the second statement. $\vol(B_j) \leq 1$ by definition. Suppose for the sake of contradiction that $\vol(B_j) < 1/2$. By Claim \ref{filtering}, the closest $j' \in Y_t \backslash \{ j\}$ to $j$ satisfies $d(j,j') > 4D(y_t,j)$. Thus, $D_j > 2D(y_t,j)$, which gives a contradiction as we have fractional mass of greater than $1/2$ outside $B_j$ (or on the boundary of it) that serves $j$ in the fractional solution $y_t$, which would imply

$$ D(y_t,j) > 1/2 \cdot D_j \geq 1/2 \cdot 2D(y_t,j) = D(y_t,j)$$
\end{proof}

We now construct a matching $M$ on points in $Y_t$. While there are at least $2$ unmatched points in $Y_t$, we choose the closest pair of unmatched points $j,j' \in Y_t$ and match them. \dpalert{Note: In general, the number of points in $Y_t$ may be odd. So, you need to handle the case that there is one point that is left unmatched at the end.}

We will now create an integer solution $\hat{Y}_t$ using the following rounding algorithm. \dpalert{For consistency, switch notations $Y_t$ and $\hat{Y}_t$.} When we say we \textit{open} $B_j$, we mean that we add a point in $B_{j}$ to $\hat{Y}_t$ independently and randomly with probability $\vol(B_j)$, with the conditional probabilities of adding a particular $x \in B_j$ being $\frac{y_t^{(x)}}{\vol(B_j)}$.

\begin{itemize}
\item Initially set $\hat{Y}_t = \emptyset$.
\item For each pair $(j,j') \in M$, do one of the following independently:
\begin{itemize}
    \item open $B_j$ with probability $1 - \vol(B_{j'})$
    \item open $B_{j'}$ with probability $1 - \vol(B_{j})$
    \item open both $B_j$ and $B_{j'}$ with probability $\vol(B_j) + \vol(B_{j'})- 1$.
\end{itemize}
\item If some $j \in Y_t$ is not matched in $M$, open $B_j$ with probability $\vol(B_j)$ independently.
\item For any point $x \in S_t \setminus Y_t$ (Remember $S_t$ is a multiset, $\{a,a,b\} \backslash \{a\} = \{a,b\}$), we set $\hat{Y}_t = \hat{Y}_t + \{x\}$ independently and randomly with probability $y_t^{(x)}$.
\end{itemize}

\dpalert{You can either use a black box result from another paper or prove it. Please modify the next claim to be one of the two. Give the actual algorithm directly and analyze it, possibly using black box lemmas from previous papers, in which case they should be cited.}

\begin{claim}
The above rounding algorithm opens $k$ centres in expectation. Moreover, we can modify the rounding algorithm as in their paper \dpalert{whose paper?!} to maintain the same marginal probabilities of opening a centre at a location $j \in R_1 \cup \ldots \cup R_{t-1}$, while ensuring that the algorithm always opens exactly $k$ facilities.
\end{claim}
\begin{proof} (This follows from their paper \cite{charikarrounding}without any modifications, I'm showing the first statement for clarity.)
Note that for any pair $(j,j') \in M$, the expected number of centers opened is $1-\vol(B_j) + 1-\vol(B_{j'})  + 2(\vol(B_j) + \vol(B_{j'}) -1) = \vol(B_j) + \vol(B_j')$. For any unmatched $j \in Y_t$, the probability of opening a centre is $\vol(B_j)$. Since the balls are disjoint, we have

$$ k = \sum_{x \in S_t}y_t^{(x)} = \sum_{j \in Y_t} \vol(B_j) + \sum_{x \in S_t \backslash Y_t} y_t^{(x)}$$

which is the expected number of centres opened as desired. For the latter part we can apply their modification to the rounding procedure without changes.
\end{proof}

\begin{claim} \label{randomizedrounding} The following statements are true.
\begin{enumerate}
\item The expected connection cost for any $j \in Y_t$ in the solution $\hat{Y}_t$ satisfies $\E[D(\hat{Y}_t,j)] \leq 6 D(y_t,j)$
\item The expected connection cost for any $j \in (R_1 \cup \ldots \cup R_{t-1}) \backslash Y_t$ in the solution $\hat{Y}_t$ satisfies $\E[D(\hat{Y}_t,j)] \leq 10 D(y_t,j)$
\item \ The expected connection cost for any $j \in R_t$ in the solution $\hat{Y}_t$ satisfies $\E[D(\hat{Y}_t,j)] \leq 21 D(y_t,j)$
\end{enumerate}

\end{claim}

\begin{proof}
We divide the proof into $3$ parts, one for each part of the claim.
\begin{itemize}
\item Consider any $j_1 \in Y_t$. If $j_1$ is unmatched, then let $j_2$ be the centre in $Y_t \backslash \{j_1\}$ that is closest to $j_1$ (if there are multiple, pick one arbitrarily). Then, it must be the case that $j_2$ is matched to some $j_3$ such that $d(j_2,j_3) \leq d(j_1,j_2)$. On the other hand suppose $j_1$ is matched to some $j \in Y_t$. If $j$ is not closest to $j_1$ in $Y_t \backslash \{j_1\}$, suppose $j_2$ is the centre in $Y_t \backslash \{j_1\}$  that is closest to $j_1$ (if there are multiple, pick one arbitrarily). Then, it must be the case that $j_2$ is matched to some $j_3$ such that $d(j_2,j_3) \leq d(j_1,j_2)$.\\

In either case, we see that either $j_1$ is matched to some $j_2 \in Y_t$ that is the closest centre in $Y_t \backslash \{j_1\}$ to $j_1$, or some $j_2$ that is the closest center in $Y_t \backslash \{j_1\}$ to $j_1$ is matched to $j_3 \in Y_t$ such that $d(j_2,j_3) \leq d(j_1,j_2)$. Let $d(j_1,j_2) = \alpha$. We must have that \dpalert{what is $B^*$?} $\vol(B_{j_1}^*) \geq 1 - \frac{2D(y_t,j_1)}{\alpha}$, as otherwise we have a fractional mass greater than $\frac{2D(y_t,j_1)}{\alpha}$ serving $j_1$ in the fractional solution $y_t$ outside $B_{j_1}$, which would imply

$$ D(y_t,j_1) >\frac{2D(y_t,j_1)}{\alpha} \cdot D_{j_1} = \frac{2D(y_t,j_1)}{\alpha} \cdot \frac{\alpha}{2}  >  D(y_t,j_1)$$

giving a contradiction.

In the first case,

Our sampling process guarantees we open a center in one of $B_{j_1}$ or $B_{j_2}$, and from the above we know we don't open a centre in $B_{j_1}$ with probability at most $\frac{2D(y_t,j_1)}{\alpha}$. Since $D_{j_2} \leq \alpha/2$, \dpalert{what is $D_{j_2}$ and why is this inequality true?} any point in $B_{j_2}$ is at most $d(j_1,j_2) + D_{j_2} \leq 3 \alpha/2$ away from $j_1$. Moreover any point $x$ in $B_{j_1}$ serves $j_1$ in the fractional solution $y_t$ with a fractional allocation of $y_t^{(x)}$, so we get

$$ \E[D(\hat{Y}_t,j_1)] \leq  \vol(B_{j_1})\sum_{x \in B_{j_1}}\frac{y_t^{(x)}}{\vol(B_{j_1})}d(x,j_1) + \frac{2D(y_t,j_1)}{\alpha} \cdot \frac{3\alpha}{2}   = \sum_{x \in B_{j_1}}y_t^{(x)} d(x,j_1) +  3D(y_t,j_1) \leq 4D(y_t,j_1) $$



In this second case we must have that $d(j_2,j_3) \leq d(j_1,j_2) \leq \alpha$. Our sampling process guarantees we open a center in one of $B_{j_2}$ or $B_{j_3}$, and we don't open a centre in $B_{j_1}$ with probability at most $\frac{2D(y_t,j_1)}{\alpha}$. Since $D_{j_2} \leq \alpha/2$, any point in $B_{j_2}$ is at most $d(j_1,j_2) + D_{j_2} \leq 3 \alpha/2$ away from $j_1$. Similarly, since $D_{j_3} \leq \alpha/2$, any point in $B_{j_3}$ is at most $d(j_1,j_3) + D_{j_3} \leq d(j_1,j_2) + d(j_2,j_3) +\alpha/2 \leq 5 \alpha/2$ away from $j_1$.  Moreover any point $x$ in $B_{j_1}$ serves $j_1$ in the fractional solution $y_t$ with a fractional allocation of $y_t^{(x)}$, so we get

$$ \E[D(\hat{Y}_t,j_1)] \leq  \vol(B_{j_1})\sum_{x \in B_{j_1}}\frac{y_t^{(x)}}{\vol(B_{j_1})}d(x,j_1) + \frac{2D(y_t,j_1)}{\alpha} \cdot \frac{5\alpha}{2}   = \sum_{x \in B_{j_1}}y_t^{(x)} d(x,j_1) +  5D(y_t,j_1) \leq 6D(y_t,j_1) $$

\item Now, consider a point $j \in (R_1 \cup \ldots \cup R_{t-1}) \backslash Y_t$. From Claim \ref{filtering}, we know that there exists $i \in Y_t$ such that $D(y_t,i) \leq D(y_t,j)$ and $d(j,i) \leq 4\cdot D(y_t,j)$. Thus the expected connection cost of $j$ in the solution $hat{Y}_t$ satisfies
\[
    \E[D(\hat{Y}_t,j)] \leq  d(j,i) +  \E[D(\hat{Y}_t,i)] \leq 4\cdot D(y_t,j) + 6\cdot D(y_t,i)  \leq 10\cdot D(y_t,j),
\]
where we used the first part of the claim. 
\item Finally,  consider a point $j \in R_t$. 
Let $i$ be the point in $R_1 \cup \ldots \cup R_{t-1}$ that is closest to $j$. Clearly, 
\[
    D(y_t, j) \ge d(i, j),
\] 
since the entire fractional mass of $y_t$ is on points in $R_1 \cup \ldots \cup R_{t-1}$. 
Moreover, 
\[
    D(y_t, i) \le D(y_t, j) + d(i, j),
\]
since the RHS is at most the cost of serving $i$ using $j$'s fractional centers. 
Adding the two inequalities, we get $D(y_t, j) \ge \frac{D(y_t, i)}{2}$. Note that 
\[
    \E[D(\hat{Y}_t, i)] \le 10\cdot D(y_t, i),
\]
since $i\in R_1\cup R_2\cup \ldots \cup R_{t-1}$. The distance from the center serving $i$ to the point $j$ certifies
\[
   \E[D(\hat{Y}_t, j)] \le \E[D(\hat{Y}_t, i)] + d(i, j) \le 10\cdot D(y_t, i) + D(y_t, j) \le 21 \cdot D(y_t, j).\qedhere
\]    
\eat{

Let $i$ be the closest centre in $R_1 \cup \ldots \cup R_{t-1}$ to it. Using triangle inequality and the definition of $i$ (as $\hat{Y}_t \subseteq R_1 \cup \ldots \cup R_{t-1}$ and $y_t$ only has fractional mass on points in $R_1 \cup \ldots \cup R_{t-1}$), we obtain
\begin{align} \label{tri1_}
\E[D(\hat{Y}_t,j)] \leq d(i,j) + \E[D(\hat{Y}_t,i)]
\end{align}
\begin{align}
 \label{tri2_} D(y_t,j) \geq d(i,j)
\end{align}
\begin{align} \label{tri3_} D(y_t,j) \geq D(y_t,i) - d(i,j)
\end{align}
Using \eqref{tri1_},\eqref{tri2_}, \eqref{tri3_} it would suffice to show
\begin{align*}
d(i,j) + \E[D(\hat{Y}_t,i)] \leq 21 d(i,j) \quad \text{OR}  \quad d(i,j) + \E[D(\hat{Y}_t,i)] \leq 21 (D(y_t,i) - d(i,j))
\end{align*}
Thus for the sake of contradiction assume both of the inequalities above are untrue. Using the first and second parts of the claim we obtain \begin{align}\label{contra1}
 d(i,j) + 10D(y_t,i) \geq d(i,j) + \E[D(\hat{Y}_t,i)] > 21 d(i,j) \implies D(y_t,i) > 2 d(i,j)
\end{align}
Similarly, we then obtain
 \begin{align}\label{contra2}
 d(i,j) + 10D(y_t,i) \geq d(i,j) + \E[D(\hat{Y}_t,i)] > 21 D(y_t,i)  - 21d(i,j) \implies D(y_t,i) <  2 d(i,j)
\end{align}
Combining \ref{contra1} and \ref{contra2} gives a contradiction.

}
\end{itemize}
\end{proof}

Now, using Claim \ref{randomizedrounding} we obtain

$$ \E[C_t(\hat{Y}_t)] = \sum_{x \in R_t} w_x \E[D(\hat{Y}_t,x)] \leq \sum_{x \in R_t} 21 w_x D(y_t,x) = 21C_t(y_t)$$
as desired.

}

\section{Final Bounds}

We now put together our reduction, fractional algorithm, and rounding procedure to obtain deterministic and randomized algorithms for \onlinekmed. By combining \Cref{thm:reduction}, \Cref{thm:fractional} and \Cref{lem:simplerounding}, we have the following theorem.

\begin{theorem} \label{thm:deterministic_alg}
We give a deterministic algorithm for \onlinekmed with the following  performance bound:

\[
    \sum_{t=1}^T \rho(Y_t, V_t) \leq O(k) \cdot\sum_{t=1}^T \rho(Y^*, V_t) +  O\left(k^4\Delta \cdot \sqrt{T}  \log(T) \log(Tk)\right)
\]
where  $Y^*$ is the best fixed solution in hindsight.
\end{theorem}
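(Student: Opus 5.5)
The plan is to compose the three ingredients in order: run the reduction of \Cref{thm:reduction} online, run Algorithm~1 on the resulting \bddonlinekmed instance, and output at each time $t$ the deterministic rounding $Y_t$ of $y_t$ from \Cref{sec:detrounding}. Since the reduction only needs a regret-type guarantee for integral solutions on the bounded instance, it suffices to prove
\[
\sum_{t=1}^T \Cost_t(Y_t,R_t) \le O(k)\cdot \min_{\widehat{Y}\subseteq R,|\widehat{Y}|=k}\sum_{t=1}^T \Cost_t(\widehat{Y},R_t) + O\!\left(k^4\Delta\sqrt{T}\log T\log(Tk)\right).
\]
We then invoke \Cref{thm:reduction} with $\alpha=O(k)$ and $\gamma=O(k^4\Delta\sqrt{T}\log T\log(Tk))$.

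First, I check that $Y_t$ is a legal output. By \Cref{lem:det-feasible}, $Y_t$ has at most $k$ centers, all in $R_{<t}$. If it has fewer than $k$, I pad it with arbitrary points of $R_{<t}$, which only lowers cost. Note that $R_{<t}\subseteq V_{<t}$ because each $C_t\subseteq V_t$.

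Second, I bound the rounding loss pointwise. By \Cref{lem:simplerounding}, every $j\in R_t$ satisfies $D(Y_t,j)\le(4k+3)D(y_t,j)$. Multiplying by $w_j$ and summing gives $\Cost_t(Y_t)\le(4k+3)\Cost_t(y_t)$ for every $t$.

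Third, I apply \Cref{thm:fractional} with $y^*$ taken to be the indicator vector of the minimizer $\widehat{Y}$. This is an integral vector in $K_T$ because $\widehat{Y}\subseteq R$. The theorem gives $\sum_t\Cost_t(y_t)\le O(\sum_t\Cost_t(\widehat{Y}))+O(k^3\Delta\sqrt{T}\log T\log(kT))$. Multiplying by $4k+3$ yields the displayed inequality, and the additive term becomes $O(k^4\Delta\sqrt T\log T\log(Tk))$.

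The points needing care are bookkeeping rather than new ideas. The first is that $R_0$ is included in $R$: \Cref{thm:fractional} allows $y^*\in K_T$, whose coordinates cover all of $R_{\le T}$, so this matches. The second is the $O(T)$ term in the proof of \Cref{thm:reduction}: it is absorbed into $O(\alpha)\sum_t\rho(Y^*,V_t)$ using $\rho\ge1$, and it does not get multiplied into the regret.

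The main obstacle is confirming that the constant inside \Cref{thm:fractional} stays absolute. That constant comes from \Cref{lem:benchmark} combined with \Cref{lem:costinterval} over $O(k\log T)$ phases, and so does not depend on $k$. Consequently the competitive ratio picks up exactly one factor of $O(k)$ from rounding, and nothing more.
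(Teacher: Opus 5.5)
Your proposal is correct and matches the paper's proof, which obtains this theorem by composing \Cref{thm:reduction} with \Cref{thm:fractional} (applied to the indicator vector of the best $\widehat{Y}\subseteq R$) and the $(4k+3)$ pointwise rounding loss of \Cref{lem:simplerounding}, giving $\alpha=O(k)$ and $\gamma=O(k^4\Delta\sqrt{T}\log T\log(Tk))$. Your extra bookkeeping is sound and makes explicit what the paper leaves implicit: padding $Y_t$ to exactly $k$ centers from $R_{<t}\subseteq V_{<t}$, and noting that the multiplicative constant in \Cref{thm:fractional} is absolute, so only the rounding contributes the factor of $k$.
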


Similarly, by combining \Cref{thm:reduction}, \Cref{thm:fractional} and \Cref{lem:randrounding}, we have the following theorem.

\begin{theorem} \label{thm:rand_alg}
We give a randomized algorithm for \onlinekmed with the following expected performance bound:
\[
    \sum_{t=1}^T \E[\rho(Y_t, V_t)] \le O(1) \cdot \sum_{t=1}^T \rho(Y^*, V_t) +  O\left(k^3\Delta \cdot  \sqrt{T}  \log(T) \log(Tk) \right)
\]
where  $Y^*$ is the best fixed solution in hindsight.
\end{theorem}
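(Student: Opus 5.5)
We assemble Theorem~\ref{thm:rand_alg} from three ingredients: the reduction (Theorem~\ref{thm:reduction}), the fractional guarantee (Theorem~\ref{thm:fractional}), and the rounding guarantee (Lemma~\ref{lem:randrounding}).

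\textbf{Step 1 (the algorithm).} We map each $V_t$ online to the weighted instance $R_t$ via the reduction of Section~\ref{sec:reduction-small}. We run Algorithm~1 on $R_0,R_1,\ldots$ to get fractional $y_t \in K_{t-1}$. We then apply the randomized rounding of Section~\ref{sec:randround} to obtain $Y_t \subseteq R_{<t}$ with $|Y_t|\le k$, padding arbitrarily to exactly $k$ centers, which can only decrease cost. Since $Y_t$ is computed before $R_t$ (and hence $V_t$) is revealed, this is a valid online algorithm.

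\textbf{Step 2 (rounding loss).} By the third property of Lemma~\ref{lem:randrounding}, every $j\in R_t$ satisfies $\E[D(Y_t,j)]\le 17\, D(y_t,j)$. Summing with weights $w_j$ gives $\E[\Cost_t(Y_t,R_t)] \le 17\,\Cost_t(y_t)$.

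\textbf{Step 3 (fractional bound).} Let $\widehat Y\subseteq R$ minimize $\sum_t \Cost_t(\widehat Y,R_t)$; its indicator vector is an integral vector in $K_T$. Theorem~\ref{thm:fractional} then gives
\[
\sum_t \E[\Cost_t(Y_t,R_t)] \le O(1)\sum_t \Cost_t(\widehat Y,R_t) + O\!\left(k^3\Delta\sqrt T\log T\log(kT)\right).
\]
This is exactly the hypothesis of Theorem~\ref{thm:reduction}, with $\alpha=O(1)$ and $\gamma = O(k^3\Delta\sqrt T\log T\log(kT))$.

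\textbf{Step 4 (returning to ratios).} Theorem~\ref{thm:reduction} is stated for deterministic solutions, so we must confirm it holds in expectation. Its proof uses two per-round inequalities: $\rho(Y_t,V_t)\le O(1)\Cost_t(Y_t,R_t)+O(1)$, and $\Cost_t(Y^*,R_t)\le \rho(Y^*,V_t)+1$. Both are pointwise, and the first is linear in $\Cost_t(Y_t,R_t)$. Taking expectations of the first and summing therefore reproduces the chain with $\E[\Cost_t(Y_t,R_t)]$ in place of $\Cost_t(Y_t,R_t)$. The $O(T)$ additive terms are absorbed using $\rho(Y^*,V_t)\ge 1$. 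This yields the claimed bound.

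\textbf{Main obstacle.} The only delicate point is this expectation bookkeeping: the reduction and the comparison to $\widehat Y$ must be applied to expected costs, not realized ones. This is legitimate because all the inequalities involved are linear and $Y^*$ is fixed, independent of the algorithm's randomness. The regret exponent also checks out: it stays $k^3$, since the randomized rounding loses only a constant factor, unlike the $O(k)$ loss in the deterministic case.
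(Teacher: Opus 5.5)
Your proposal is correct and follows the paper's proof, which consists of combining Theorem~\ref{thm:reduction}, Theorem~\ref{thm:fractional}, and the third property of Lemma~\ref{lem:randrounding} (rounding loss $17$). Your Step~4 observation is also right: the reduction's per-round inequalities hold pointwise, so they survive taking expectations when the instance sequence (and hence $Y^*$ and $\widehat{Y}$) is fixed independently of the rounding randomness. The paper leaves this bookkeeping implicit, and you handle it correctly.
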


\section{Detailed Experiments}
In this section, we give a detailed description of the experiments in \Cref{sec:experiments}. We use a heuristic to improve the performance of the rounding algorithms in all our experiments, where we do a binary search over the best threshold on the distance to the already opened centers. For the deterministic rounding algorithm for \bddonlinekmed, we open a new center at $i$ if its distance to the already opened centers exceeds $(2k+2) D(y_t,i)$. We can always slacken this factor of $(2k+2)$ to something lower as long as the total number of centers opened does not exceed $k$, without worsening the theoretical approximation ratio. Similarly in the randomized rounding algorithm for \bddonlinekmed, we can slacken the threshold of $4D(y_t,i)$ in the initial filtration step, as long as the fractional mass in the balls we later consider are at least $1/2$. 

We compute optimal-in-hindsight solutions for the instances using Gurobi's ILP solver, accessed under an academic license. Unless mentioned otherwise, experiments were conducted using Google Colab (code available at \url{https://github.com/neurips2025-colab/neurips2025}), using four Intel\textsuperscript{\textregistered} Xeon\textsuperscript{\textregistered} CPUs (2.20 GHz) with 13 GB of RAM each, running in parallel over about 24 compute hours. In all our experiments, the underlying metric is the Euclidean metric. Unless mentioned otherwise, we plot the ratio between $\sum_{\tau=1}^t \rho(Y_{\tau},V_{\tau})$ and $\sum_{\tau=1}^t \rho(Y^*,V_{\tau})$ in our approximation ratio plots, where $Y^*$ is the optimal-in-hindsight solution for the entire input sequence. 


\textbf{Uniform Square:} In this example, the underlying metric space consists of $400$ uniformly random points  in the unit square $[0,1] \times [0,1]$.  $10$ points arrive each round, chosen uniformly at random, and we run the experiment with a time horizon of $T=1000$. We compare the cost of the optimal solution with the deterministic and randomized algorithms, as well as the intermediate fractional solution that we maintain. We generate $10$ random instances in total, and report the standard deviation and mean of the approximation ratio over time. For the randomized algorithm, we first average the approximation ratio for each instance over $5$ random runs of the rounding algorithm. We run this experiment with $k=2,3,6$.

As we see in Fig.~\ref{fig:uniform_scatter_ratio_comparison_appendix}, both the randomized and deterministic algorithms converge to natural solutions,  distributed roughly uniformly over the unit square. The approximation ratio also approaches $1$, especially after the influence of the initial additive regret declines.
\begin{figure}[ht]
    \centering

    \begin{subfigure}{0.3\textwidth}
        \centering
        \includegraphics[width=\linewidth]{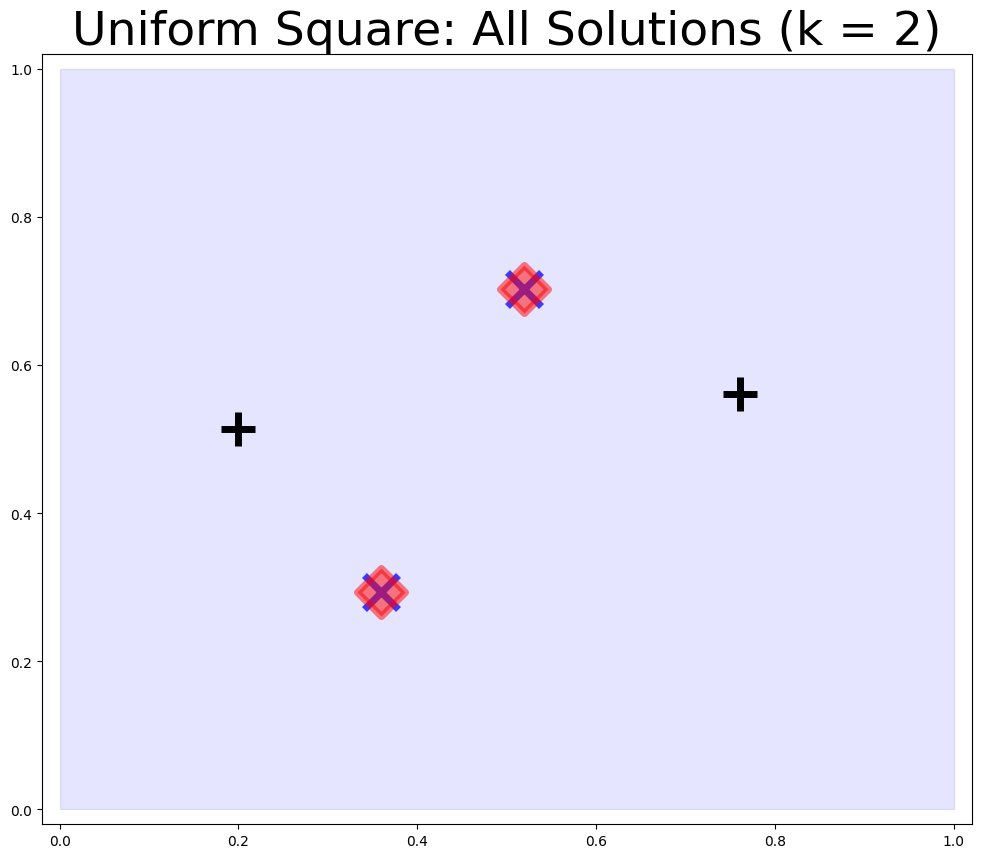}
    \end{subfigure}
    \begin{subfigure}{0.4\textwidth}
        \centering
        \includegraphics[width=\linewidth]{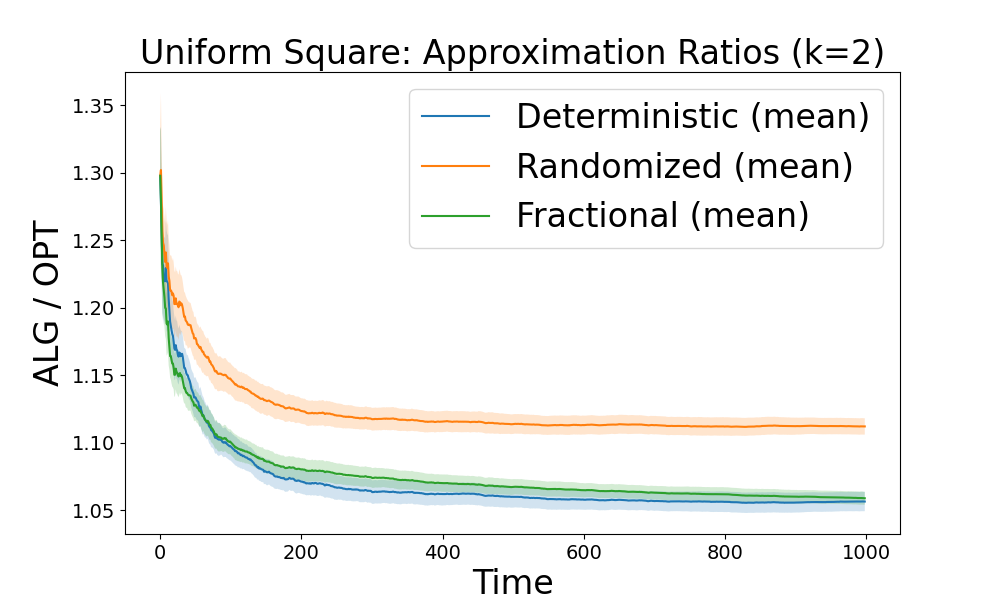}
    \end{subfigure}

    \vspace{0.2cm}

    \begin{subfigure}{0.3\textwidth}
        \centering
        \includegraphics[width=\linewidth]{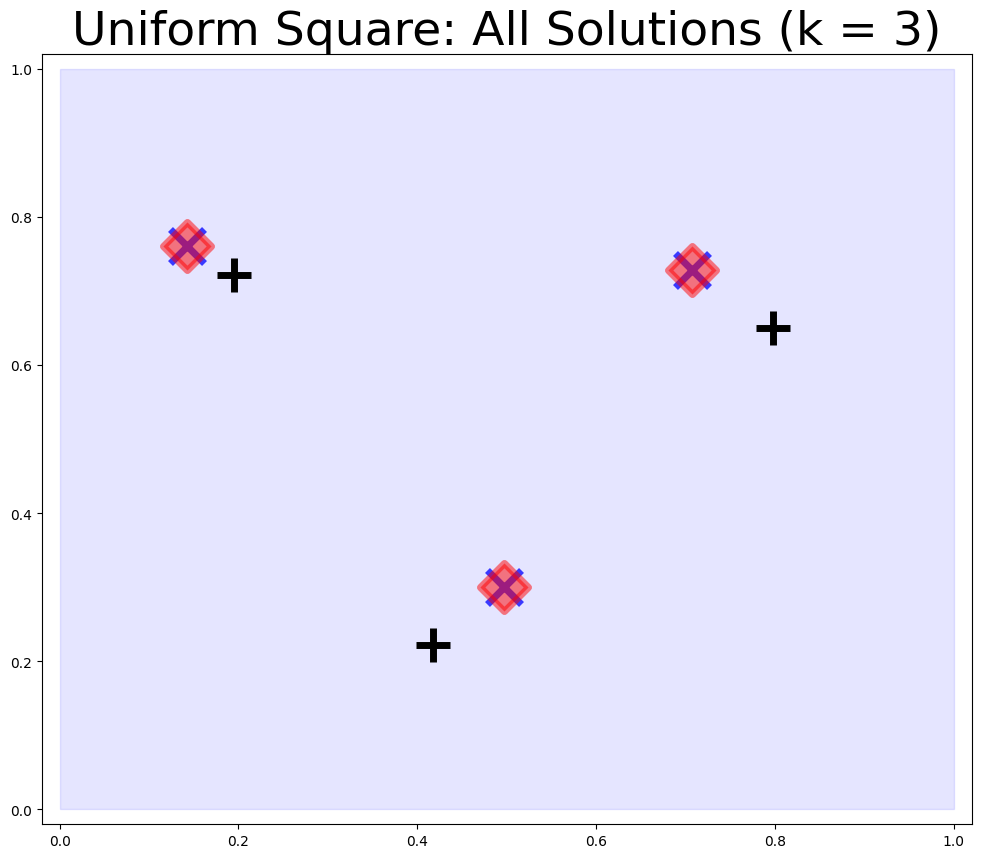}
    \end{subfigure}
    \begin{subfigure}{0.4\textwidth}
        \centering
        \includegraphics[width=\linewidth]{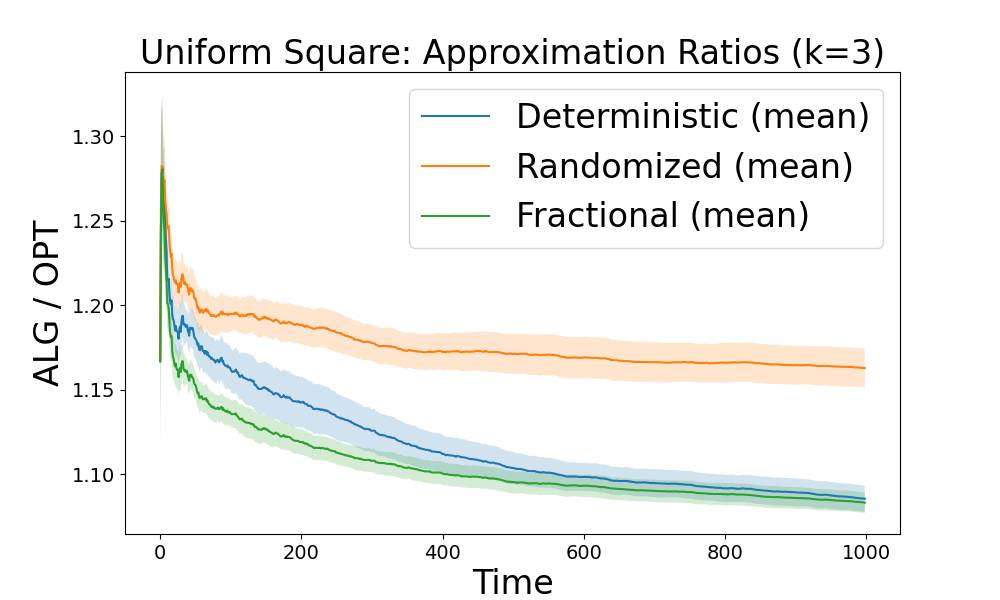}
    \end{subfigure}

    \vspace{0.2cm}

    \begin{subfigure}{0.3\textwidth}
        \centering
        \includegraphics[width=\linewidth]{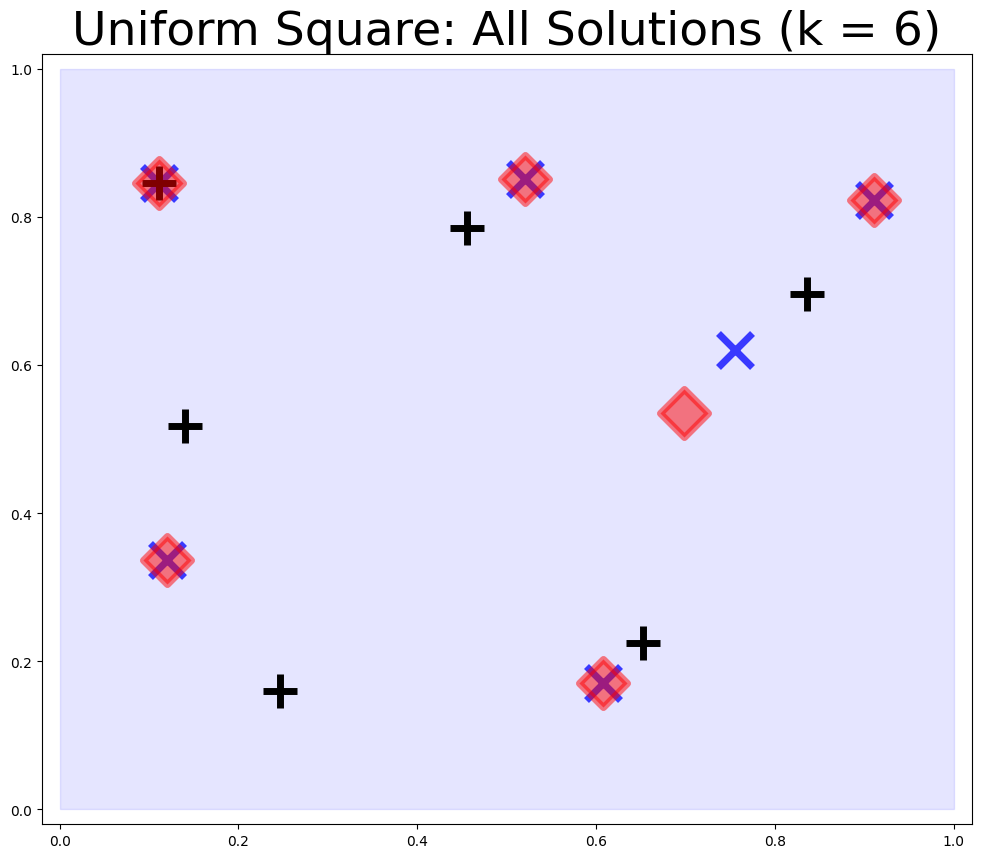}
    \end{subfigure}
    \begin{subfigure}{0.4\textwidth}
        \centering
        \includegraphics[width=\linewidth]{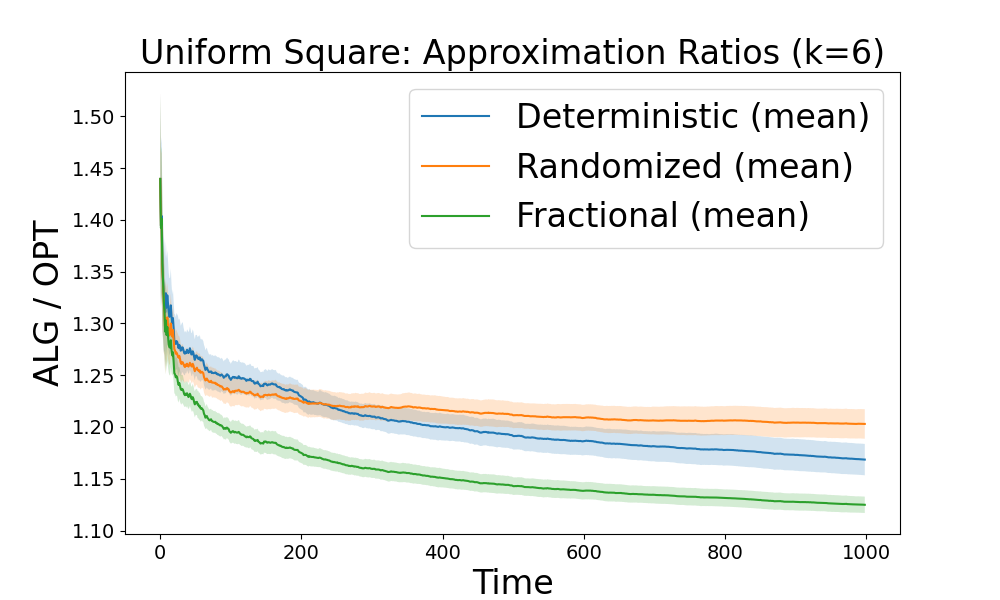}
    \end{subfigure}

    \caption{{\bf (Uniform Square):} The optimal (black plus), deterministic (blue cross), and randomized (red diamond) solutions for one of the random instances (left) and approximation ratios – avg. and std. dev. over 10 random instances (right) for $k=2,3,6$.}
    \label{fig:uniform_scatter_ratio_comparison_appendix}
\end{figure}

\textbf{Uniform Rectangle:} In this example, the underlying metric space consists of $400$ uniformly random points  in the rectangle $[0,1] \times [0,10]$.  $10$ points arrive each round, chosen uniformly at random, and we run the experiment with a time horizon of $T=1000$. We compare the cost of the optimal solution with the deterministic and randomized algorithms, as well as the intermediate fractional solution that we maintain. We generate $10$ random instances in total, and report the standard deviation and mean of the approximation ratio over time. For the randomized algorithm, we first average the approximation ratio for each instance over $5$ random runs of the rounding algorithm. We run this experiment with $k=2,3,6$.

As we see in Fig.~\ref{fig:uniformr_scatter_ratio_comparison_appendix}, both the randomized and deterministic algorithms converge to natural solutions,  distributed roughly uniformly over the rectangle. The approximation ratio also approaches $1$, especially after the influence of the initial additive regret declines.\\
\begin{figure}[ht]
    \centering

    \begin{subfigure}{0.3\textwidth}
        \centering
        \includegraphics[width=\linewidth]{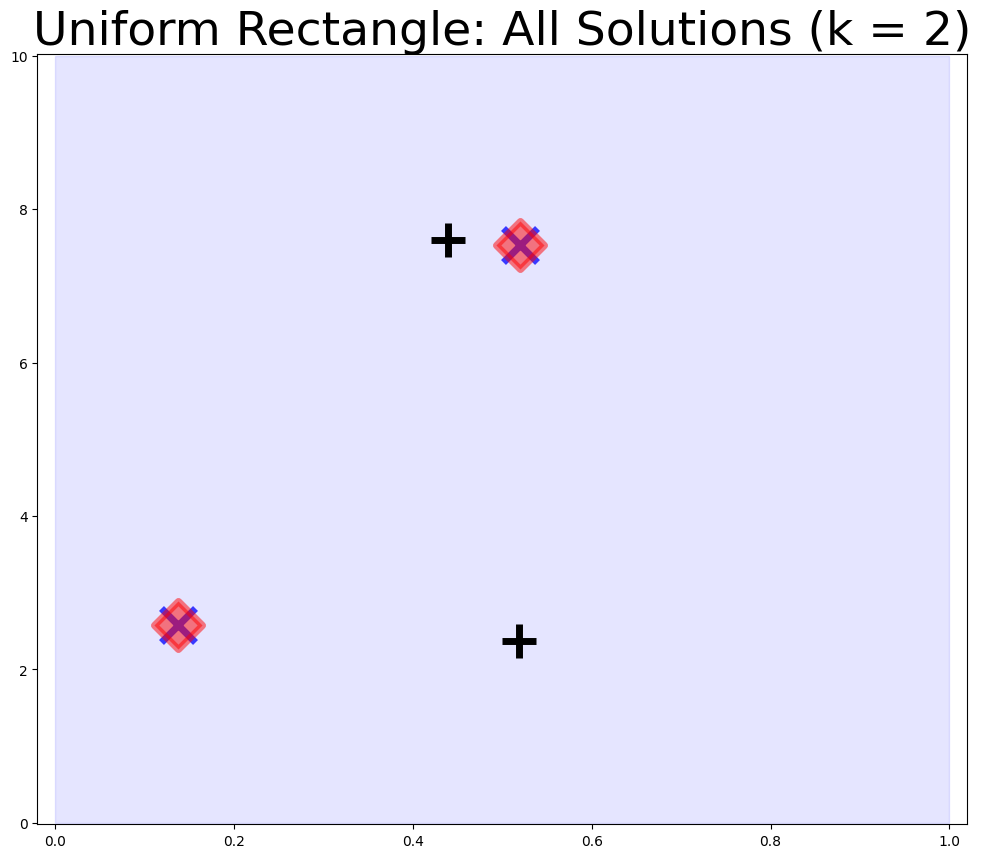}
    \end{subfigure}
    \begin{subfigure}{0.4\textwidth}
        \centering
        \includegraphics[width=\linewidth]{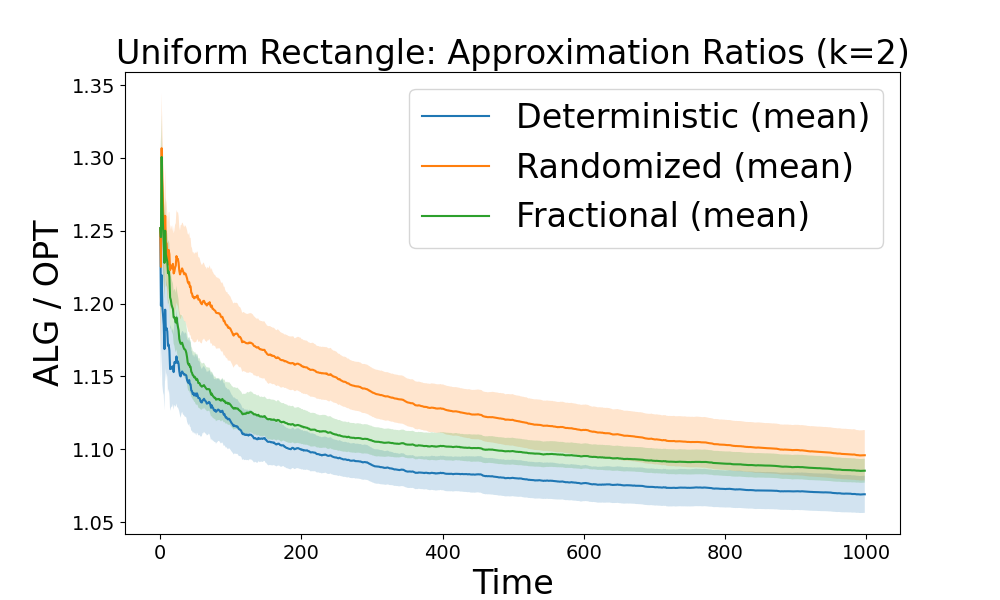}
    \end{subfigure}

    \vspace{0.2cm}

    \begin{subfigure}{0.3\textwidth}
        \centering
        \includegraphics[width=\linewidth]{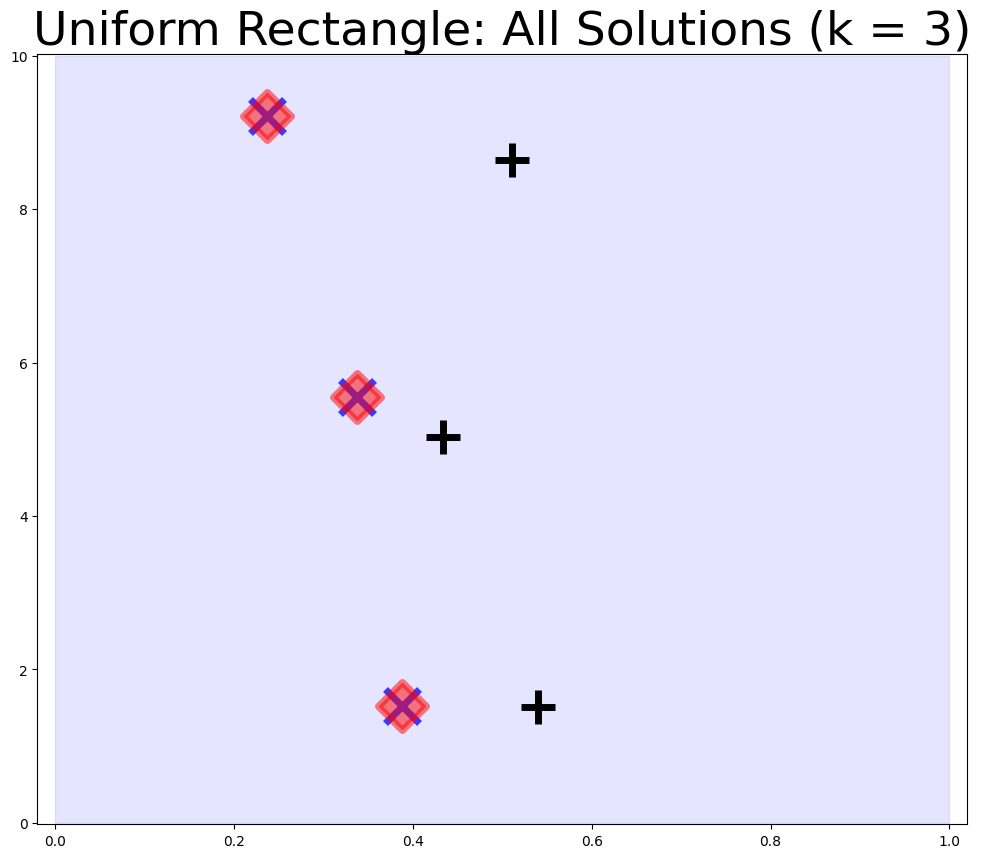}
    \end{subfigure}
    \begin{subfigure}{0.4\textwidth}
        \centering
        \includegraphics[width=\linewidth]{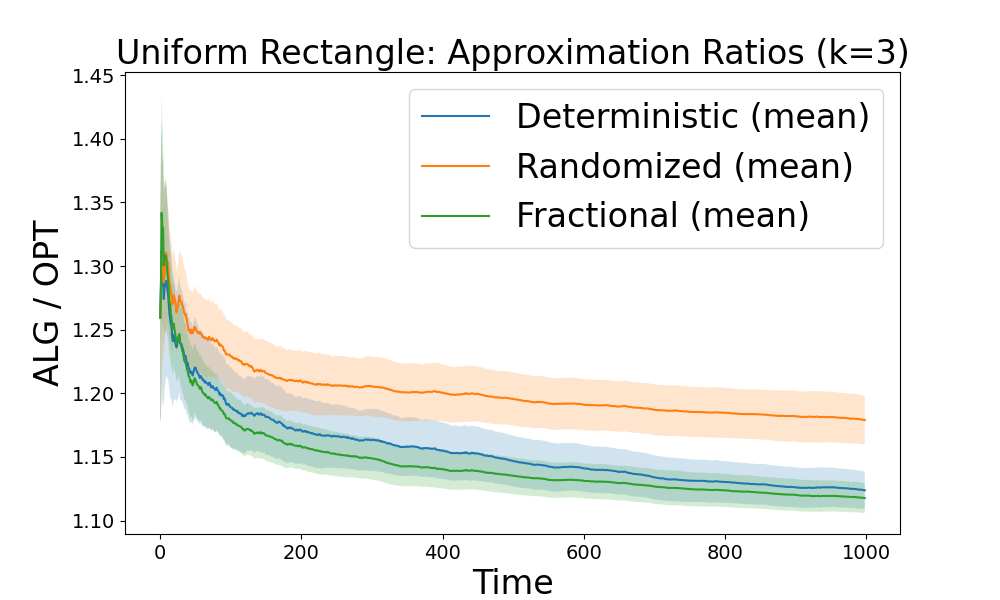}
    \end{subfigure}

    \vspace{0.2cm}

    \begin{subfigure}{0.3\textwidth}
        \centering
        \includegraphics[width=\linewidth]{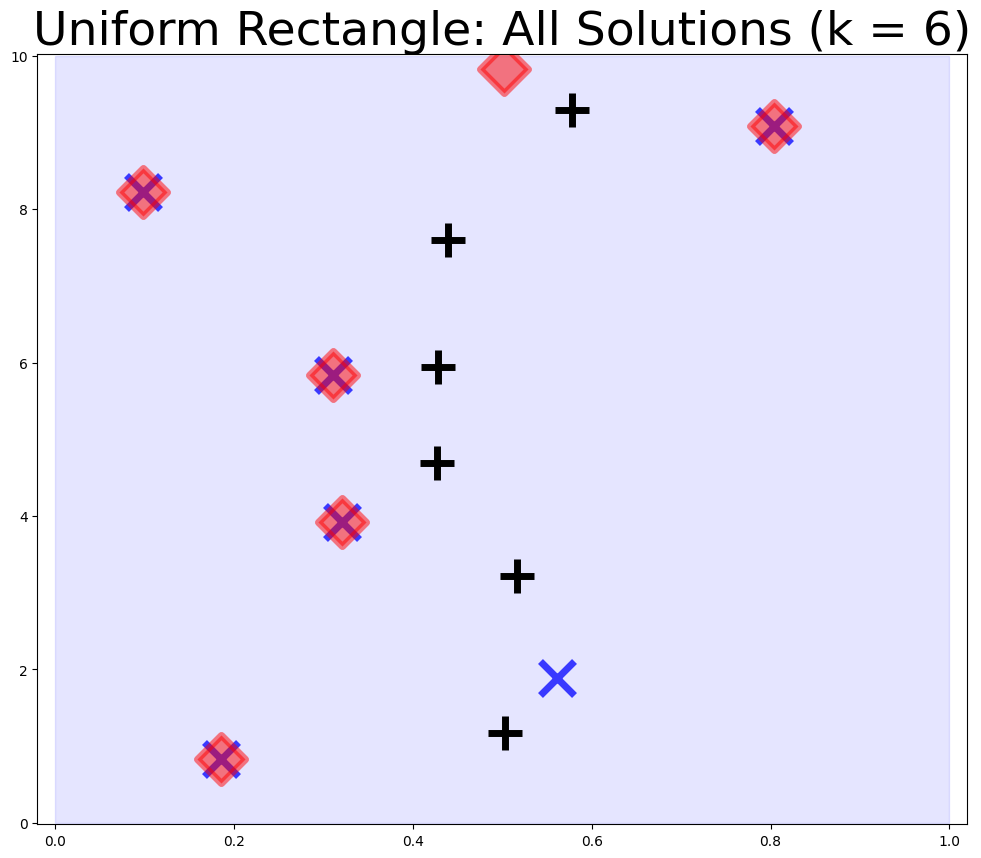}
    \end{subfigure}
    \begin{subfigure}{0.4\textwidth}
        \centering
        \includegraphics[width=\linewidth]{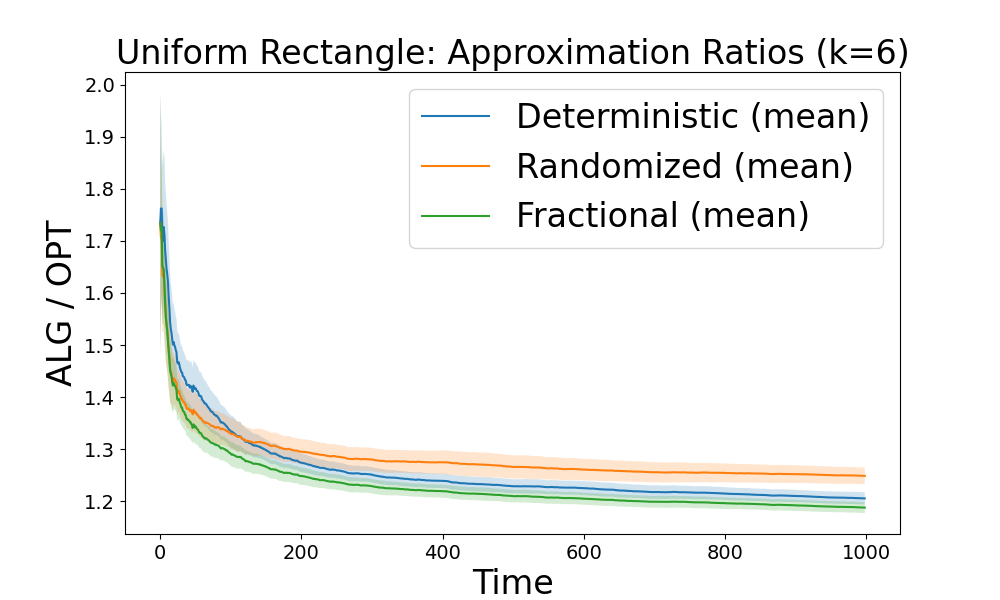}
    \end{subfigure}

    \caption{{\bf (Uniform Rectangle):} The optimal (black plus), deterministic (blue cross), and randomized (red diamond) solutions for one of the random instances (left) and approximation ratios – avg. and std. dev. over 10 random instances (right) for $k=2,3,6$.}
    \label{fig:uniformr_scatter_ratio_comparison_appendix}
\end{figure}

\textbf{Multiple Clusters:} In this example, the underlying metric space consists of $k$ clusters with center in the unit square $[0,1] \times [0,1]$. The cluster centers are chosen uniformly at random from the unit square, and we then generate points in a radius of $0.05$ around each cluster center, for a total of about $400$ points. $20$ points arrive each round, chosen uniformly at random, and we run the experiment with a time horizon of $T=1000$. We compare the cost of the optimal solution with the deterministic and randomized algorithms, as well as the intermediate fractional solution that we maintain. We generate $10$ random instances in total, and report the standard deviation and mean of the approximation ratio over time. For the randomized algorithm, we first average the approximation ratio  for each instance over $5$ random runs of the rounding algorithm. We run this experiment with $k=4,8,12,16$.

\begin{figure}[ht]
    \centering

    \begin{subfigure}{0.3375\textwidth}
        \centering
        \includegraphics[width=\linewidth]{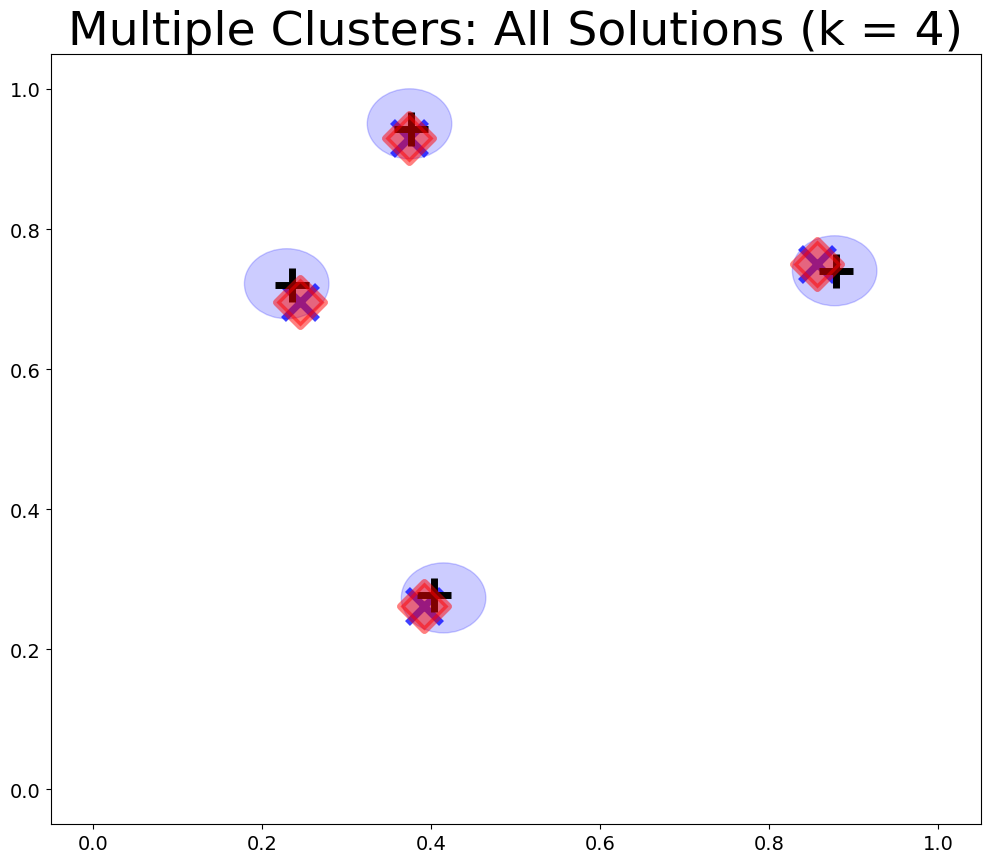}
    \end{subfigure}
    \begin{subfigure}{0.5\textwidth}
        \centering
        \includegraphics[width=\linewidth]{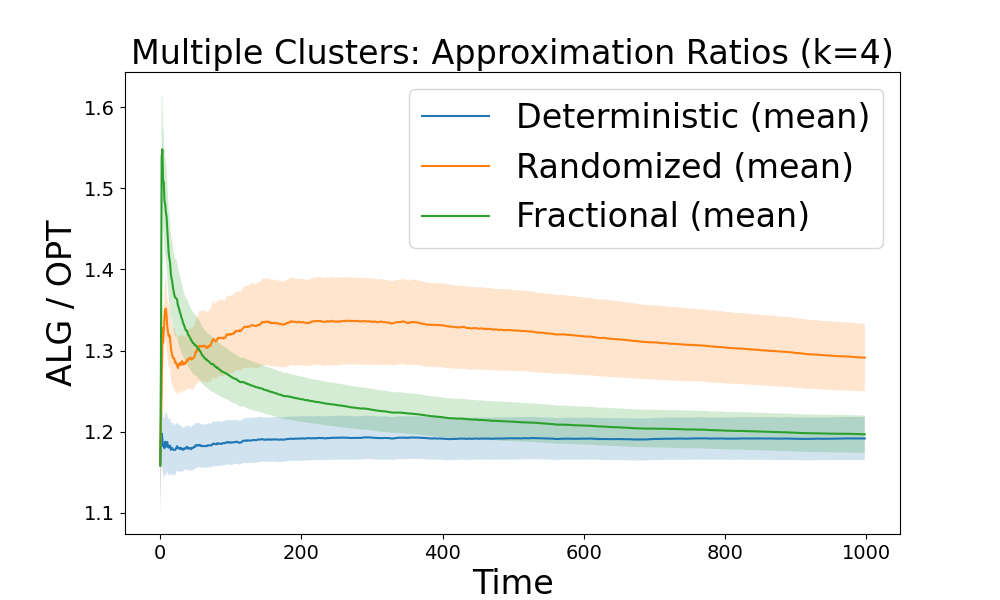}
    \end{subfigure}

    \vspace{0.1cm}

    \begin{subfigure}{0.3375\textwidth}
        \centering
        \includegraphics[width=\linewidth]{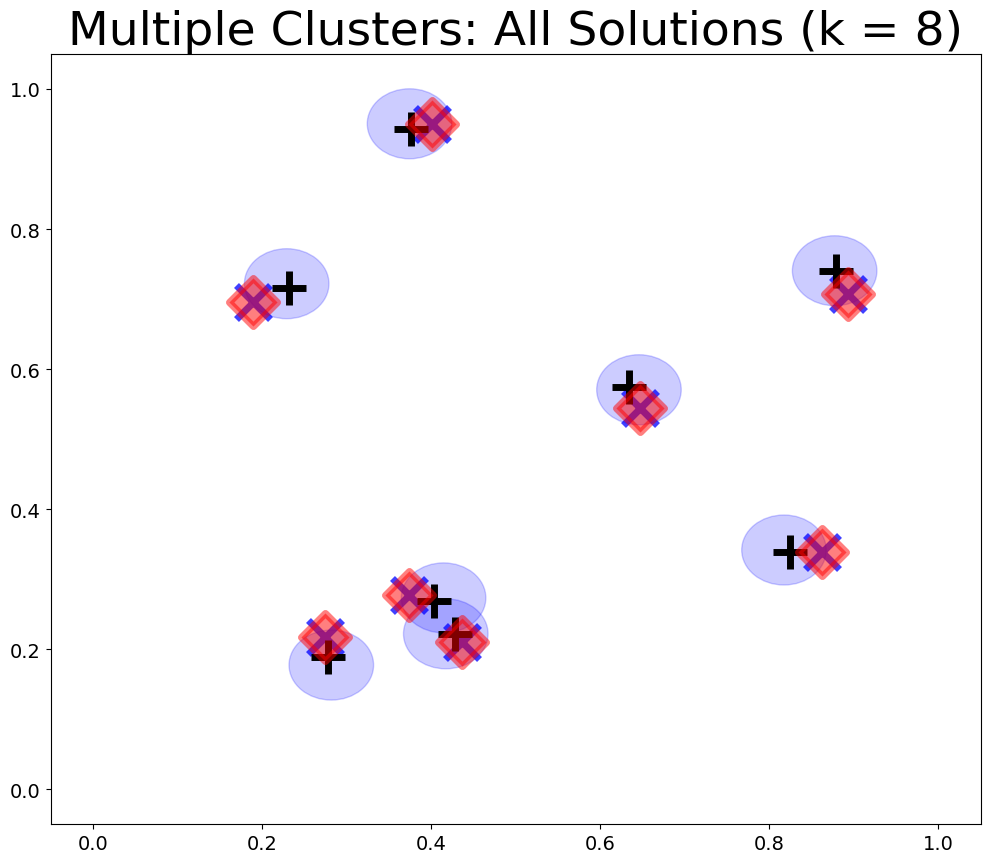}
    \end{subfigure}
    \begin{subfigure}{0.5\textwidth}
        \centering
        \includegraphics[width=\linewidth]{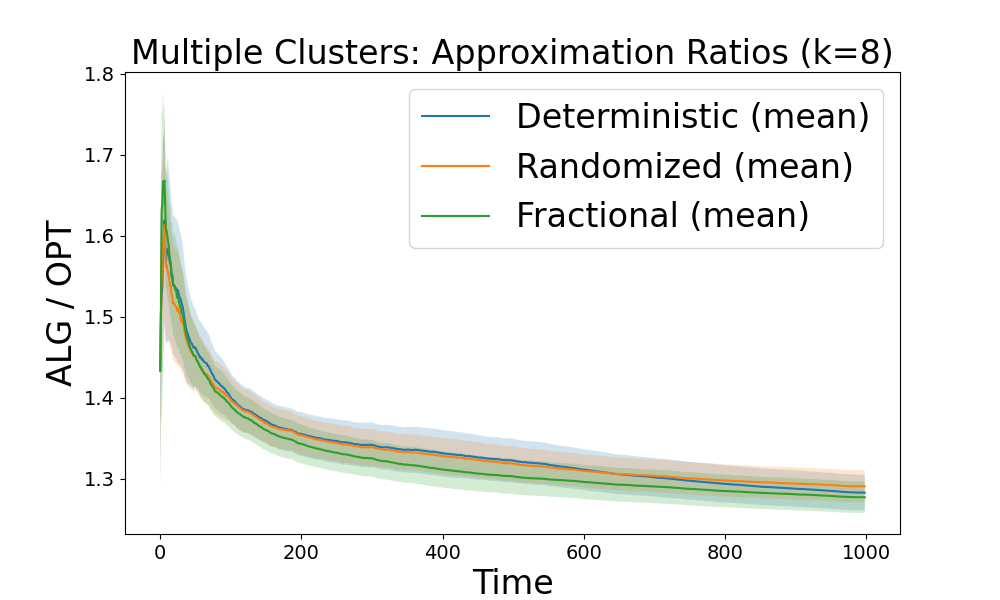}
    \end{subfigure}

    \vspace{0.1cm}

    \begin{subfigure}{0.3375\textwidth}
        \centering
        \includegraphics[width=\linewidth]{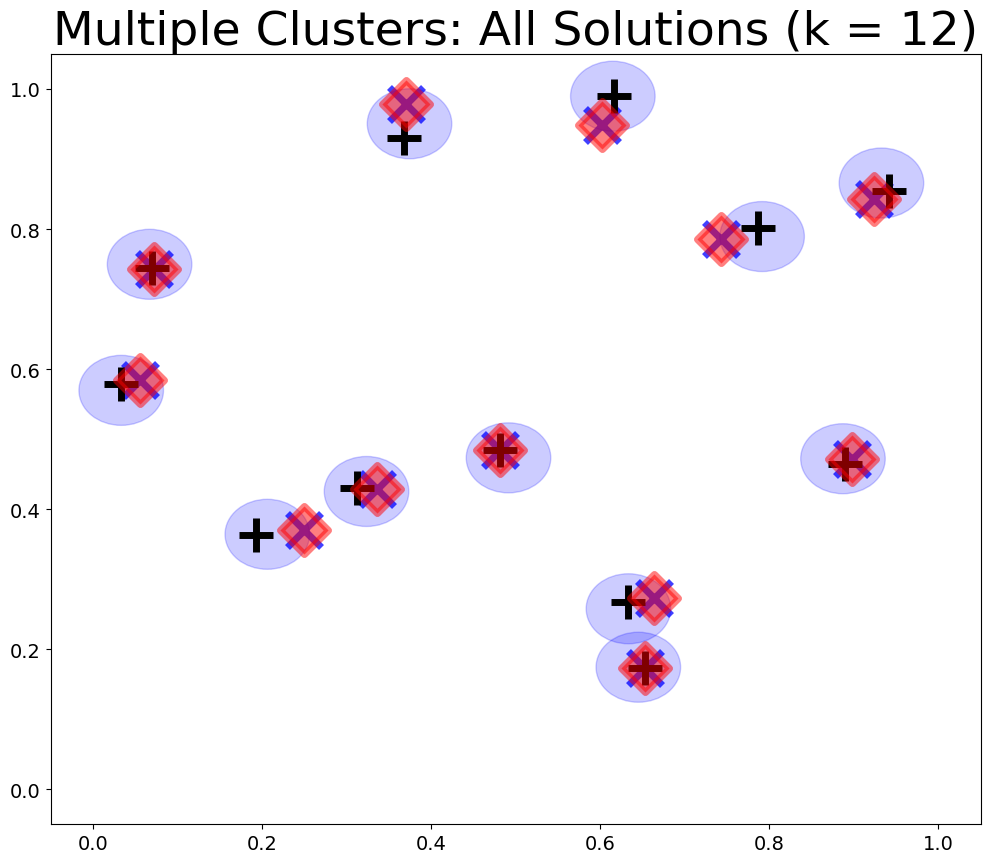}
    \end{subfigure}
    \begin{subfigure}{0.5\textwidth}
        \centering
        \includegraphics[width=\linewidth]{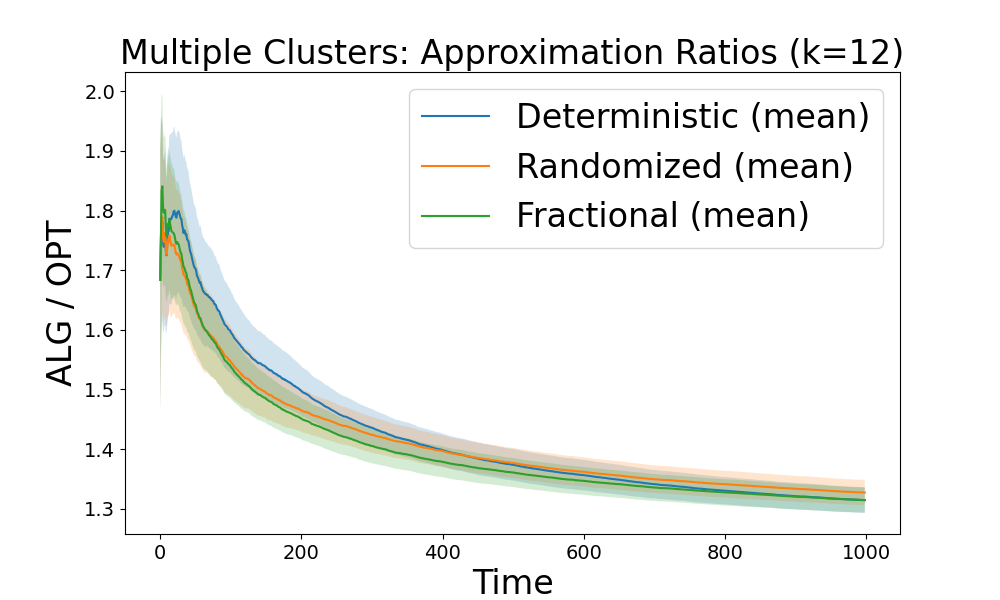}
    \end{subfigure}

    \vspace{0.1cm}

    \begin{subfigure}{0.3375\textwidth}
        \centering
        \includegraphics[width=\linewidth]{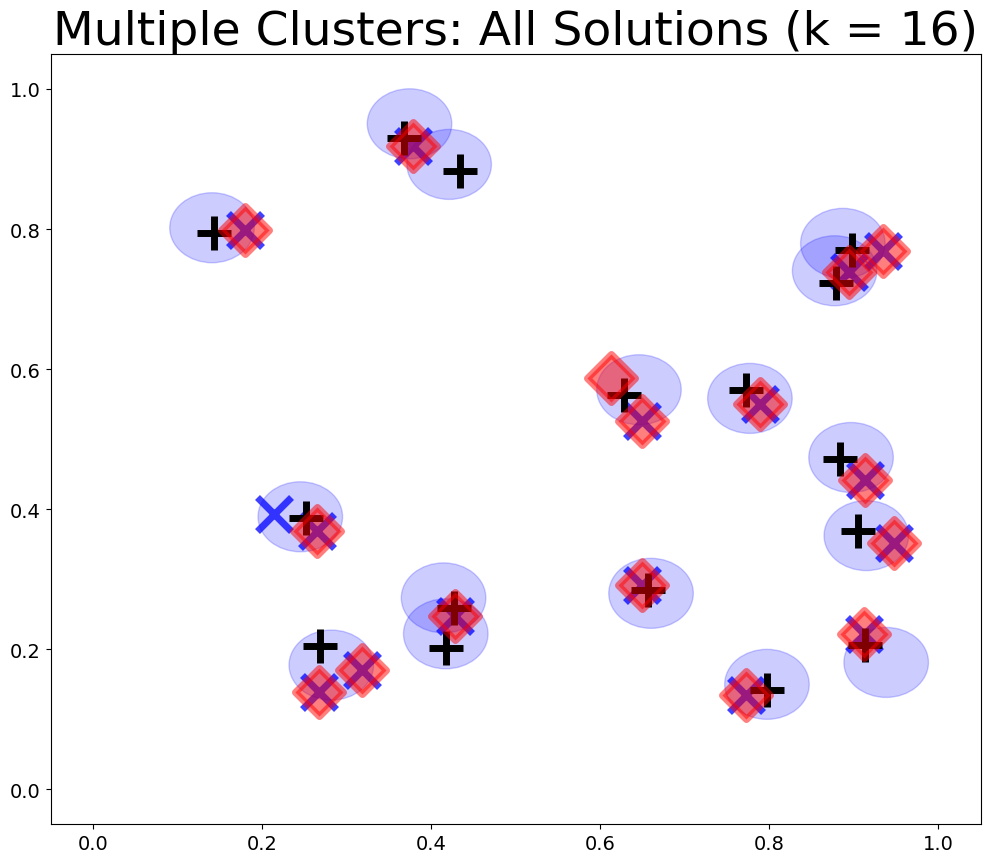}
    \end{subfigure}
    \begin{subfigure}{0.5\textwidth}
        \centering
        \includegraphics[width=\linewidth]{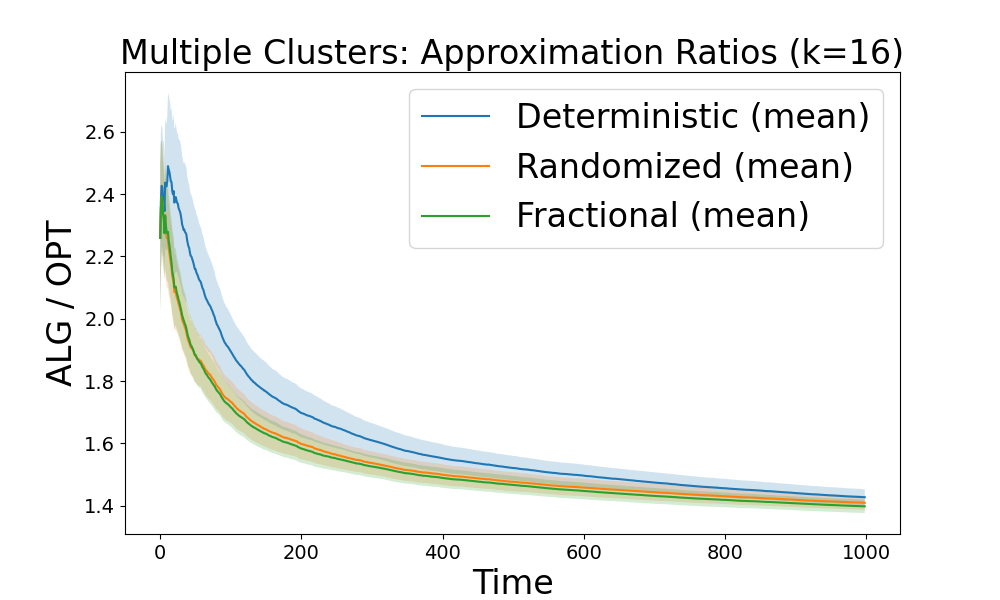}
    \end{subfigure}

    \caption{{\bf (Multiple Clusters):} the optimal (black plus), deterministic (blue cross), and randomized (red diamond) solutions for one of the random instances (left) and approximation ratios - avg. and std. dev. over 10 random instances (right) for $k=4,8,12,16$.}
    \label{fig:multiple_scatter_ratio_comparison_appendix}
\end{figure}

Once again, as we see in Fig.~\ref{fig:multiple_scatter_ratio_comparison_appendix}, both the randomized and deterministic algorithms converge to natural solutions, with centers distributed over the different clusters. The approximation ratio also approaches $1$, especially after the influence of the initial additive regret declines.

\eat{
\begin{figure}[ht]
    \centering

    \begin{subfigure}{0.23\textwidth}
        \centering
        \includegraphics[width=\linewidth]{images/clustered_scatter4.png}
    \end{subfigure}
    \hfill
    \begin{subfigure}{0.23\textwidth}
        \centering
        \includegraphics[width=\linewidth]{images/clustered_scatter8.png}
    \end{subfigure}
    \hfill
    \begin{subfigure}{0.23\textwidth}
        \centering
        \includegraphics[width=\linewidth]{images/clustered_scatter12.png}
    \end{subfigure}
    \hfill
    \begin{subfigure}{0.23\textwidth}
        \centering
        \includegraphics[width=\linewidth]{images/clustered_scatter16.png}
    \end{subfigure}

    \vspace{0.3cm}

    \begin{subfigure}{0.23\textwidth}
        \centering
        \includegraphics[width=\linewidth]{images/varying_ratio_comparison_k4.png}
    \end{subfigure}
    \hfill
    \begin{subfigure}{0.23\textwidth}
        \centering
        \includegraphics[width=\linewidth]{images/varying_ratio_comparison_k8.png}
    \end{subfigure}
    \hfill
    \begin{subfigure}{0.23\textwidth}
        \centering
        \includegraphics[width=\linewidth]{images/varying_ratio_comparison_k12.png}
    \end{subfigure}
    \hfill
    \begin{subfigure}{0.23\textwidth}
        \centering
        \includegraphics[width=\linewidth]{images/varying_ratio_comparison_k16.png}
    \end{subfigure}

    \caption{{\bf (Multiple Clusters):}  the optimal (black plus), deterministic (blue cross), and randomized (red diamond) solutions (top figure) and approximation ratios - avg. and std. dev. over 10 random instances (bottom figure) for $k=4,8,12,16$.}
    \label{fig:multiple_scatter_ratio_comparison_appendix}
\end{figure}
}

\textbf{Uniform Hypersphere:} In this example, $k=1$ and the underlying metric space consists of $400$ points: one point is the origin and the rest are chosen uniformly at random from the unit hypersphere. We run the experiment with a time horizon of $T = 2000$. For the first $100$ rounds, each instance consists of $10$ points chosen at random from the boundary of the hypersphere. However, in the next round we also include the origin in the instance, thus revealing the origin to our algorithm. We then continue generating instances as before. We do this for for $2$ different dimensions of the hypersphere ($d=2,8$). We generate $10$ random instances in total, and report the standard deviation and mean of the total mass that the fractional algorithm places on the origin.

As we see in Fig.~\ref{fig:hypersphere_appendix}, the fractional mass is initially fully on the boundary, but once we introduce the origin the mass slowly shifts to being just on the origin. This happens because the origin is a better solution for the instance sequence compared to any solution solely on the boundary. Moreover, this transition is more rapid for higher dimensions, as typical distance between $2$ random points on the hypersphere increases with dimension. On the other hand, the distance from the origin to the hypersphere boundary is always $1$.

\begin{figure}
\centering
\includegraphics[scale = 0.3]{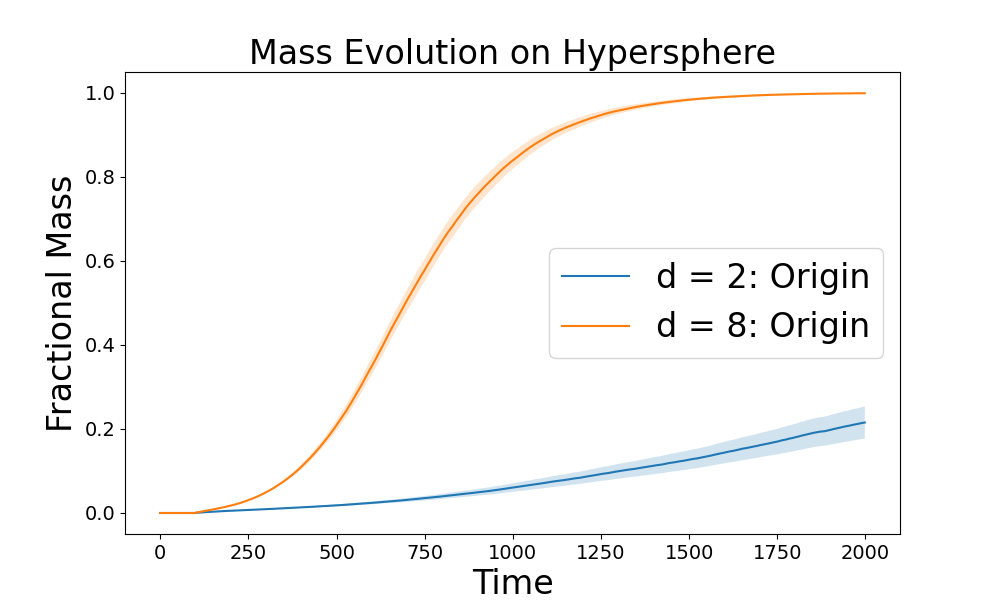}
\caption{{\bf (Uniform Hypersphere):} Fractional mass of algorithm over time at the center (avg. and std. dev. over 10 runs), $d = 2,8$}
\label{fig:hypersphere_appendix}
\end{figure}

\textbf{Oscillating Instances:} In this example, the underlying metric space consists of $2$ clusters with $10$ points each. The $2$ clusters are generated by picking points uniformly at random from squares of side length $0.4$ centered at $(0,0)$ and $(1,0)$ respectively, and we run the experiment with a time horizon of $T = 3^5$. The instance sequence alternates between the first and second cluster, where we give the first cluster for the first $3$ rounds, then the second cluster up to the $9^{th}$ round, then the first cluster up to the $27^{th}$ round, and so on. One difference in this example is that we compare our performance against the dynamically optimal fractional solution as opposed to a fixed optimal-in-hindsight integer solution: that is, for each time step we compare against the best fixed solution for the instances so far.  That is, we plot the ratio between $\sum_{\tau=1}^t \rho (y_\tau,V_\tau)$ and $\sum_{\tau=1}^t \rho (\OPT_t,V_\tau)$, where $\OPT_t$ is the best fixed fractional solution for instances $V_1,...,V_t$ and $y_{\tau}$ is the intermediate fractional solution we maintain. We also study how the fractional mass of the (dynamic) optimal solution and the algorithm in the first cluster changes over time in order to obtain a qualitative understanding of how the algorithm shifts its centers towards new points that arrive. We run this experiment with $k=1,2,3,4$.

\eat{
\begin{figure}[ht]
    \centering

    \begin{subfigure}{0.23\textwidth}
        \centering
        \includegraphics[width=\linewidth]{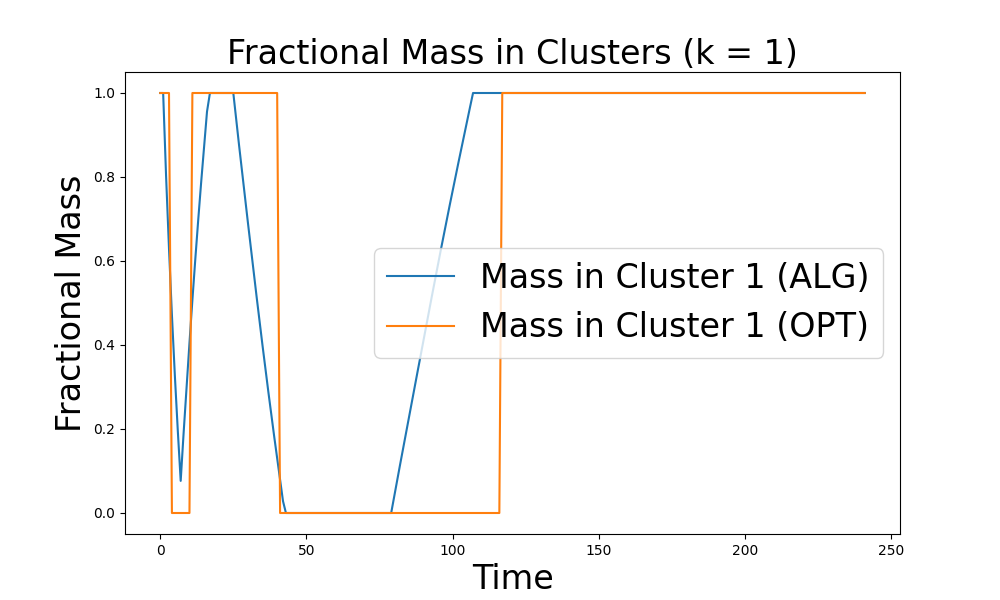}
    \end{subfigure}
    \hfill
    \begin{subfigure}{0.23\textwidth}
        \centering
        \includegraphics[width=\linewidth]{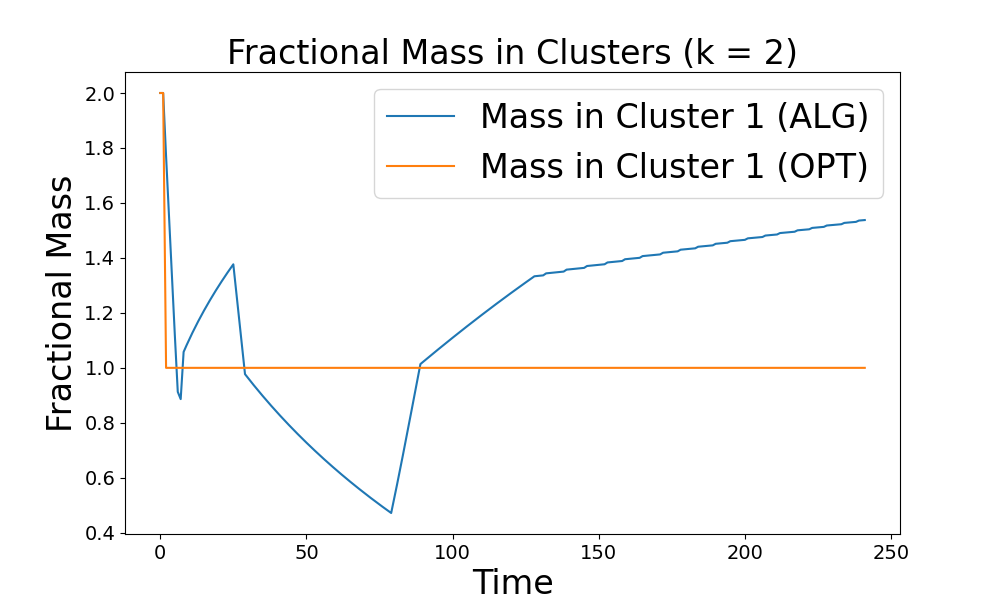}
    \end{subfigure}
    \hfill
    \begin{subfigure}{0.23\textwidth}
        \centering
        \includegraphics[width=\linewidth]{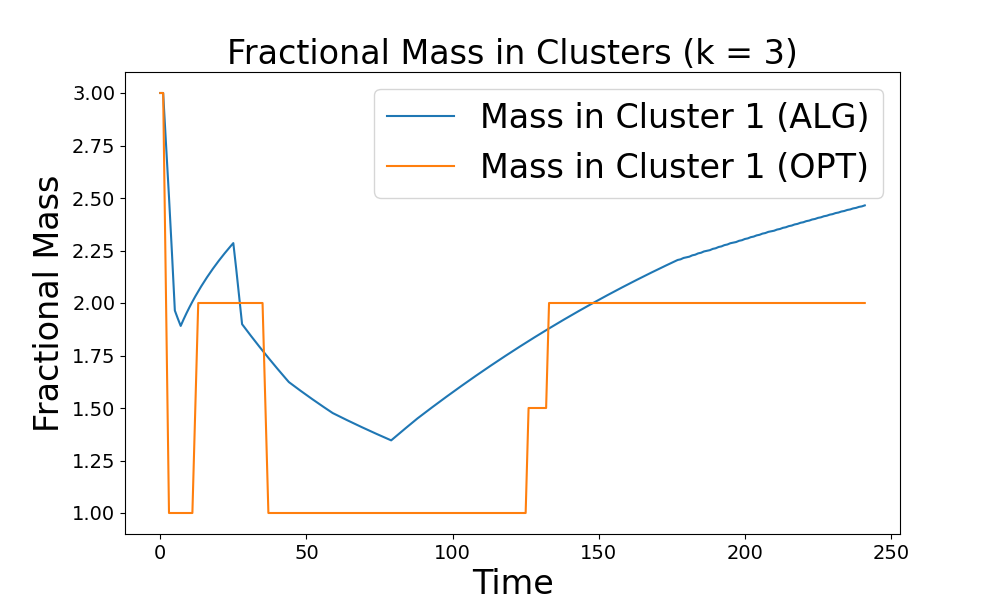}
    \end{subfigure}
    \hfill
    \begin{subfigure}{0.23\textwidth}
        \centering
        \includegraphics[width=\linewidth]{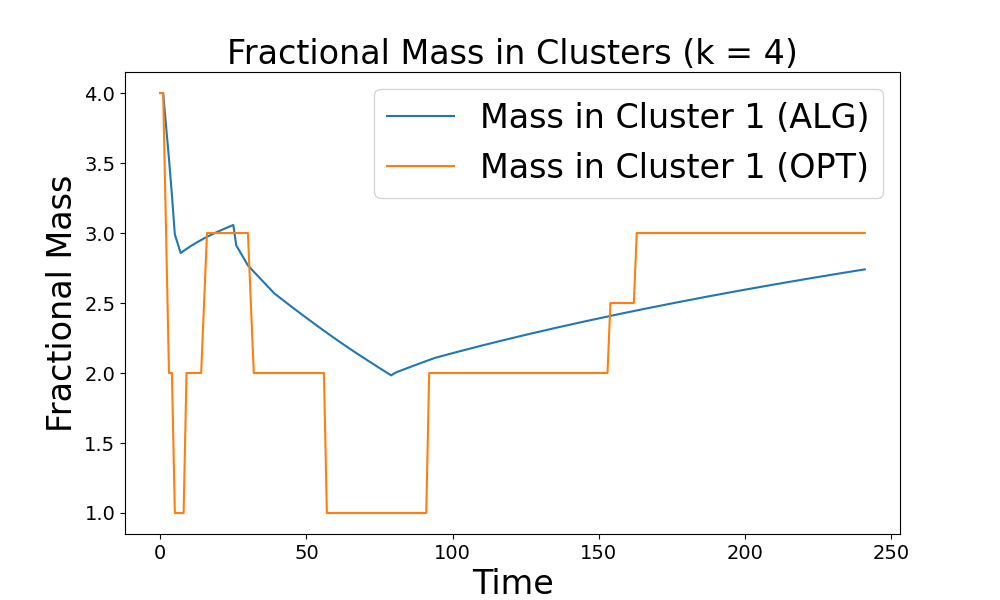}
    \end{subfigure}

    \begin{subfigure}{0.23\textwidth}
        \centering
        \includegraphics[width=\linewidth]{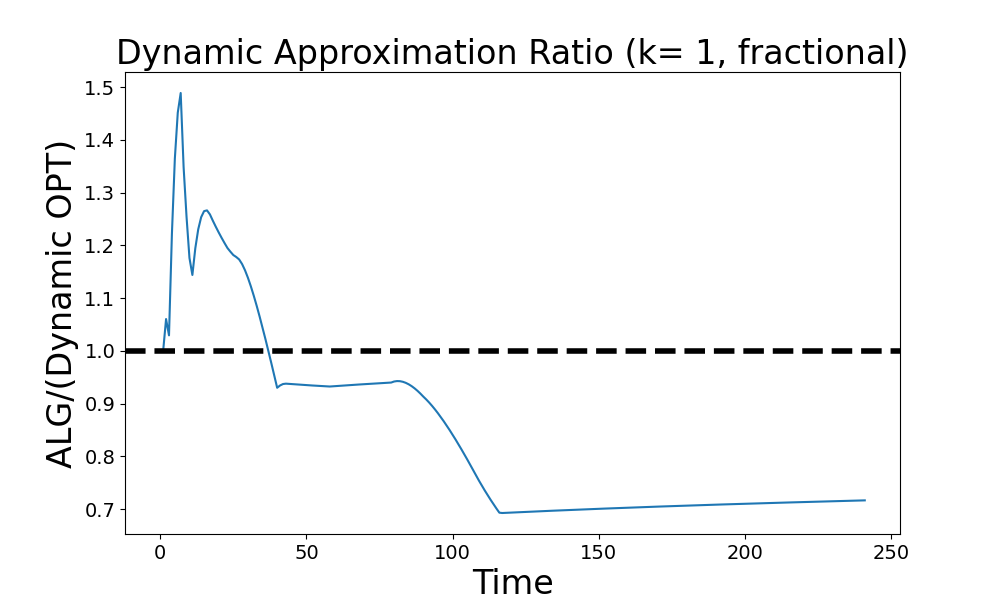}
    \end{subfigure}
    \hfill
    \begin{subfigure}{0.23\textwidth}
        \centering
        \includegraphics[width=\linewidth]{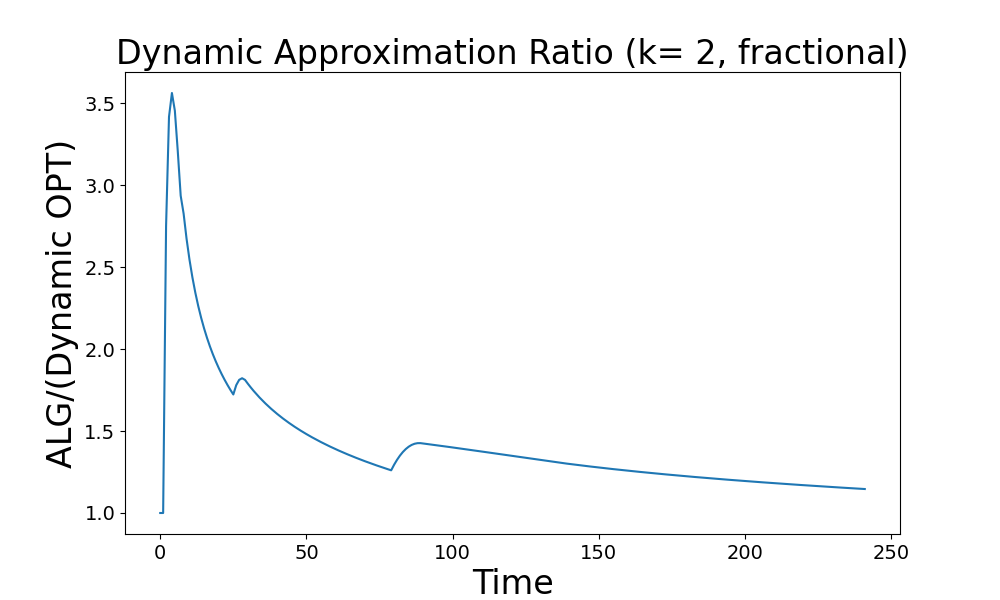}
    \end{subfigure}
    \hfill
    \begin{subfigure}{0.23\textwidth}
        \centering
        \includegraphics[width=\linewidth]{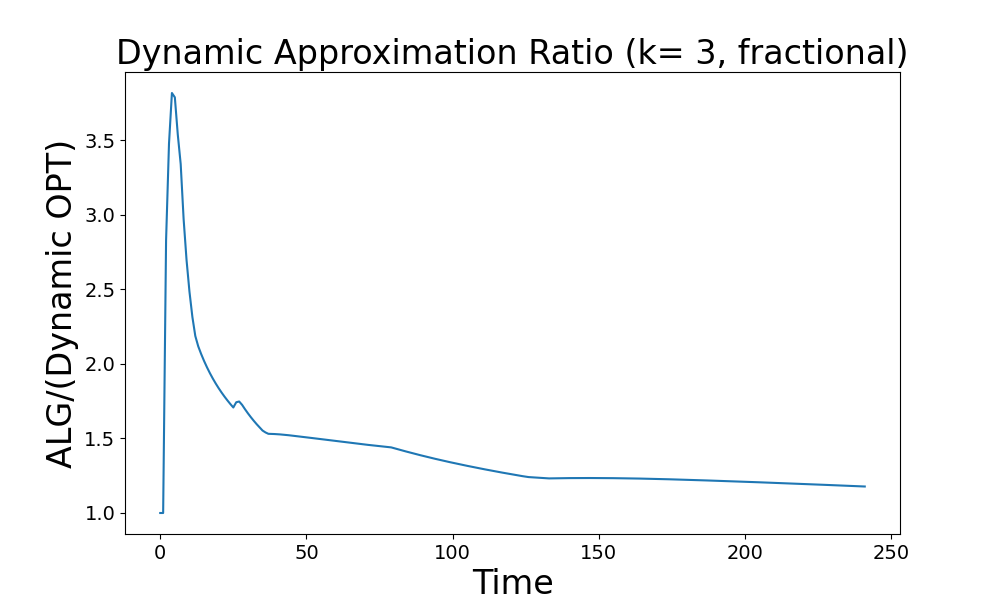}
    \end{subfigure}
    \hfill
    \begin{subfigure}{0.23\textwidth}
        \centering
        \includegraphics[width=\linewidth]{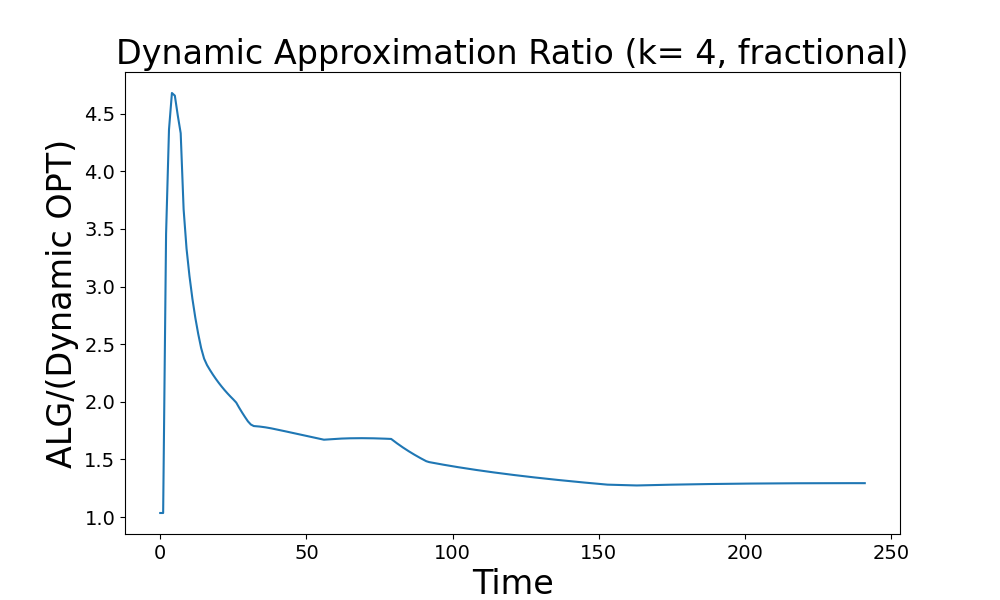}
    \end{subfigure}

    \caption{{\bf (Oscillating Instances):} Fractional mass of algorithm and OPT over time in one of the clusters (top figure)  and corresponding dynamic approximation ratios (bottom figure), for $k = 1,2,3,4$}
    \label{fig:oscillating_appendix}
\end{figure}
}

\begin{figure}[ht]
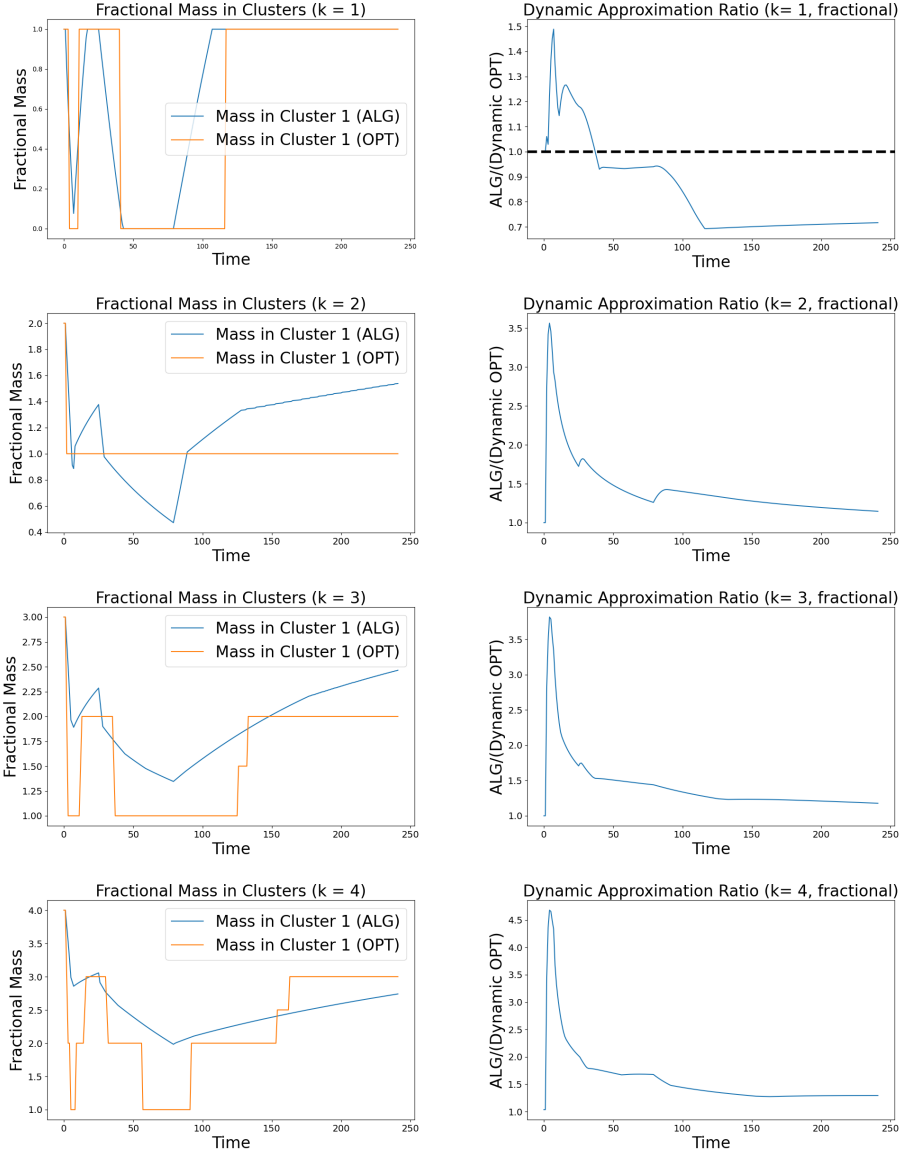

    \centering

    \begin{subfigure}{0.45\textwidth}
        \centering
        \includegraphics[width=\linewidth]{images/weightshift_frac1_0.png}
    \end{subfigure}
    \begin{subfigure}{0.45\textwidth}
        \centering
        \includegraphics[width=\linewidth]{images/weightshift_ratio1_0.png}
    \end{subfigure}

    \vspace{0.1cm}

    \begin{subfigure}{0.45\textwidth}
        \centering
        \includegraphics[width=\linewidth]{images/weightshift_frac2_0.png}
    \end{subfigure}
    \begin{subfigure}{0.45\textwidth}
        \centering
        \includegraphics[width=\linewidth]{images/weightshift_ratio2_0.png}
    \end{subfigure}

    \vspace{0.1cm}

    \begin{subfigure}{0.45\textwidth}
        \centering
        \includegraphics[width=\linewidth]{images/weightshift_frac3_0.png}
    \end{subfigure}
    \begin{subfigure}{0.45\textwidth}
        \centering
        \includegraphics[width=\linewidth]{images/weightshift_ratio3_0.png}
    \end{subfigure}

    \vspace{0.1cm}

    \begin{subfigure}{0.45\textwidth}
        \centering
        \includegraphics[width=\linewidth]{images/weightshift_frac4_0.png}
    \end{subfigure}
    \begin{subfigure}{0.45\textwidth}
        \centering
        \includegraphics[width=\linewidth]{images/weightshift_ratio4_0.png}
    \end{subfigure}

    \caption{{\bf (Oscillating Instances):} Fractional mass of algorithm and OPT over time in one of the clusters (left) and corresponding dynamic approximation ratios (right), for $k = 1,2,3,4$.}
    \label{fig:oscillating_appendix}
\end{figure}

As we see in Fig.~\ref{fig:oscillating_appendix}, $k=1$, the algorithm starts shifting fractional mass before the optimal solution. Consequently, our algorithm ends up outperforming the optimal solution in the time horizon that we consider, as seen from the approximation ratio plots. For $k=2,3,4$, we observe that the algorithm quickly moves a unit of mass to the cluster that is currently arriving, and then slows down as the gain in increasing the fractional mass on the cluster is minimal.

\textbf{Scale Changes:}  In this example, the underlying metric space consists of $5$ clusters with $10$ points each. The clusters are generated by picking points uniformly at random from squares of side length $0.4$ centered at $(\floor{10^{i-1}},0)$ for $i = 0,1,2,3,4$, and we run the experiment with a time horizon of $T = 3^5$.  In this example, we have $5$ clusters with $10$ points each, with a time horizon of $T = 3^5$. The instance sequence iterates through the clusters, where we give the first cluster for the first $3$ rounds, then the second cluster up to the $9^{th}$ round, and so on. We study how the fractional mass of the algorithm in the cluster that is currently arriving changes over time in order to obtain a qualitative understanding of how the algorithm shifts its centers toward new far away points that arrive. The approximation ratios plotted are between $\sum_{\tau=1}^t \rho (y_\tau,V_\tau)$ and $\sum_{\tau=1}^t \rho (\OPT_t,V_\tau)$, where $\OPT_t$ is the best fixed fractional solution for instances $V_1,...,V_t$ and $y_{\tau}$ is the intermediate fractional solution we maintain. We run this experiment with $k=1,2,3$.
\begin{figure}[ht]
    \centering

    \begin{subfigure}{0.45\textwidth}
        \centering
        \includegraphics[width=\linewidth]{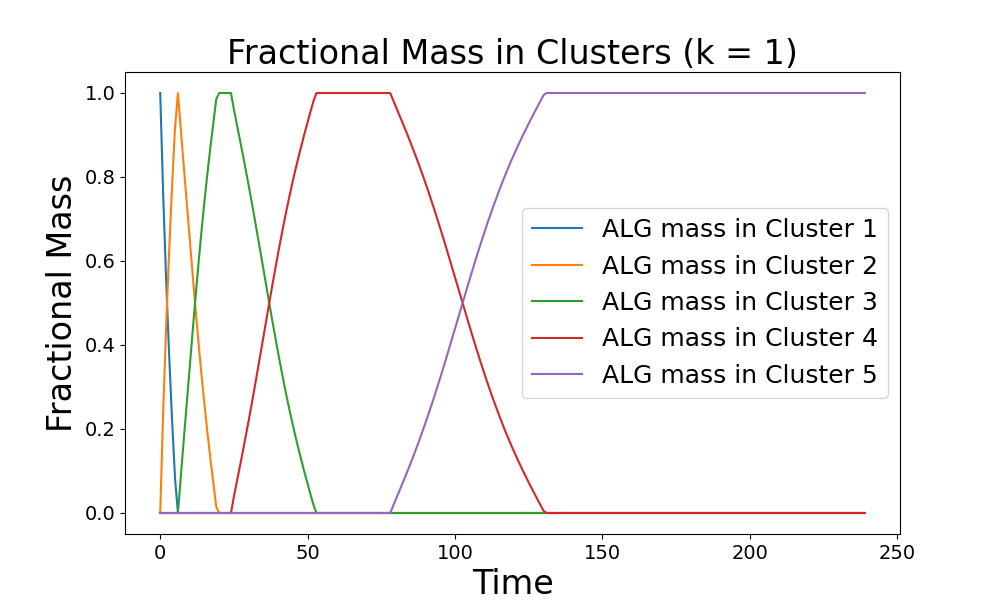}
    \end{subfigure}
    \begin{subfigure}{0.45\textwidth}
        \centering
        \includegraphics[width=\linewidth]{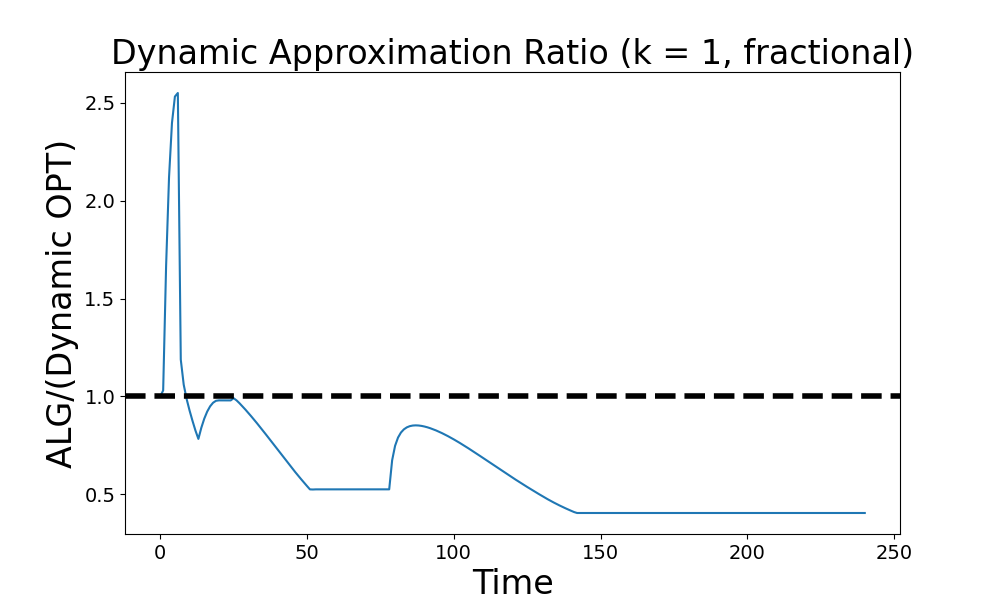}
    \end{subfigure}

    \vspace{0.2cm}

    \begin{subfigure}{0.45\textwidth}
        \centering
        \includegraphics[width=\linewidth]{images/scalechange_frac2_1.png}
    \end{subfigure}
    \begin{subfigure}{0.45\textwidth}
        \centering
        \includegraphics[width=\linewidth]{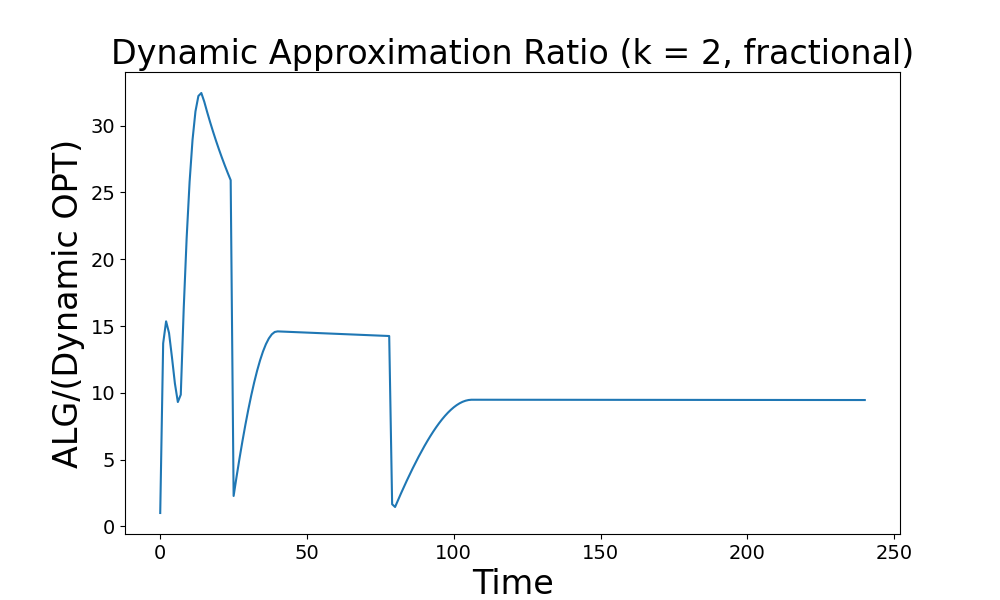}
    \end{subfigure}

    \vspace{0.2cm}

    \begin{subfigure}{0.45\textwidth}
        \centering
        \includegraphics[width=\linewidth]{images/scalechange_frac3_1.png}
    \end{subfigure}
    \begin{subfigure}{0.45\textwidth}
        \centering
        \includegraphics[width=\linewidth]{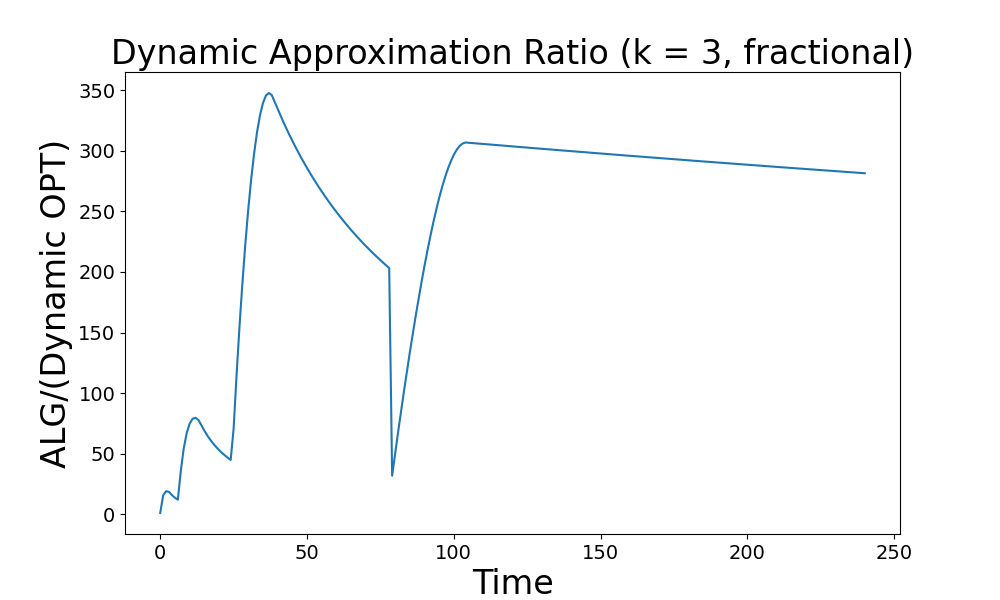}
    \end{subfigure}

    \caption{{\bf (Scale Changes):} Fractional mass of algorithm over time in one of the clusters (left) and corresponding dynamic approximation ratios (right), for $k = 1,2,3$.}
    \label{fig:scalechange_appendix}
\end{figure}

As we see in Fig.~\ref{fig:scalechange_appendix}, we once again see that for $k=1$, the fractional algorithm ends up outperforming the optimal solution. For $k=2,3$, we observe that the algorithm quickly moves a unit of mass to the cluster that is currently arriving, and then slows down rapidly as the gain in increasing the fractional mass on the cluster is very minimal. The change is even more rapid in this case compared to the previous experiment due to the large scale changes which lead to a larger gradient norm, and hence a more conservative learning rate.

One question that arises from the approximation ratio plots for $k=2,3$ is the large approximation factor, but this can be explained by the additive term that is quite large in this case. For $k>1$, the dynamic optimal solution immediately shifts a center from the previous clusters to the new one as it is a much better solution for the sequence so far, while the algorithm accrues a large cost until it shifts a mass of one into the new cluster. To illustrate this, we run the experiment again for $k=2,3$ with $4$ clusters and a time horizon of $T = 50000$ (we only use $4$ clusters this time due to computational reasons, the time required for the additive term to be insignificant grows with $\Delta$), where we continue giving the $4^{th}$ cluster even after the first $81$ rounds. We see that the approximation ratio eventually approaches $1$, see Fig.~\ref{fig:scalechange_appendix2}.

\begin{figure}[ht]
    \centering

    \begin{subfigure}{0.45\textwidth}
        \centering
        \includegraphics[width=\linewidth]{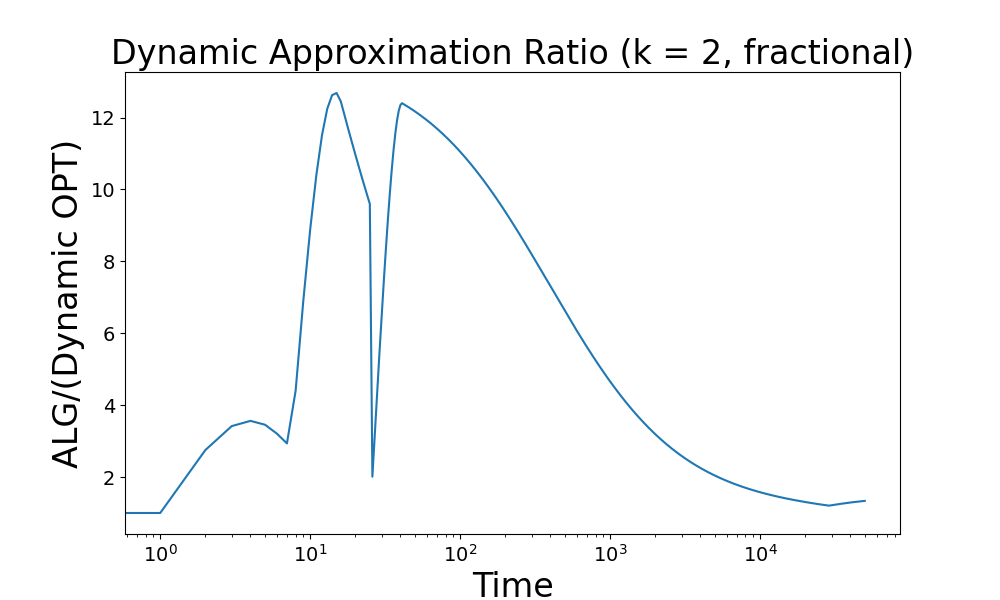}

    \end{subfigure}
    \hfill
    \begin{subfigure}{0.45\textwidth}
        \centering
        \includegraphics[width=\linewidth]{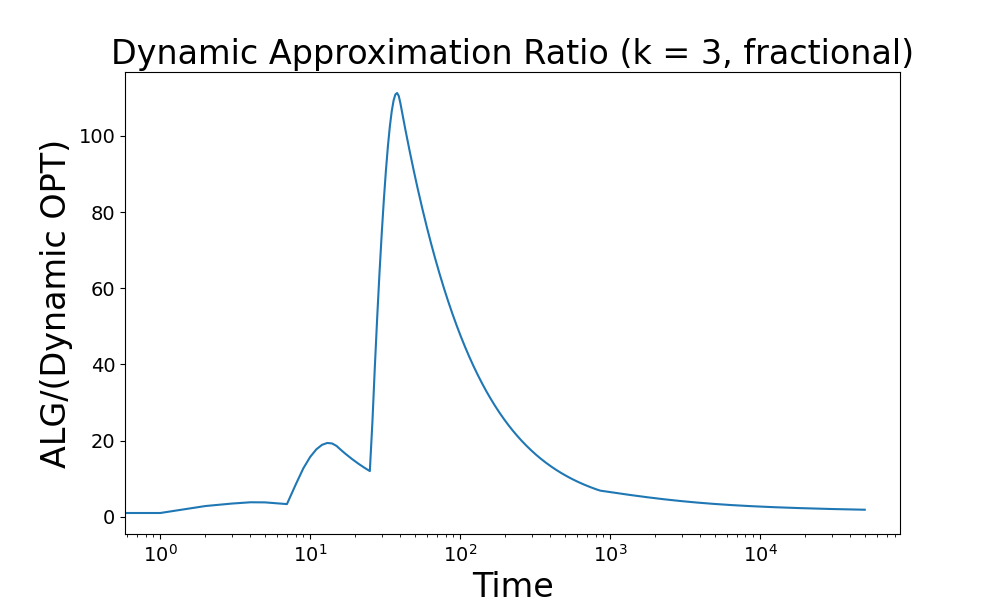}

    \end{subfigure}
    \caption{{\bf (Scale Changes):} Dynamic approximation ratio plots with $T = 50000$, for $k = 2,3.$}
    \label{fig:scalechange_appendix2}
\end{figure}

\textbf{Small Drifts:} In this example, the metric space consists of $2500$ points, and we run the experiment with a time horizon of $T = 250$. The underlying instances are created by randomly drawing $10$ points (out of which $5$ random points arrive in a round) in a disc centered around the origin, but then shifting the center of the disc to the right by $0.02$ for each instance. We compare the cost of the optimal solution with the deterministic and randomized algorithms, as well as the intermediate fractional solution that we maintain. We generate $10$ random instances in total, and report the standard deviation and mean of the approximation ratio over time. For the randomized algorithm, we first average the approximation ratio for each instance over $5$ random runs of the rounding algorithm. We run this experiment with $k=1,2,3$.

As we see in Fig.~\ref{fig:uniformf3_scatter_ratio_comparison_appendix}, the final average approximation ratios were less than $2$ in all cases. (In contrast, if the algorithm were to use the initial solutions throughout, i.e., does not automatically adapt to the drift, then the cost ratios in each round are around $5$-$15$.) The approximation ratio is small initially as we compare against the optimal-in-hindsight solution for the entire input sequence. \\
\begin{figure}[ht]
    \centering

    \begin{subfigure}{0.3\textwidth}
        \centering
        \includegraphics[width=\linewidth]{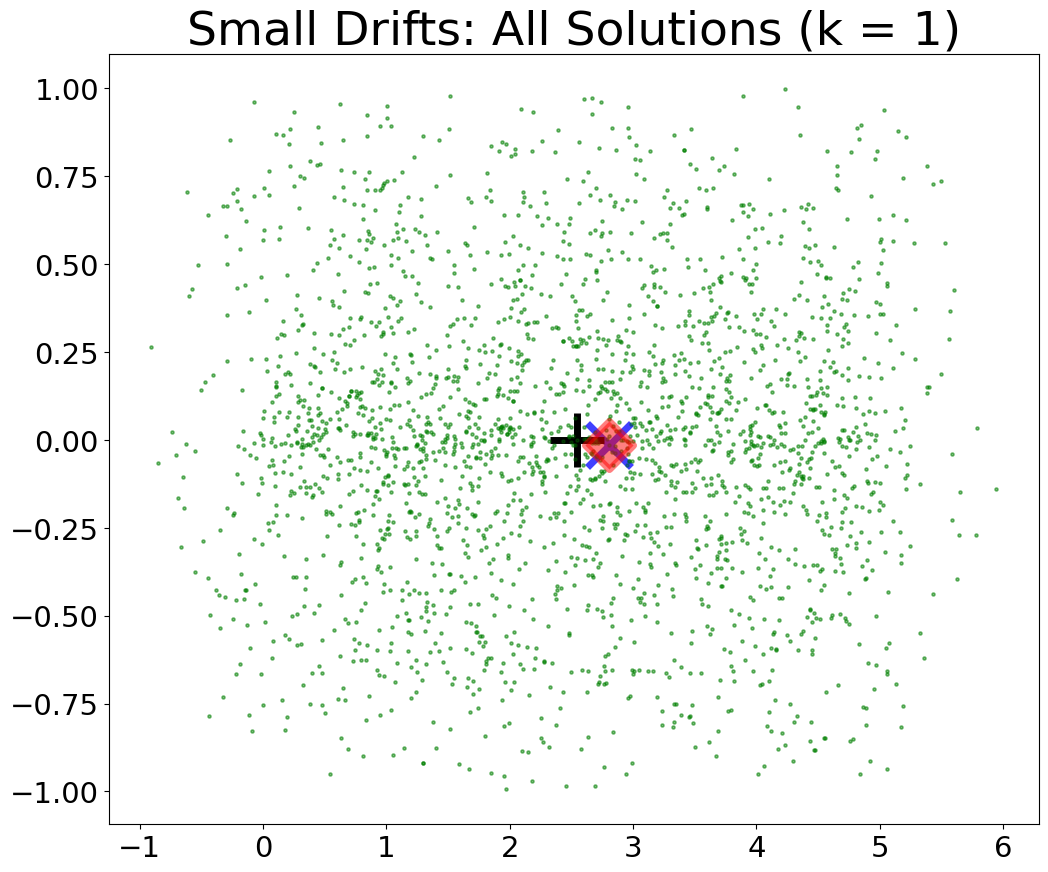}
    \end{subfigure}
    \begin{subfigure}{0.4\textwidth}
        \centering
        \includegraphics[width=\linewidth]{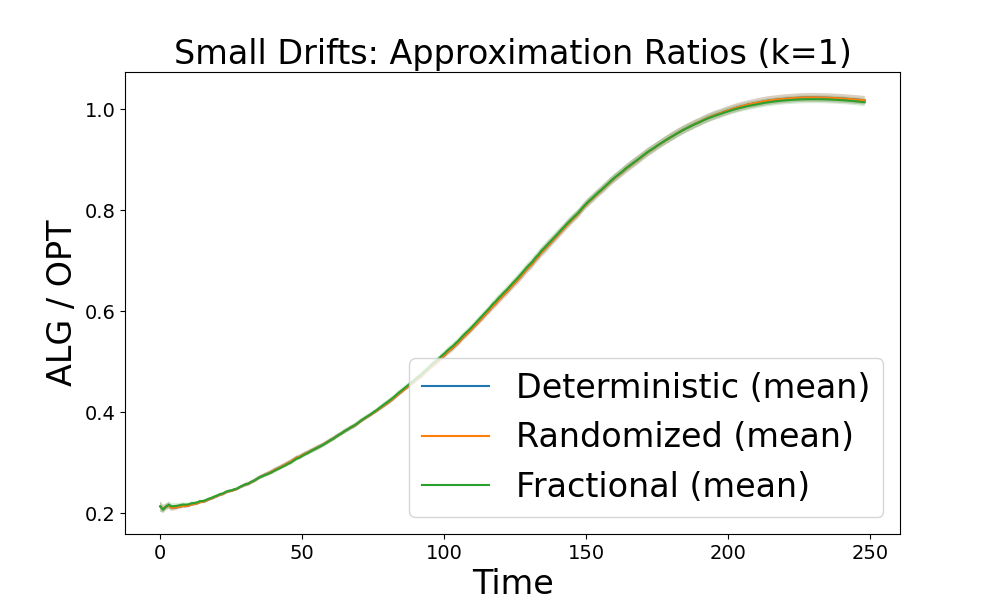}
    \end{subfigure}

    \vspace{0.2cm}

    \begin{subfigure}{0.3\textwidth}
        \centering
        \includegraphics[width=\linewidth]{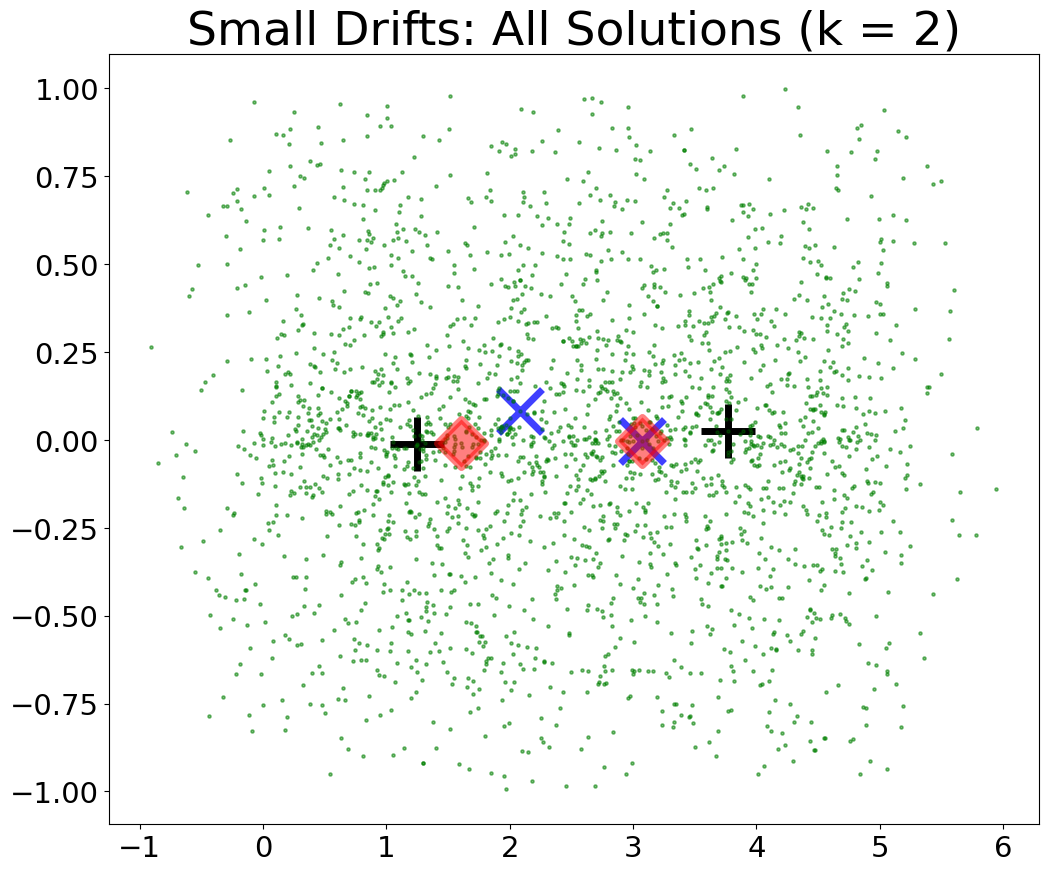}
    \end{subfigure}
    \begin{subfigure}{0.4\textwidth}
        \centering
        \includegraphics[width=\linewidth]{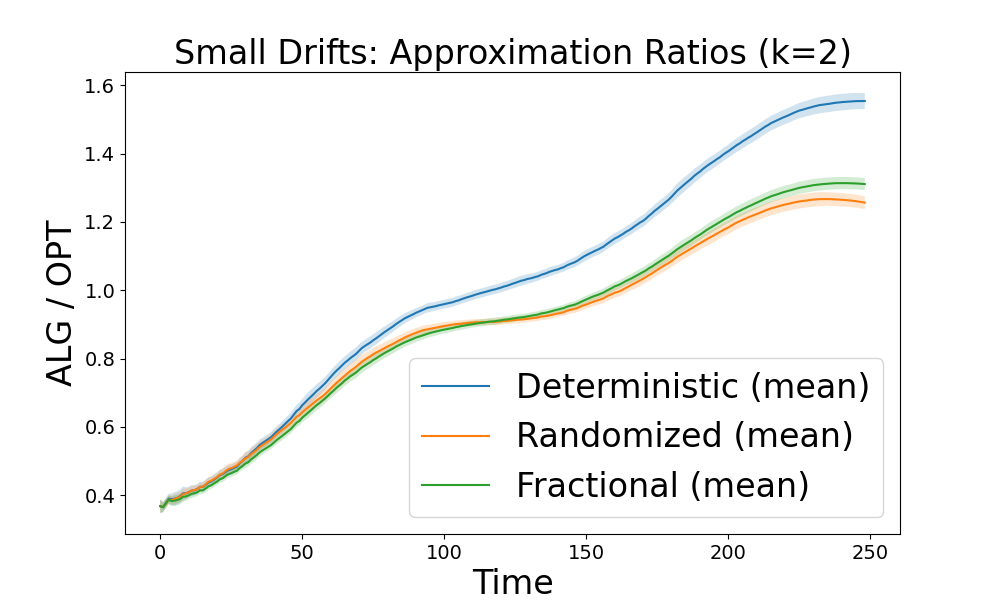}
    \end{subfigure}

    \vspace{0.2cm}

    \begin{subfigure}{0.3\textwidth}
        \centering
        \includegraphics[width=\linewidth]{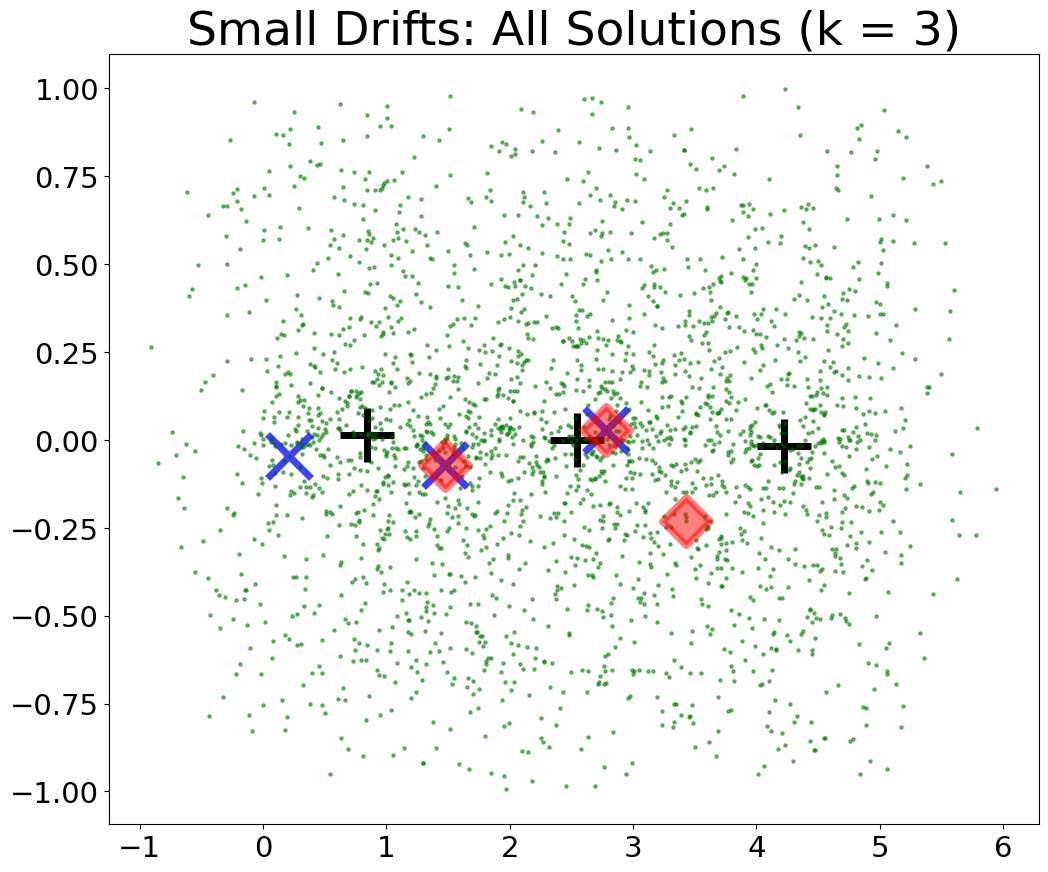}
    \end{subfigure}
    \begin{subfigure}{0.4\textwidth}
        \centering
        \includegraphics[width=\linewidth]{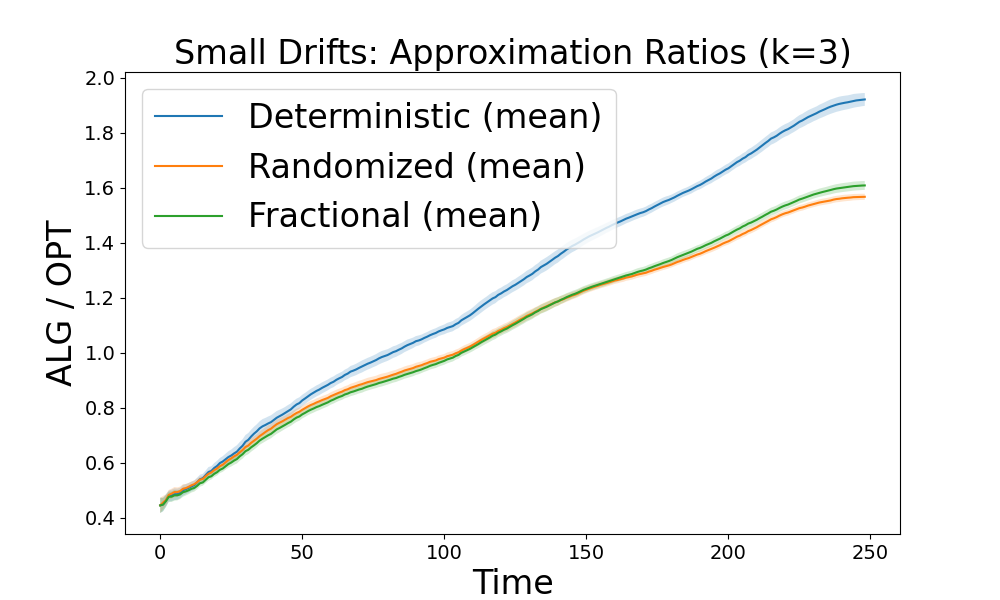}
    \end{subfigure}

    \caption{{\bf (Small Drifts):} The optimal (black plus), deterministic (blue cross), and randomized (red diamond) solutions for one of the random instances (left) and approximation ratios – avg. and std. dev. over 10 random instances (right) for $k=1,2,3$.}
    \label{fig:uniformf3_scatter_ratio_comparison_appendix}
\end{figure}

Finally, note that the time complexity of our learning algorithm (besides the reduction step) is $O(k^2 T^3)$, independent of the number of points $n$. To empirically confirm scalability, we ran experiments with $n= 1000,2000, \ldots, 6000$ while keeping $k, T$ fixed ($k =5, T = 100$)  and observed that the running time of the learning algorithm (besides the reduction step)  was always $2$-$3$ seconds on Google Colab ($2.20$ GHz  Intel® Xeon® CPU, $51$ GB RAM).

\newpage

\section{Lower Bounds}\label{sec:lowerbounds}
In \Cref{thm:deterministic_alg} we gave an efficient deterministic algorithm for \onlinekmed with a multiplicative $O(k)$ approximation factor and sublinear regret. We now give an information theoretic lower bound that states that {\em any}  algorithm for \onlinekmed with sublinear regret must incur an $\Omega(k)$ multiplicative factor loss. 

\begin{theorem}\label{thm:mult-LB-det}
Any deterministic algorithm for the \onlinekmed problem cannot achieve a multiplicative $o(k)$ approximation guarantee with an additive regret term that is sublinear in $T$. That is, any bound as below is impossible

\begin{align}
\label{eq:lb1}
    \sum_{t=1}^T \E[\rho(Y_t, V_t)] \le o(k)\cdot \sum_{t=1}^T \rho(Y, V_t) + o(T) f(k,n,\Delta)
\end{align}

where $Y$ is any  fixed solution for \onlinekmed. Moreover, this holds even when the metric space is known upfront.
\end{theorem}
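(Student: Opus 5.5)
We construct an adaptive adversary against a fixed deterministic algorithm on a fixed, known metric. Take $k+1$ groups $G_0,\dots,G_k$, each a tight cluster of points (diameter small), pairwise at distance $L$ large, with $\Delta$ a fixed constant independent of $T$ (e.g.\ $L$ constant and intra-cluster distances $1$, after scaling). At each step $t$, the deterministic algorithm's solution $Y_t$ is computable by the adversary before it chooses $V_t$. Since $|Y_t|=k$ and there are $k+1$ groups, some group $G_{j_t}$ contains no center of $Y_t$. The adversary then lets $V_t$ consist of many copies (or $k+1$ distinct points, to respect the $\geq k+1$ distinct-points rule) in $G_{j_t}$ arranged so that $\OPT_t$ is tiny relative to $L$. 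One convenient choice is $V_t = G_{j_t}$ itself, with $G_j$ consisting of $k+1$ points at mutual distance $1$, so $\OPT_t = 1$ (open $k$ of them) while $\Cost_t(Y_t)\ge (k+1)L$, giving $\rho(Y_t,V_t)\ge (k+1)L$ every round.

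Next we bound the benchmark. Over $T$ rounds, the groups chosen $j_t$ define counts $n_j = |\{t: j_t = j\}|$ with $\sum_j n_j = T$. Consider fixed solutions $Y$ that place one center in each of $k$ groups, omitting the least frequent group $j^*$ (so $n_{j^*}\le T/(k+1)$). For a round with $j_t \ne j^*$, $Y$ has one center in $G_{j_t}$, so its cost is $k$ (the other $k$ points at distance $1$), i.e.\ $\rho = k$. For rounds with $j_t=j^*$, $\rho\le (k+1)(L+1)$. Hence
\[
\sum_t \rho(Y,V_t) \le kT + \frac{T}{k+1}(k+1)(L+1) = T(k+L+1).
\]
The algorithm pays at least $(k+1)LT$. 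Choosing $L = k$ (so $\Delta = \Theta(k)$, fixed independent of $T$), the algorithm's total is $\Omega(k^2 T)$ while the benchmark is $O(kT)$, a ratio of $\Omega(k)$.

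Finally we conclude. If \eqref{eq:lb1} held with multiplicative factor $\alpha(k)=o(k)$, we would get $\Omega(k^2)T \le o(k)\cdot O(k)T + o(T)f(k,n,\Delta)$. Here $n=(k+1)^2$ and $\Delta=\Theta(k)$ are fixed, so dividing by $T$ and letting $T\to\infty$ forces $\Omega(k^2)\le o(k^2)$, a contradiction for large $k$. Since the metric was fixed in advance, the result holds even when the metric space is known upfront. For a deterministic algorithm the expectation is trivial.

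The delicate step is calibrating the per-round ratio gap. A single center in a group must already cost $\Theta(k)$ times optimal, which is why each group has $k+1$ unit-spaced points. We also need $L$ of order $k$ so that the adversarial miss costs a factor $k$ more than the benchmark. A further check is that the benchmark's unlucky rounds, at most a $1/(k+1)$ fraction, contribute only $O(LT)$, which is absorbed.
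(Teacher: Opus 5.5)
Your proof is correct. It uses the same paging-style adversary as the paper: $k+1$ far-apart clusters, each round presents a cluster that $Y_t$ misses, and the comparison is with the fixed solution covering the $k$ most frequently presented clusters. Your instance design, however, differs from the paper's at the step that matters.

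The paper uses $2$-point clusters. To meet the requirement of at least $k+1$ distinct points, it lets $V_t$ be the uncovered cluster \emph{together with} $k-1$ other clusters. It then bounds the hindsight loss by $\frac{2\Delta T}{(k+1)k}$, which charges the benchmark about $2\Delta/k$ in only $T/(k+1)$ rounds. But the benchmark pays that in every round in which its omitted cluster lies anywhere in $V_t$. Since each $V_t$ contains $k$ of the $k+1$ clusters, this can be most rounds. Concretely, take the algorithm that always leaves uncovered the cluster most often omitted from past instances. It ensures every cluster appears in at least $\lfloor T/2\rfloor$ rounds, so the paper's construction, as written, separates algorithm from benchmark only by a constant factor.

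Your choice $V_t = G_{j_t}$, a single group of $k+1$ unit-spaced points, fixes this. Because only the uncovered group is presented, the pigeonhole bound $n_{j^*} \le T/(k+1)$ really does control the benchmark's bad rounds. The price is that one center per group already incurs ratio $k$ in the benchmark's good rounds. You correctly offset this by taking $L = \Theta(k)$, which gives $\alpha(k) \ge k(k+1)/(2k+1)$. Letting $L$ grow with $k$ fixed pushes the bound toward $k+1$.

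The only loose end is that you never specify $V_0$. Take it to be the whole space, so the model's requirement $Y_t \subseteq V_{<t}$ imposes nothing; the argument does not depend on this choice.
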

\begin{proof}
The lower bound result is inspired by the competitive ratio lower bound for deterministic online paging. 
Suppose for the sake of contradiction that there is an online algorithm $\cal A$ that outputs a solution $Y_t$ at time $t$ and satisfies the condition~\eqref{eq:lb1}.  We now describe our hard instance  $\cM = (V, d)$. The underlying metric space $\cal M$ consists of $n = 2(k+1)$  points. It contains $k+1$ different clusters of $2$ points each, such that the distance between any $2$ points belonging to different clusters is $\Delta \gg 1$. The distance between any $2$ points in the same cluster is $1$. 

We construct our sequence of sub-instances $V_0, \ldots , V_T$ in an adversarial fashion: consider a round $t$, and let $Y_t$ be the set of   $k$ centers outputted by $\cal A$ in round $t$. Since there are $k+1$ clusters in the metric space, it follows that  there is always at least $1$ cluster which does not overlap with $Y_t$. Thus, given $Y_{t}$, define $V_t$ to consist of this cluster and any other $k-1$ clusters. For the round zero, we just take $V_0$ to be the entire metric space.

Clearly, $\cal A$  must suffer a cost of at least $\frac{2\Delta}{k}$ for each round after round zero. This is because the optimal solution for $V_t$ places exactly $1$ point in each of the $k$ clusters with a total connection cost of $k$, while $Y_t$ must have a connection cost of at least $2 \Delta$ as it doesn't place any center in one of the clusters present in $V_t$.

Let us now examine the optimal fixed solution in hindsight for this adversarial input sequence. The optimal solution places $1$ center each in the $k$ clusters that appear the most frequently. Thus, it suffers a total cost at most $\frac{2\Delta T}{(k+1)k}$. Putting it together and using~\eqref{eq:lb1}, we have
$$  \frac{2\Delta T}{k} \leq o(k) \cdot  \frac{2 \Delta T}{(k+1)k}   + o(T) f(k,n,\Delta) $$
or equivalently,
$$ 2 \Delta (k+1) \leq  o(k) \cdot 2\Delta + \frac{o(T)}{T} k(k+1) f(k,n,\Delta) $$
which gives a contradiction as $T \to \infty$ as $k,n,\Delta$ are fixed parameters of the underlying metric space.

\end{proof}

We now extend the above result to the setting where the online algorithm is randomized: again, we show that the result in \Cref{thm:rand_alg} is essentially tight, i.e., any randomized algorithm with sublinear additive regret must incur a constant factor multiplicative loss. 
\begin{theorem}\label{thm:mult-LB-rand}
Any (randomized) algorithm for the \onlinekmed problem cannot achieve a multiplicative  $(1 + \varepsilon)$ approximation guarantee with an additive regret term that is sublinear in $T$. That is, any bound as below is impossible

\begin{align}
\label{eq:lbr}
    \sum_{t=1}^T \E[\rho(Y_t, V_t)] \le (1 + \varepsilon) \cdot \sum_{t=1}^T \rho(Y, V_t) + o(T) f(k,n,\Delta)
\end{align}

where $0 \leq \varepsilon <1$ is an arbitrary  constant and 
$Y$ is any  fixed solution for \onlinekmed.
\end{theorem}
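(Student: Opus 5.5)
The plan is to build a single fixed metric space in which the best point for the benchmark is never revealed to the algorithm. Since the algorithm must choose centers from $V_{<t}$, it is then forced to pay a constant factor more than the hindsight solution in every round. Randomization does not help, because the adversary I use is oblivious: it draws every instance from a fixed distribution, chosen independently of the algorithm's coins. I work with $k=1$, which suffices to refute the bound~\eqref{eq:lbr} as a statement about the problem. For general $k$ I would place $k$ far-apart copies of the gadget below and use $k-1$ copies as always-present dummy clusters; I do not expect this extension to change the argument.

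\textbf{Hard instance.} Fix an integer $N>\frac{2}{1-\varepsilon}$. The metric $\cM$ is a star with a hub $h$ and $N$ leaves $\ell_1,\dots,\ell_N$. Set $d(h,\ell_i)=1$ and $d(\ell_i,\ell_j)=2$ for $i\ne j$. So the minimum distance is $1$, $\Delta=2$, and $n=N+1$; all of these are constants independent of $T$. Let $V_0$ be the set of all leaves. For $t\ge1$, let $V_t$ be a uniformly random pair of distinct leaves, drawn independently of the past and of the algorithm. Each instance then has $k+1=2$ points, as required. The hub $h$ never appears in any instance, so every $Y_t\subseteq V_{<t}$ is a single leaf.

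\textbf{Per-round costs.} For $V_t=\{\ell_a,\ell_b\}$, any optimal center is $\ell_a$, $\ell_b$ or $h$, and each gives cost $2$, so $\OPT_t=2$. The hub has cost $2$, so $\rho(\{h\},V_t)=1$ in every round, and $\sum_t\rho(\{h\},V_t)=T$. A leaf $\ell$ chosen by the algorithm has cost $2$ if $\ell\in V_t$ and cost $4$ otherwise. Since $V_t$ is independent of $Y_t$, and $Y_t$ is determined by the algorithm's coins and $V_{<t}$, we get $\Pr[Y_t\subseteq V_t]=2/N$. Hence $\E[\rho(Y_t,V_t)]=2-2/N$.

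\textbf{Contradiction.} Take $Y=\{h\}$ in~\eqref{eq:lbr}. Taking expectation over the random input as well, we would get
\[
(2-\tfrac{2}{N})\,T\le(1+\varepsilon)\,T+o(T)\,f(k,n,\Delta).
\]
By the choice of $N$, $2-\frac{2}{N}>1+\varepsilon$, and $f(k,n,\Delta)$ is a constant. Dividing by $T$ and letting $T\to\infty$ gives a contradiction. Because the bound fails in expectation over the input distribution, there is also a fixed input sequence on which it fails (Yao's principle).

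The main obstacle is keeping $n$, $\Delta$ and hence $f$ independent of $T$. The naive approach of showing a fresh pair of leaves every round would make $n$ grow with $T$. Reusing a fixed pool of $N$ leaves drawn at random handles this, at the price of only a $2/N$ loss, and $N$ is chosen large enough to absorb that loss.
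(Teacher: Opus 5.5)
Your proof is correct and is essentially the paper's argument: the same star gadget (hub at distance $1$, leaves pairwise at distance $2$), a hub that is never available to the algorithm, a uniformly random pair of leaves each round, and the hub as the benchmark, so the algorithm's expected ratio is about $2$ against the benchmark's $1$. The paper differs in two ways, neither of which affects your asymptotics: it uses $k$ far-apart copies of the gadget directly, which handles every $k$ rather than only $k=1$, and it appends a final round containing the hubs so that the benchmark's centers are revealed.
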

\begin{proof}
Suppose for the  sake of contradiction that there is an algorithm $\cal A$ that satisfies~\eqref{eq:lbr} for some constant $0 \leq \varepsilon < 1$. We first describe our hard instance $\cM = (V, d)$. Our metric space $\cal M$ consists of $n = mk$ points, $m \gg k$. It contains $k$ different clusters of $m$ points each, such that the distance between any $2$ points belonging to $2$ different clusters is $\Delta \gg k$. The clusters themselves are star graphs with $m-1$ leaves and a center point, the distance between any $2$ leaves being $2$ and the center-leaf distance being $1$.

We construct our sequence of sub-instances $V_0, \ldots , V_T$ in an adversarial fashion. In round zero, we just take $V_0$ to be all the leaves. For $t=1, \ldots, T-1$, $V_t$ consists of $2$ randomly chosen leaves from each cluster. For the last round, we take $V_T$ to be all the cluster centers and one other leaf.

We claim that the algorithm $\cal A$ must suffer an expected loss of at least $4(1 - 3/m)$ in each round from $1$ to $T-1$. Let $Y_t$ be the subset of size $k$ selected by $\cal A$ in round $t$. If $Y_t$ does not contain a point from each of the clusters it has a connection cost of at least $\Delta$ for round $t$; otherwise $Y_t$ contains exactly one point from each cluster. For a given cluster, the probability that the point in $Y_t$ doesn't belong to the two points of this cluster in $V_t$ is at least 
$(1-1/m)(1-2/m) \geq 1-3/m$. Thus linearity of expectation implies that the expected connection cost of $Y_t$ is at least $\min\{\Delta, 4k(1-3/m)\} \geq 4k(1-3/m)$. On the other hand, the  optimal connection cost for $V_t$, $1 \leq t \leq T-1$ is $k$, corresponding to picking $1$ of the chosen leaves from each of the clusters.

Let us now examine the optimal fixed solution  in hindsight for this adversarial input sequence. We claim that it suffers a total loss of at most $2T$. Indeed, consider the fixed solution $Y$ that picks each of the cluster centers, it suffers a loss of $\frac{1}{1} = 1$ in the last round as it is the optimal solution for $V_T$, and suffers a loss of at most $\frac{2k}{k} = 2$ for each round from $1$ to $T-1$. This is because its connection cost is at most $2k$, and the optimal connection cost for $V_t$, $1 \leq t \leq T-1$ is $k$, corresponding to picking $1$ of the chosen leaves from each of the clusters. Putting it together, we obtain
$$  4(T-1)(1-3/m) \leq (1+ \epsilon) \cdot  2T + o(T) f(k,n,\Delta) $$
or equivalently,
$$  2 - 6/m \leq ( 1+ \epsilon) \cdot \frac{T}{T-1} + \frac{o(T)}{2(T-1)} f(k,n,\Delta)$$
which gives a contradiction by taking $m$ large enough that $2  - 6/m > 1+ \epsilon$ and then taking $T \to \infty$.

\end{proof}

Finally, we show that an additive loss of $\Omega(k\Delta)$ is unavoidable if the multiplicative loss is a constant. 

\begin{theorem}\label{thm:add-LB}
Any (randomized) algorithm  for the \onlinekmed problem must suffer $\Omega(k\Delta)$ additive loss even when allowing for $O(1)$ multiplicative error. That is, any bound as below is impossible
\begin{align}
\label{eq:lbkdelta}
    \sum_{t=1}^T \E[\rho(Y_t, V_t)] \le   O(1)\cdot \sum_{t=1}^T \rho(Y, V_t) + o(k\Delta)
\end{align}
where $Y$ is any other fixed solution for \onlinekmed.
\end{theorem}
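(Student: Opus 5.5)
We prove the bound with a single adversarial construction in which each of $k$ far-apart clusters has to be ``discovered'' by the algorithm. Since centers must be chosen from $V_{<t}$, a cluster that has never appeared cannot hold a center.

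\textbf{Metric.} Take $k$ clusters that are pairwise at distance $\Delta$. Each cluster consists of $k+1$ points at mutual distance $1$.

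\textbf{Instances.} Let $V_0$ be exactly $k+1$ points of cluster $1$ (the only points initially revealed). For $i=2,\dots,k$, let round $t=i-1$ reveal cluster $i$: $V_{i-1}$ is all $k+1$ points of cluster $i$ plus, say, one point from each earlier cluster. After round $k-1$ all clusters are revealed. For the remaining $T-k+1$ rounds, let $V_t$ be the union of all clusters, or any instance on which the fixed solution $Y$ defined below is optimal.

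\textbf{Algorithm's cost.} At round $t=i-1$, every point of cluster $i$ lies outside $V_{<t}$. So each of the $k+1$ new points pays at least $\Delta$, giving $\Cost_t(Y_t)\ge (k+1)\Delta$. Meanwhile $\OPT_t=O(k)$: put one center in cluster $i$ (cost $k$) and use the remaining centers for the other points. Hence $\rho(Y_t,V_t)=\Omega(\Delta)$ in each of these $k-1$ rounds, for a total of $\Omega(k\Delta)$. This holds deterministically, so also for randomized algorithms. Every other round has $\rho\ge 1$.

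\textbf{Benchmark.} Let $Y$ be one center per cluster. In the discovery rounds $V_{i-1}$, $Y$ pays at most $k$ for the $k+1$ points of cluster $i$ plus $0$ for the single points from earlier clusters (choose those points to be the chosen centers). So $Y$ is optimal up to a constant there, and $\rho(Y,V_t)=O(1)$. In the later rounds $\rho(Y,V_t)=O(1)$ as well. The right side is therefore $O(1)\cdot O(T)+o(k\Delta)$, while the left side is at least $\Omega(k\Delta)+(T-k+1)$. This does not immediately contradict, since the multiplicative $O(1)\cdot T$ term can absorb $k\Delta$ when $T\gg k\Delta$.

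\textbf{Fix (main obstacle).} The multiplicative slack must not absorb the additive loss, so take $T=k$: only the $k-1$ discovery rounds, plus round $0$. Then the left side is $\Omega(k\Delta)$ and the right side is $O(1)\cdot O(k)+o(k\Delta)$. This is a contradiction once $\Delta$ is large, since $O(k)=o(k\Delta)$. More generally, $T=O(k)$ rounds suffice. If a bound valid for all $T$ is required, the inequality must hold for every instance sequence, in particular this short one.

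The remaining care is verifying $\OPT_t$ and $\Cost_t(Y)$ in the discovery rounds. This amounts to checking that each instance has at least $k+1$ distinct points and that $\OPT_t$ is $\Theta(k)$, which follows from the unit-distance clusters.
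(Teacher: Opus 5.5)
Your plan is the paper's plan. You reveal a fresh cluster at distance $\Delta$ in each round, so $Y_t\subseteq V_{<t}$ cannot serve it. You keep $T=\Theta(k)$ so the multiplicative term cannot absorb the loss, compare against one center per cluster, and send $\Delta\to\infty$. However, your benchmark step is wrong as written.

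In round $t=i-1$ your instance has $k+i$ distinct points at mutual distance at least $1$, so $\OPT_{i-1}\ge i$. Putting centers on the $i-1$ old singletons and on $k-i+1$ points of cluster $i$ achieves cost $i$. Hence $\OPT_{i-1}=i$, not $\Theta(k)$. Since $Y$ pays $k$ in that round, $\rho(Y,V_{i-1})=k/i$, which is already $k/2$ at $i=2$. Over the discovery rounds the benchmark totals $k\sum_{i=2}^{k}1/i=\Theta(k\log k)$.

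So your claims that $\rho(Y,V_t)=O(1)$ there, and that the right side is $O(1)\cdot O(k)+o(k\Delta)$, are false. Clusters of only $k+1$ points do not make one center per cluster nearly as good as $k$ centers in a cluster, and the added singletons raise $\OPT_{i-1}$ only to $i$.

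The conclusion still holds once you fix the accounting. In your construction both ratios in round $i-1$ share the denominator $i$, while $\Cost_{i-1}(Y_t)\ge (k+1)\Delta$ and $\Cost_{i-1}(Y)=k$. Therefore $\E[\rho(Y_t,V_{i-1})]\ge \Delta\,\rho(Y,V_{i-1})$. Take $T=k-1$ and $B:=\sum_{t=1}^{T}\rho(Y,V_t)$, and note $B\ge k-1$ since each ratio is at least $1$. The assumed inequality then gives $B(\Delta-O(1))\le o(k\Delta)$, which fails as $\Delta\to\infty$.

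Alternatively, do what the paper does. Take clusters of $m\gg k$ points and let $V_t$ be exactly the $(t+1)$-th cluster. Then $\OPT_t=m-k$ and $\rho(Y,V_t)=(m-1)/(m-k)\le 2$, while $\rho(Y_t,V_t)\ge m\Delta/(m-k)\ge\Delta$. This gives total loss at least $(k-1)\Delta$ against a benchmark of at most $2(k-1)$.

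A minor point: rounds $0,\dots,k-1$ mean $T=k-1$, not $T=k$; this is harmless.
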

\begin{proof}
Suppose  for the sake of contradiction that there is an algorithm $\cal A$ satisfying~\eqref{eq:lbkdelta}. We first describe our hard instance  $\cM = (V,d)$. The metric space $\cal M$ consists of $n = mk$  points, $m \gg k$. It contains $k$ different clusters of $m$ points each, such that the distance between any $2$ points belonging to different clusters is $\Delta \gg 1$. The distance between any $2$ points in the same cluster is $1$. \\

Take $T = k-1$ and $V_t$ to be the $(t+1)^{th}$ cluster for $0 \leq t \leq k-1$. The algorithm $\cal A$ must suffer a loss of at least $\frac{m \Delta}{m-k} \geq \Delta/2$ for each round $t$ after round zero. This is because it has not seen the $(t+1)^{th}$ cluster at time $t$ and hence cannot place any center there, thus suffering a connection cost of at least $m\Delta$, and the optimal solution for $V_t$ consists of $k$ points from the same cluster, with a total connection cost of $m-k$. Thus, the total loss suffered by the algorithm is at least $\frac{(k-1)\Delta}{2}$.

On the other hand, the optimal fixed solution in hindsight places $1$ center in each of the $k$ clusters, suffering a loss of $\frac{m-1}{m-k} \leq 2$ each round after round zero. Thus, the total loss suffered by the optimal solution is at most $2(k-1)$. Putting it together, we obtain

$$ \frac{(k-1)\Delta}{2}  \leq O(1) \cdot 2(k-1)  + o(k\Delta) $$
which gives a contradiction as $\Delta \to \infty$, as desired.
\end{proof}

\paragraph{Approximate Follow-the-Leader} So far we have given information theoretic lower bounds on the multiplicative loss and the 
 additive regret term for any online algorithm. Follow The Leader (FTL) is another commonly used algorithm -- at each time $t$, output $Y_t$ that minimizes the total cost $\sum_{t' < t} \cost_{t'}(Y_t)$. In our setting, computing the optimal cost (i.e., cost of an optimal $k$-median  instance) is NP-hard. Thus, a natural extension of this approach would be Approximate FTL: at each time $t$, output $Y_t$ that is a constant factor approximation to the objective function  $\sum_{t' =1}^{t} \rho(Y,V_{t'})$. We show strong lower bounds for this algorithm even when $k=1$: 

 \begin{theorem}
     \label{thm:lbftl}
     Approximate FTL cannot achieve a $o(\log \Delta/\log \log \Delta)$ multiplicative approximation guarantee with sublinear regret even when $k=1$, i.e.,  any bound as below is impossible:
      
\begin{align}
\label{eq:lbftl}
    \sum_{t=1}^T \rho(Y_t, V_t) \le o(\log \Delta/\log \log \Delta) \cdot \sum_{t=1}^T \rho(Y, V_t) + o(T) f(k,n,\Delta),
\end{align}
where $Y_t$ is the set of centers outputted by Approximate FTL at time $t$ and $Y$ is an arbitrary set of $k$ centers. 
 \end{theorem}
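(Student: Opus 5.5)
We will construct, for each target ratio, an adversarial sequence with $k=1$ on which Approximate FTL stays on a bad center for long stretches, while a fixed center in hindsight pays only $O(1)$ per round. The intended metric is a path with geometrically spaced points $p_0, p_1, \ldots, p_L$ at positions $0, B, B^2, \ldots, B^L$, where $B$ is a parameter, so that $\Delta = B^L$. Each $p_i$ will carry a small cluster of nearby copies, so that every instance has at least $k+1=2$ distinct points and $\OPT_t$ is comparable to the cluster diameter. The point of the geometric spacing is that a $1$-median instance concentrated near $p_i$ has cost ratio about $B^{i}$ for a center at $p_0$, and roughly the same for any center far from $p_i$.

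The phases are built so that approximate FTL lags behind. In phase $i$ we repeatedly present instances concentrated near $p_i$. The phase lasts until the cumulative objective $\sum_{t'} \rho(Y,V_{t'})$ of the point $p_i$ drops below a constant times that of the previous leader. Approximate FTL is only required to return a constant-factor minimizer, so the adversary may assume it outputs the worst allowable center, namely the previous leader $p_{i-1}$, for as long as that point remains within the constant factor $c$ of optimal. We then count rounds and costs:
\begin{itemize}
\item During phase $i$, the algorithm pays ratio about $B^{i}/B^{i-1}\cdot(\text{scale}) \approx B$ per round while it still sits at $p_{i-1}$.
\item The number of rounds before $p_i$ becomes a $c$-approximate leader is governed by how much cumulative cost $p_i$ incurred on earlier phases, which is roughly $\sum_{j<i} n_j B^{i}$ in ratio units.
\item Choosing the phase lengths $n_i$ to grow so that each phase dominates the previous ones, with the ratio $n_i/n_{i-1}$ set to about $B$, the algorithm incurs ratio $\approx B$ on a constant fraction of all rounds.
\end{itemize}

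The benchmark is a single fixed point $Y=p_L$ or $Y=p_0$, or better a point chosen to balance the phases. It should have average ratio bounded by roughly $L$, because each phase contributes $O(1)$ relative to its own length when the lengths grow geometrically. Taking $B \approx L$ with $\Delta = B^L$ gives $L \approx \log\Delta/\log\log\Delta$. The algorithm's average ratio is then $\Omega(B)\cdot(\text{benchmark average})/\text{poly}$, and balancing the two yields the claimed $\Omega(\log\Delta/\log\log\Delta)$ gap. Finally, we repeat the whole construction cyclically, or let $T\to\infty$ by scaling all $n_i$, so that the additive $o(T)f(k,n,\Delta)$ term is negligible.

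The main obstacle is the precise calibration in the balancing step. The phase lengths must make the previous leader remain $c$-approximate for a constant fraction of each phase, which forces the algorithm's high cost. At the same time, no fixed center may do well on every phase. This is the same tension as the $\log/\log\log$ tradeoff, and we would first try $n_i = (cB)^i$ and verify both inequalities by geometric-sum estimates.
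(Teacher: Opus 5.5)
\textbf{The gap is in the geometry, not the calibration.} On your family (clusters at $0,B,B^2,\ldots,B^L$, visited in increasing order), Approximate FTL is $O(c)$-competitive for every $L$, every $B\gg c$, and every choice of the $n_i$.

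In phase $i$, the stale leader is at distance at most $B^i$ from the current cluster. Every center outside cluster $i$ is at distance at least $B^i-B^{i-1}=(1-1/B)B^i$ from it. (Centers to the right of cluster $i$ are never $c$-approximate once $B\gg c$.) So if the benchmark $Y$ lies in cluster $b$, then on every phase $i\neq b$ it pays per round at least a $(1-1/B)$ fraction of what the algorithm pays.

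Now consider phase $b$, normalizing each instance's optimum to $\Theta(1)$. After $m$ rounds, the stale center's FTL objective is at least about $m(1-1/B)B^b$. The objective of $p_b$ is at most about $W_{b-1}B^b+O(m)$, where $W_{b-1}=\sum_{j<b}n_j$. Hence staleness is possible only while $m\lesssim cW_{b-1}$. That costs about $cW_{b-1}B^b$, which is at most $c(1+o(1))$ times what $Y$ already pays on phases $j<b$, since all of those are at distance $\approx B^b$ from $Y$.

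Summing, the algorithm pays at most $(1+c+o(1))$ times the benchmark plus $O(T)$, and the benchmark pays at least $T$. Your per-round estimates are also off. In phase $i$ the algorithm's ratio is $\approx B^i$, not $\approx B$, which is essentially the same as $p_0$'s. And no fixed center has average ratio $O(L)$, since each pays $\Omega(B^i)$ per round on every phase but its own.

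\textbf{Two ideas are missing, and the paper's proof rests on both.}

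First, the clusters must converge geometrically toward a hub that serves as the benchmark. Then the hub is a factor $\lambda$ closer to the current cluster than the previous cluster is. The paper uses a star whose $h$-th arm has length $\Delta_h=\Delta/\lambda^h$ and ends in a pair of points at distance $2$. In phase $h$ the root pays ratio $\Delta_h+1$ per round, while $v_{h-1}$ pays at least $\Delta_{h-1}=\lambda\Delta_h$. A line traversed inward, toward the benchmark, would serve equally well; traversing it outward is what breaks your construction.

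Second, the phases must be balanced rather than each dominating its predecessors. Taking $T_h=\lambda^hT_0$ gives $T_h\Delta_h=T_0\Delta$, so the root's total is at most $3\lambda T_0\Delta$. Moreover, every completed phase, other than the one on a candidate's own arm, adds at least $T_0\Delta$ to that candidate's FTL objective. This unavoidable baseline $(h-2)T_0\Delta$ swamps the extra $\lambda T_0\Delta$ from sitting at $v_{h-1}$ for all of phase $h$. So for $h\ge\lambda+3$ the stale center remains a $4$-approximation throughout the whole phase, not merely for a lag proportional to the earlier weight.

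The algorithm then pays at least $\lambda T_0\Delta$ on each of $\lambda-2$ phases, giving ratio $\Omega(\lambda)$. With $\Delta=\lambda^{2\lambda+1}$ this is $\lambda=\Theta(\log\Delta/\log\log\Delta)$, and letting $T_0\to\infty$ absorbs the additive term. Your parameter bookkeeping ($\Delta=B^L$, $B\approx L$) matches this relation. However, a mechanism based only on the post-switch lag can never cost more than a constant factor.
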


  \begin{proof}
     For ease of notation, assume that Approximate FTL uses a $4$-approximation algorithm for the (off-line) $k$-median problem -- the argument remains similar for any constant approximation algorithm. We shall also fix $k=1$. 

     We now describe the metric space $\cal M$. $\cal M$ consists of a star graph with $2 \lambda$ leaves, where $\lambda$ is a parameter that we shall fix later. At each of the leaves, we have two vertices that are at distance $2$ from each other (thus, $\cal M$ is given by the shortest path metric on a two level tree, where the root has $2\lambda$ children and every node at the first level has two children). 
     Let the children of the root $r$ be labeled $v_1, \ldots, v_{2\lambda}$. The two children of $v_i$ are labeled $v_i^1$ and $v_i^2$. The  edge $(r, v_i)$ has length $\Delta_i = \frac{\Delta}{\lambda^i}$, where $\Delta = \lambda^{2\lambda+1}$ (assume $\lambda \gg 1$).  

     At time $t=0$, we give the entire metric space as the instance. The subsequent input sequence is divided into $2\lambda$ phases. For each time $t$ in phase $h \in [2 \lambda]$, the input $V_t$ consists of the two points $v_h^1$ and $v_h^2$. Phase $h$ lasts for $T_h := \lambda^h T_0$ timesteps, where $T_0 \gg 1$. Note that $T_h\Delta_h = T_0 \Delta$ for all $h \in [2 \lambda]$. 

     The off-line solution $Y$ places a center at the root of the star. Thus, for any time $t$ in phase $h$, $\rho(Y,V_t) = \frac{2(\Delta_h+1)}{2} = \Delta_h+1.$ Thus, we see that 
     \begin{align}
     \label{eq:boundopt}
     \sum_{t=1}^T \rho(Y,V_t) = \sum_h T_h(\Delta_h + 1) = 2\lambda  T_0 \Delta + T \leq 3 \lambda T_0 \Delta,
     \end{align}
     because $T = \sum_h \lambda^h T_0 \leq 2 \lambda^{2 \lambda} T_0 \leq \Delta T_0$ and the number of phases is $2 \lambda$. 

     Now, we estimate the corresponding quantity for Approximate FTL. Consider a phase $h$. The $1$-median instance at a time $t$ in this phase consists of $T_{h'}$ points at each of $v_{h'}^1$ and $v_{h'}^2$ for $h' < h$ and a certain number of points at the children of $v_h$. Since $T_h \Delta_h = T_0 \Delta$ for all $h$, we claim that this 1-median instance has optimal cost at least $(h-2)T_0 \Delta$. Indeed, consider a $1$-median solution that places a center at the root or at $v_j$ or its children. Then all points at $v_{h'}^1$ and $v_{h'}^2$, where $h' \leq h-1, h' \neq h$, incur a cost at least $\frac{2T_{h'} \Delta_{h'}}{2} = T_0 \Delta$.  Now, consider the solution that places a center at $v_{h-1}$ (at any time during phase $h$). Again it is easy to see that its total cost is at most  $\frac{ 4(h-2) T_0 \Delta +  2(T_{h-1}) + 4\lambda T_h\Delta_h}{2} \leq T_0 \Delta (2h-3 + 2 \lambda)$, which is at most $4 ((h-2)T_0 \Delta)$ when $h \geq \lambda+3$, and hence is a $4$-approximation for $h \in [\lambda+3,2\lambda]$

     Thus, we can assume that for any phase $h \geq \lambda+3$, $\cal A$ outputs $v_{h-1}$ as the 2-approximate 1-median center. Now consider such a phase $h$: at each time $t$ during this phase, optimal cost for $V_t$ is $2$, but the algorithm incurs at least $2 \Delta_{h-1} = 2\lambda \Delta_h$. Thus, $\rho(Y_t,V_t) \geq \lambda \Delta_h$. Summing over all times $t$ in this phase, we see that 
     $$\sum_{t \text{\small{ in phase $h$}}} \rho(Y_t, V_t) \geq \lambda T_h \Delta_h = \lambda T_0 \Delta. $$

     Summing over all the phases $h \geq \lambda+3$ (note that there are $\lambda-2$ such phases), we get
      $$\sum_{t  \text{\small{ in phases $h \geq \lambda+3$}}} \rho(Y_t, V_t) \geq \frac{\lambda^2}{2} T_0 \Delta. $$

      The result now follows from the above and inequality~\eqref{eq:boundopt} (note that $\lambda = \theta(\log \Delta/\log \log \Delta)$ and $T_0$ is a parameter independent of $n,k,\Delta$).
     \end{proof}

 \eat{
 \begin{proof}
     For ease of notation, assume that Approximate FTL uses a $3$-approximation algorithm for the (off-line) $k$-median problem -- the argument remains similar for any constant approximation algorithm. We shall also fix $k=1$. 

     We now describe the metric space $\cal M$. $\cal M$ consists of a star graph with $2 \lambda$ leaves, where $\lambda$ is a parameter that we shall fix later. At each of the leaves, we have two points that are at distance $2$ from each other (thus, $\cal M$ is given by the shortest path metric on a two level tree, where the root has $2\lambda$ children and every node at the first level has two children). 
     Let the children of the root $r$ be labeled $v_1, \ldots, v_{2\lambda}$. The two children of $v_i$ are labeled $v_i^1$ and $v_i^2$. The  edge $(r, v_i)$ has length $\Delta_i = \frac{\Delta}{\lambda^i}$, where $\Delta = \lambda^{2\lambda+1}$ (assume $\lambda \gg 1$).  

     In round zero, we give the entire metric space as the instance. The following input sequence is divided into $2\lambda$ phases. For each time $t$ in phase $h \in [2 \lambda]$, the input $V_t$ consists of the two points $v_h^1$ and $v_h^2$. Phase $h$ lasts for $T_h := \lambda^h T_0$ timesteps, where $T_0 \gg 1$.  

     The off-line solution $Y$ places a center at the root of the star. Thus, for any time $t$ in phase $h$, $\rho(Y,V_t) = \frac{2(\Delta_h+1)}{2} = \Delta_h+1.$ Thus, we see that 
     \begin{align}
     \label{eq:boundopt}
     \sum_{t=1}^T \rho(Y,V_t) = \sum_h T_h(\Delta_h + 1) = T_0 \Delta + T \leq 2T_0 \Delta,
     \end{align}
     because $T = \sum_h \lambda^h T_0 \leq 2 \lambda^{2 \lambda} T_0 \leq \Delta T_0$. 

     Now, we estimate the corresponding quantity for Approximate FTL. Consider a phase $h \geq \lambda + 1$. The $1$-median instance at a time $t$ in this phase consists of $T_{h'}$ points at each of $v_{h'}^1$ and $v_{h'}^2$ for $h' < h$ and a certain number of points at the children of $v_h$, say $2a$. Since $T_h \Delta_h = T_0 \Delta$ for all $h$, we claim that this 1-median instance has optimal cost at least $(h-1)T_0 \Delta + 2a \Delta$. Indeed, consider a $1$-median solution that places a center at the root or at $v_h$ or its children. Then all points at $v_{h'}^1$ and $v_{h'}^2$, where $h' \leq h-1$, incur cost at least $\frac{2T_h \Delta_h}{2} = T_0 \Delta$. Now, consider the solution that places a center at $v_{h-1}$ (at any time during phase $h$). Again it is easy to see that its total cost is at most  $\frac{T_{h-1} + 4(h-1) T_0 \Delta}{2} \leq  3(h-1)T_0 \Delta$,  and hence is a 3-approximation for $h \geq 2$. 

     Thus, we can assume that for any phase $h \geq 2$, $\cal A$ outputs $v_{h-1}$ as the $3$-approximate 1-median center. Now consider the last phase $h$: at each time $t$ during this phase, optimal cost for $V_t$ is $2$, but the algorithm incurs at least $2 \Delta_{h-1} = 2\lambda \Delta_h$. Thus, $\rho(Y_t,V_t) \geq \lambda \Delta_h$. Summing over all times $t$ in this phase, we see that 
     $$\sum_t \rho(Y_t, V_t) \geq \lambda T_h \Delta_h = \lambda T_0 \Delta. $$
     The result now follows from the above and inequality~\eqref{eq:boundopt} (note that $\lambda = \theta(\log \Delta/\log \log \Delta)$).
     \end{proof}

     }

\newpage{}

\end{document}